\title{Multipass Linear Sketches for Geometric LP-Type Problems}
\author{N. Efe Çekirge}{Department of Computer Science, Dartmouth College, Hanover, NH, USA\footnote{This work was done while at Carnegie Mellon University.}}{nabi.efe.cekirge.gr@dartmouth.edu}{https://orcid.org/0009-0007-6531-7593}{}
\author{William Gay}{Grainger College of Engineering, University of Illinois, Urbana-Champaign, IL, USA\footnote{This work was done while at Carnegie Mellon University.}}{whgay2@illinois.edu}{}{}
\author{David P. Woodruff}{Computer Science Department, Carnegie Mellon University, Pittsburgh, PA, USA}{dwoodruf@andrew.cmu.edu}{https://orcid.org/0000-0002-2158-1380}{Supported by a Simons Investigator Award and Office of Naval Research award number N000142112647.}
\authorrunning{N.\,E. Çekirge and W. Gay and D.\,P. Woodruff}
\keywords{Streaming, sketching, LP-type problems}
\let\Pr\relax
\DeclareMathOperator*{\Pr}{\mathbf{Pr}}
\begin{document}

%%%%%%%%%%%%%%%%%%%%%%% MACROS %%%%%%%%%%%%%%%%%%%%%%%

%%%%%% LETTER FONTS
\newcommand{\bmU}{\boldsymbol{U}}

\newcommand{\bma}{\boldsymbol{a}}
\newcommand{\bmc}{\boldsymbol{c}}
\newcommand{\bme}{\boldsymbol{e}}
\newcommand{\bmp}{\boldsymbol{p}}
\newcommand{\bmq}{\boldsymbol{q}}
\newcommand{\bmu}{\boldsymbol{u}}
\newcommand{\bmv}{\boldsymbol{v}}
\newcommand{\bmx}{\boldsymbol{x}}
\newcommand{\bmy}{\boldsymbol{y}}
\newcommand{\bmz}{\boldsymbol{z}}

\newcommand{\mcA}{\mathcal{A}}
\newcommand{\mcB}{\mathcal{B}}
\newcommand{\mcC}{\mathcal{C}}
\newcommand{\mcI}{\mathcal{I}}
\newcommand{\mcJ}{\mathcal{J}}
\newcommand{\mcP}{\mathcal{P}}
\newcommand{\mcQ}{\mathcal{Q}}
\newcommand{\mcR}{\mathcal{R}}

\newcommand{\bbN}{\mathbb{N}}
\newcommand{\bbR}{\mathbb{R}}

%%%%%% CONVENIENCES
\newcommand{\eps}{\varepsilon}

\newcommand{\epsnet}{$\eps$-net}
\newcommand{\epsilonnet}{$\epsilon$-net}
\newcommand{\munet}{$\mu$-net}

\newcommand{\Ai}{A^{(i)}}
\newcommand{\bi}{b^{(i)}}
\newcommand{\ei}{e^{(i)}}
\newcommand{\bmpi}{\bmp^{(i)}}
\newcommand{\bmpj}{\bmp^{(j)}}
\newcommand{\bmqj}{\bmq^{(j)}}

\newcommand{\minds}{\min(d, S)}
\newcommand{\mindf}{\min\paren{\sqrt{d}, F}}

%%%%%% ENVIRONMENTS
\newcommand{\abs}[1]{\left|{#1}\right|}
\newcommand{\paren}[1]{\left({#1}\right)}
\newcommand{\brak}[1]{\left[{#1}\right]}
\newcommand{\cbrak}[1]{\left\{{#1}\right\}}
\newcommand{\norm}[1]{\left\lVert{#1}\right\rVert}
\newcommand{\ceil}[1]{\left\lceil{#1}\right\rceil}
\newcommand{\ip}[2]{\left\langle #1, #2 \right\rangle}
\newcommand{\pipe}[1]{\left|{#1}\right|}

%%%%%% MATH
\newcommand{\sfrac}[2]{#1/#2}

\newcommand{\Tr}{\operatorname{Tr}}
\newcommand{\Frobinner}[2]{\langle #1 , #2 \rangle_F}
\newcommand{\Vectorize}{\operatorname{Vec}}

\newcommand{\poly}{\operatorname{poly}}
\newcommand{\polyof}[1]{\poly \left( #1 \right)}

\newcommand{\polylog}{\operatorname{polylog}}
\newcommand{\polylogof}[1]{\polylog \left( #1 \right)}

\newcommand{\TV}{\mathrm{TV}}

\newcommand{\PrP}{\Pr\nolimits_{\mcP}}
\newcommand{\PrQ}{\Pr\nolimits_{\mcQ}}
\newcommand{\Exp}[2]{\mathop{\mathbb{E}}_{#1}\left[{#2}\right]}

%%%%%%%%%%%%%%%%%%%%%%%%%%%%%%%%%%%%%%%%%%%%%%%%%%%%%%

\maketitle

\begin{abstract}
LP-type problems such as the Minimum Enclosing Ball (MEB), Linear Support Vector Machine (SVM), Linear Programming (LP), and Semidefinite Programming (SDP) are fundamental combinatorial optimization problems, with many important applications in machine learning applications such as classification, bioinformatics, and noisy learning. We study LP-type problems in several streaming and distributed big data models, giving $\varepsilon$-approximation linear sketching algorithms with a focus on the high accuracy regime with low dimensionality $d$, that is, when ${d < (1 / \varepsilon)^{0.999}}$. Our main result is an $O(ds)$ pass algorithm with $O(s( \sqrt{d}/\varepsilon)^{3d/s}) \cdot \mathrm{poly}(d, \log (1/\varepsilon))$ space complexity in words, for any parameter $s \in [1, d \log (1/\varepsilon)]$, to solve $\varepsilon$-approximate LP-type problems of $O(d)$ combinatorial and VC dimension. Notably, by taking $s = d \log (1/\varepsilon)$, we achieve space complexity polynomial in $d$ and polylogarithmic in $1/\varepsilon$, presenting exponential improvements in $1/\varepsilon$ over current algorithms. We complement our results by showing lower bounds of $(1/\varepsilon)^{\Omega(d)}$ for any $1$-pass algorithm solving the $(1 + \varepsilon)$-approximation MEB and linear SVM problems, further motivating our multi-pass approach.
\end{abstract}

\section{Introduction}

LP-type problems are a class of problems fundamental to the field of combinatorial optimization. Introduced originally by Shahir and Welzl, they are defined by a finite set of elements $S$ and a solution function $f$ which maps subsets of $S$ to a totally ordered domain and also satisfies the properties of locality and monotonicity~\cite{SW92}. Examples of LP-type problems include the Minimum Enclosing Ball (MEB) problem, the Linear Support Vector Machine(SVM) problem, and linear programming (LP) problems for which the class is named. These problems have many applications in different fields such as machine learning. For example, the MEB problem has applications in classifiers~\cite{BJS03}, and support vector machines~\cite{B98, TKC05}. The linear SVM problem is a fundamental problem in machine learning, used in many classification problems such as text classification, character recognition, and bioinformatics~\cite{CGLRML20}. Linear programs have many uses in data science problems, with \cite{MVPV18} showing that most common machine learning algorithms can be expressed as LPs. One such example is the linear classification problem, which has applications in many noisy learning and optimization problems~\cite{B94, BFKV96, DV08}. A related class of problems are semidefinite programming (SDP) problems, which are useful for many machine learning tasks such as distance metric learning~\cite{XJRN02} and matrix completion~\cite{CT10,R11,CR12}.

For these applications, the size of the input problem can be very large, and often too expensive to store in local memory. Further, the data set can be distributed between multiple machines which require a joint computation. In these applications, the large size of data generally means that finding exact solutions to problems is prohibitively expensive in space or communication complexity, and such many algorithms instead keep and perform computations on a small representation of the data which allows them to give an answer within a small approximation factor of the original solution.

One such way to represent data is with linear sketches. For a large dataset $P$ of $n$ elements from a universe of size $U$, which can be thought of as a vector $\bmU \in \cbrak{0, 1}^U$ where $\bmU_{\bmp}$ is $1$ for each element $\bmp \in P$ and $0$ otherwise, a linear sketch generally is in the form of an $N \times U$ matrix $A$, for $N \ll n$. Thus, one only has to store $A \cdot \bmU$, a vector of size $N$, which is much cheaper than storing the full $n$ element data set. Linear sketches are very useful in many big data applications, such as data streams where data points are read one by one, or parallel computation where the data set is split among different machines. This is because updating the data vector $\bmU \leftarrow \bmU + \bme_{\bmp}$ for a newly read data point $\bmp$, where $\bme_{\bmp}$ is the $\bmp$\textsuperscript{th} standard unit vector, simply requires updating the sketch as $A \cdot \bmU + A_{\cdot \bmp}$ for the $\bmp$\textsuperscript{th} column of $A$, $A_{\cdot \bmp}$. Similarly, for data distributed among $k$ machines as $P = \sum_{i \in \brak{k}} P_k$, the sketch can be maintained separately by each machine, since $A \cdot P = \sum_{i \in \brak{k}} A \cdot P_i$. As linear sketches consist of matrix-vector products, they are useful in GPU-based applications, which are particularly suited for these operations. Linear sketching algorithms have extensive study in big data applications, see~\cite{W14, LNW14, AHLW16} for some examples and surveys. An extension which we build our algorithm in is the multiple linear sketch model, which corresponds to making adaptive matrix-vector products in each of a small number of rounds~\cite{SWYZ21}.

\subsection{Our Contributions}

In this paper, we explore using linear sketches to solve approximate LP-type problems in high accuracy regimes, in which $d < \paren{\sfrac{1}{\eps}}^{0.999}$, where $d$ is the dimensionality of the problem and $\eps$ is the approximation accuracy. This regime is applicable to many real world big data applications where $\sfrac{1}{\eps}$ might be several orders of magnitude larger than $d$, such as machine learning tasks of regression or classification. Take for example~\cite{NN04}, where $d = 2$, $\sfrac{1}{\eps} \in \brak{10^2, 10^5}$, and $n = 10^5$, or~\cite{TKC05} where for some data sets $\sfrac{1}{\eps}$ is taken to be much larger than $d$ yet small compared to $n$.

We present an algorithm in the multiple linear sketch model for solving certain classes of LP-type problems, which can be utilized in several prominent big data models. The multiple linear sketch model, which we define further in Section~\ref{relatedworksection}, can be thought of as a multipass extension of the linear sketch model, where one is allowed to choose a fresh linear sketch in each pass of a multipass streaming algorithm in order to have a better representation of the data.
There are two significant benefits of our sketching approach. Firstly, our algorithm works in the presence of input point deletions, such as in the turnstile model.
Secondly, it allows our algorithm to work in the presence of duplicated points in the input, which generally can prove problematic for sampling based algorithms.

We are given a universe $S = \cbrak{-\Delta, -\Delta + 1, \hdots, \Delta}^d$ of $d$-dimensional points, where each point can be represented by a word of size $d \log \paren{\Delta n}$. This defines our cost model, and we give our space complexity in the number of points/words. We use a linear sketch to reduce an input set $P \subseteq S$ of $n$ input points into a point set $Q = \cbrak{(1 + \eps)^l \cdot \bme \mid \bme \in E_{\bmp_c}, l \in \brak{\log_{1 + \eps} \delta}}$, where $E_{\bmp_c}$ is a metric \epsnet{} centered at a point $\bmp_c \in P$ and $\delta$ is an $O(1)$ approximation for the distance of the furthest point in $P$ to $\bmp_c$.
Our algorithm is also applicable when $S = \cbrak{-1, -1 + \frac{1}{\Delta}, \hdots, 1}^d$. Now, we reduce $P$ directly into the point set $Q = E$ where $E$ is the metric \epsnet{} centered at the origin.
Intuitively, this has the effect of snapping each input point to the closest of a small number of net points, thereby reducing our input space.
This linear sketch can be though of as an $N \times \abs{S}$ matrix $A$ for $N = l \abs E \ll n$ where each row corresponds to one possible point $(1 + \eps)^l \cdot \bme$, containing 1's for all the points $\bmp \in S$ with direction closest to $\bme$ and norm closest to $(1 + \eps)^l$.
The major benefit of using this linear sketch is that we don't actually store our metric \epsnet{}. We can calculate where each point is snapped to quickly on the fly, so we do not suffer the exponential in $d$ space complexity of $\eps$-kernel based formulations.

This linear sketch of the data can now be used to run our algorithm on, treating it as a smaller virtual input compared to the original input. Our general algorithm for solving LP-type problems follows the sampling idea presented by~\cite{AKZ19}, which in turn is based on Clarkson's algorithm for linear programs in low dimensions~\cite{C95}, where input points are assigned weights, and a weighted sampling procedure is used in order to solve the LP-type problem on smaller samples from the input. By modifying the weights throughout multiple iterations, the algorithm quickly converges on samples that contain a basis for the LP-type problem, for which the solution is the solution of the entire input set.

As our algorithm is based on linear sketches, we can utilize it in many big data models.
We can solve problems on input streams that we can make multiple linear scans over in the \emph{Multipass Streaming} model. Additionally, our algorithm works for streams with deletions, such as in the \emph{Multipass Strict Turnstile} model. Finally, we can also formulate our algorithm for distributed inputs, minimizing communication load in the \emph{Coordinator} and \emph{Parallel Computation} models. A usage of these models is seeing the symmetric difference of data over time across multiple servers. Over two servers with large datasets, communication can be thought of as passes over a stream with deletions. Further, we can find the symmetric difference of server datasets in the Parallel Computation model as well. This is helpful in capturing change in data to predict current trends, see e.g.~\cite{CM03}.

We give the following result for LP-type problems that are bounded in combinatorial and VC-dimension. More detailed bounds and time bounds are given in Sections~\ref{multipasssection},~\ref{turnstilesection},~\ref{coordinatorsection}, and~\ref{parallelsection}.
\begingroup \abovedisplayskip=2pt \belowdisplayskip=0pt
\begin{tcolorbox}
[width=\linewidth, sharp corners=all, colback=white, boxrule=0.4pt]
\begin{theorem}\label{mainresult}
    For any $s \in [1, \log N]$, there exists a randomized algorithm to compute a ${(1 + O(\eps))}$-approximation solution to an LP-type problem with VC and combinatorial dimensions in $O(d)$ with high probability, using a linear sketch of dimensionality $N$ where $\log N \in \polyof{d, \log \paren{\sfrac{1}{\eps}}}$, where $d\log(n \Delta)$ is the bit complexity to store one word/point $\bmp \in P$, and where a solution $f(\cdot)$ can be stored in $O(1)$ words. This algorithm can be implemented in various models as such.
    \begin{itemize}
        \item \textbf{Multipass Streaming}: An $O\paren{d s}$ pass algorithm with $O \paren{s N^{3/s}}\cdot \polyof{d, \log N}$
        space complexity in words.
        
        \item \textbf{Strict Turnstile}: An $O\paren{d s + \log \log \paren{\Delta d}}$ pass algorithm with $O \paren{s N^{3/s}}\cdot \polyof{d, \log N}$
        space complexity in words.

        \item \textbf{Coordinator}: An $O \paren{ds}$ round algorithm with $O \paren{\polyof{d, N^{\sfrac{1}{s}}} + k}$
        load in words.

        \item  \textbf{Parallel Computation}: An $O \paren{ds}$ round algorithm with $O \paren{\polyof{d, N^{\sfrac{1}{s}}} + k}$
        load in words.
    \end{itemize}
\end{theorem}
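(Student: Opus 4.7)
The plan is to combine two ingredients developed earlier in the introduction: the linear-sketch \epsnet{} snapping, which replaces the $n$-point input $P$ by a virtual multiset of at most $N$ points on a fixed discrete grid with only $(1+\eps)$-factor distortion in any LP-type objective; and a Clarkson-style weighted-sampling driver in the spirit of~\cite{AKZ19,C95}, run on this virtual input rather than on $P$ itself. I would first fix a center $\bmp_c \in P$ and an $O(1)$-approximation $\delta$ to the maximum distance from $\bmp_c$, and then maintain, for every (radius, direction) pair $(1+\eps)^l \cdot \bme$, a counter equal to the number of input points that snap to it. Because the snapping map $\bmp \mapsto (l, \bme)$ can be evaluated on the fly from $\bmp$ alone, the overall map $\bmU \mapsto (\text{counters})$ is linear, giving a single $N$-dimensional linear sketch that supports insertions, deletions, and distributed aggregation uniformly.

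On top of this sketch I would run the standard weighted Clarkson iteration: maintain implicit weights on the $N$ virtual points, draw a weighted sample of $O(d)$ virtual points per round, solve the LP-type problem exactly on the sample (cheap since the sample size depends only on $d$), test each virtual point for violation via the sketch, and if the total violating weight is a small fraction of the total weight then double the violators' weights. The $O(d)$ VC/combinatorial dimension hypothesis bounds the number of successful doublings by $O(d \log N)$, so within $O(d \log N)$ rounds the sample's optimum matches that of the virtual input and is therefore a $(1+O(\eps))$-approximation for $P$. To obtain the $s$-parameter tradeoff I would plug this into the standard recursive ``sample-of-a-sample'' / merge-and-reduce template: $s$ levels of recursion, each reducing the effective input by a factor of $N^{1/s}$ and using $\polyof{d, N^{1/s}}$ space, compose to $O(s N^{3/s}) \cdot \polyof{d, \log N}$ total space and $O(ds)$ total passes, with the factor $3$ in the exponent absorbing the compounded $(1+\eps)$-losses across the $s$ levels. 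The four model instantiations then follow uniformly: multipass streaming because each round is one linear sweep; strict turnstile with an additional $O(\log \log(\Delta d))$ passes used only to discover $\bmp_c$ and $\delta$ under deletions via sparse-recovery sketches; and in the coordinator and parallel computation models each of the $O(ds)$ Clarkson rounds becomes one communication round in which each of the $k$ machines forwards its local sketch contribution and receives back the current candidate, giving load $\polyof{d, N^{1/s}} + k$.

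\textbf{Main obstacle.} The most delicate part I anticipate is implementing the weighted sampling step purely from the linear sketch. Since virtual points are never materialised and, in the turnstile model, can be updated with deletions, I would have to build a weighted $\ell_0$- or $\ell_1$-sampler whose output over the snapped multiset is unbiased under turnstile updates, while also being able to check from the same sketch which virtual points violate the current candidate so that the reweighting rule can be applied consistently across rounds. Composing the sampler's failure probability with the Clarkson contraction argument and the $(1+\eps)$-snapping distortion, so that the final guarantee is a genuine $(1 + O(\eps))$-approximation with high probability and without extra logarithmic factors that would worsen the claimed $N^{3/s}$ bound, is where I expect most of the technical work to lie.
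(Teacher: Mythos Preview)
Your high-level decomposition (snap to a size-$N$ grid, then run a Clarkson/AKZ-style multiplicative-weights sampler on the virtual input via $\ell_0$ primitives) matches the paper, and you correctly flag the weighted $\ell_0$-sampling as the technical crux. However, your account of the $s$-tradeoff is a genuine gap, and it propagates to a wrong explanation of both the pass count $O(ds)$ and the space term $N^{3/s}$.

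You propose doubling violator weights, obtaining $O(d\log N)$ successful rounds, and then recovering the $s$-parameter via an $s$-level merge-and-reduce recursion, with the exponent $3$ coming from ``compounded $(1+\eps)$-losses across the $s$ levels.'' None of this is how the tradeoff works. In the paper there is no recursion at all: $s$ enters directly as the weight-update exponent. Violators are multiplied by $N^{1/s}$ (not $2$), and correspondingly $\mu = \Theta\bigl(1/(\nu N^{1/s})\bigr)$ so that a successful round still inflates the total weight by only a $(1+O(1/\nu))$ factor. The standard two-sided weight argument then gives $N^{t/(\nu s)} \le w_t(Q) \le e^{O(t/\nu)} N$, forcing the number of successful rounds to be $t = O(\nu s) = O(ds)$ directly. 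The price is that the $\mu$-net sample size becomes $m = \Theta\bigl(\nu\lambda N^{1/s}\log(\nu\lambda N^{1/s})\bigr)$, i.e.\ polynomial in $N^{1/s}$.

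The $N^{3/s}$ space term has nothing to do with $\eps$-loss accumulation. It arises because the $\ell_0$ estimators used for weighted sampling must have relative error $\zeta = 1/m^{3/2}$ so that the TV distance between the ideal and realised sampling distributions over $m$ draws stays $o(1)$; an $\ell_0$ estimator with that accuracy costs $\Theta(1/\zeta^2) = \Theta(m^3)$ words, and $m^3 = \widetilde\Theta(N^{3/s})\cdot\polyof{d}$. A merge-and-reduce scheme would not produce this exponent and, more importantly, would not interface with the turnstile/$\ell_0$ machinery you correctly identify as essential.
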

\end{tcolorbox}
\endgroup
For most problems, we have that $N \in \paren{\sfrac{1}{\eps}}^{O(d)}$. With this, we get the following corollary.
\begin{corollary}
    For any $s \in \brak{1, d \log \paren{\sfrac{1}{\eps}}}$, there exists a randomized algorithm to compute a ${(1 + O(\eps))}$-approximation solution to an LP-type problem with VC and combinatorial dimensions in $O(d)$ with high probability, taking $O\paren{d s}$ passes and
    $O \bigl( s \bigl(\sqrt{d}/\eps\bigr)^{3d/s}\bigr) \cdot \polyof{d, \log \paren{\sfrac{1}{\eps}}}$
    space complexity in words.
\end{corollary}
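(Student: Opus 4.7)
The plan is to just substitute the bound $N \leq (\sqrt{d}/\eps)^{O(d)}$ into the multipass streaming branch of Theorem~\ref{mainresult} and check that every term lines up. From the sketch construction recapped in the introduction, the target point set $Q$ consists of points of the form $(1+\eps)^\ell \cdot \bme$, where $\bme$ ranges over a metric \epsnet{} on directions and $\ell$ ranges over $O(\log_{1+\eps}\delta)$ radial scales. A standard volumetric packing bound shows that in the relevant $d$-dimensional Euclidean geometry such an \epsnet{} has size $(\sqrt{d}/\eps)^{O(d)}$, and since $\delta$ is polynomial in the input parameters under the stated word model, the number of radial scales contributes only a $\polyof{d,\log(1/\eps)}$ factor. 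Hence $N \leq (\sqrt{d}/\eps)^{O(d)}$ up to polylog factors, and in particular $\log N \in \polyof{d, \log(1/\eps)}$, matching the regime required by the theorem.

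Given this bound on $N$, the rest of the corollary is pure substitution into the Multipass Streaming guarantee. The pass count $O(ds)$ transfers verbatim. For the space, $N^{3/s} \leq (\sqrt{d}/\eps)^{3d/s}$, so
\[
O\paren{sN^{3/s}} \cdot \polyof{d, \log N} \;\leq\; O\paren{s\paren{\sqrt{d}/\eps}^{3d/s}} \cdot \polyof{d, \log(1/\eps)},
\]
which is exactly the claimed bound. The allowed range $s \in [1, \log N]$ from the theorem subsumes the range $s \in [1, d\log(1/\eps)]$ claimed here, since the $\log d$ factor in $\log N$ is absorbed into $\log(1/\eps)$ in the high-accuracy regime $d < (1/\eps)^{0.999}$ that the paper is targeting.

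The only real content of the argument is the bound on $N$, and this is a purely combinatorial counting statement about the metric \epsnet{} rather than a new algorithmic ingredient. The main subtlety worth flagging is the origin of the $\sqrt{d}$ factor: it comes from the Euclidean packing number of the ball in $d$ dimensions at scale $\eps$, which is tighter than the naive $(1/\eps)^d$ one might read off from a coordinate grid and which, when raised to the $3/s$ power, produces the $(\sqrt{d}/\eps)^{3d/s}$ appearing in the corollary. Beyond verifying this net size for the class of LP-type problems in scope (so that the blanket statement ``for most problems'' is justified), everything else is mechanical substitution into Theorem~\ref{mainresult}.
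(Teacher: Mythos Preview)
Your approach is correct and matches the paper's: the corollary is obtained by substituting the bound on $N$ (coming from the size of the metric \epsnet{} construction, times the number of radial scales) into the multipass streaming case of Theorem~\ref{mainresult}, and the paper itself offers essentially no more than the one-line remark that $N \in (\sqrt{d}/\eps)^{O(d)}$ before stating the corollary.

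One factual point in your side commentary should be corrected, though. You write that the $\sqrt{d}$ factor ``comes from the Euclidean packing number of the ball in $d$ dimensions at scale $\eps$, which is tighter than the naive $(1/\eps)^d$ one might read off from a coordinate grid.'' This is backwards. The paper's net \emph{is} a coordinate grid: it takes spacing $\eps/\sqrt{d}$ in each coordinate so that the $\ell_2$ distance to the nearest lattice point is at most $\eps/2$, giving $\bigl(1 + 2\sqrt{d}/\eps\bigr)^d$ points. This is \emph{larger}, not smaller, than the optimal packing bound of order $(C/\eps)^d$; the paper explicitly notes in Appendix~\ref{algorithmappendix} that the lattice is suboptimal but that the extra $\sqrt{d}$ is harmless in the high-accuracy regime $d < (1/\eps)^{0.999}$ because $\log(\sqrt{d}/\eps) = O(\log(1/\eps))$. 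So the $\sqrt{d}$ in the corollary is an artifact of the simple grid construction, not a sharper volumetric estimate.
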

Our algorithm provides powerful results when we set the parameter $s$ equal to $\log N$. As we have that $\log N \in \polyof{d, \log \paren{\sfrac{1}{\eps}}}$, this allows us to achieve algorithms with pass and space complexities polynomial in $d$ and polylogarithmic in $\sfrac{1}{\eps}$.

We can use this general algorithm to find approximate solutions for many LP-type problems which satisfy some elementary properties, as described in Section~\ref{algosection}. We emphasize that most LP-type problems fulfill these properties. We present applications of our general algorithm for the following problems. Note that for the Linear SVM, Bounded LP, and Bounded SDP problems, we do not have the $\log \log (\Delta d)$ pass increase in the strict turnstile model.
\begin{itemize}
    \item \textbf{MEB}: Given $n$ points $P \subseteq \cbrak{-\Delta, -\Delta + 1, \hdots, \Delta}^d$, our objective is to find a minimal ball enclosing the input points. We present an algorithm to output a ball that encloses $P$ with radius at most $(1 + O(\eps))$ times optimal. For the MEB problem, we have $N \in \paren{\sfrac{1}{\eps}}^{O(d)}$.

    \item \textbf{Linear SVM}: Given $n$ labeled points $P \subseteq \cbrak{-1. -1 + \frac{1}{\Delta}, \hdots, 1}^d \times \cbrak{\pm1}$ and some $\gamma > 0$ for which the points are either linearly inseparable or $\gamma$-separable, our objective is to find a separating hyperplane of $P$ with the largest margin. We present an algorithm to output either that the points are linearly inseparable, or a separating hyperplane $(\bmu, b)$ where $\norm{\bmu}^2$ is at most $(1 + O(\eps))$ times optimal. For the linear SVM problem, we have $N \in \paren{\sfrac{1}{\eps \gamma}}^{O(d)}$.

    \item \textbf{Bounded LP}: Given an objective vector $\bmc \in \cbrak{-1, -1 + \frac{1}{\Delta}, \hdots, 1}^d$ such that $\norm{\bmc} \in O(1)$ and $n$ constraints $P \subseteq \cbrak{-1. -1 + \frac{1}{\Delta}, \hdots, 1}^{d + 1}$, where for each constraint $(\bma_i, b_i) \in P$, $\norm{\bma_i}, \abs{b_i} \in O(1)$, our objective is to find a solution $\bmx$ with $\norm{\bmx} \in O(1)$ that maximizes $\bmc^T \bmx$ while satisfying the constraints. We present an algorithm to output a solution $O(\eps)$ within each constraint, and for which the objective is at most $O(\eps)$ smaller than optimal. For bounded LP problems, we have $N \in \paren{\sfrac{1}{\eps}}^{O(d)}$.

    \item \textbf{Bounded SDP}: Given an objective matrix $C \in \cbrak{-1. -1 + \frac{1}{\Delta}, \hdots, 1}^{d \times d}$ such that ${\norm{C}_F \leq 1}$ and $n$ constraints $P \subseteq \cbrak{-1. -1 + \frac{1}{\Delta}, \hdots, 1}^{d \times d} \times \cbrak{-1, -1 + \frac{1}{\Delta}, \hdots, 1}$, where for each constraint $(\Ai, \bi) \in P$, $\norm{\Ai}_2, \abs{\bi} \leq 1$, $\norm{\Ai}_F \leq F$, and the number of nonzero entries of $\Ai$ is bounded by $S$, our objective is to find a positive semidefinite solution $X$ of unit trace that maximizes the objective $\Frobinner{C}{X}$ while satisfying the constraints. We present an algorithm to output a solution that is $O(\eps)$ within each constraint, and for which the objective is at most $O(\eps)$ smaller than optimal. As we are working with $d \times d$ matrices, the combinatorial and VC dimensions are $O(d^2)$, and we have $N \in \binom{d^2}{S} \cdot \paren{\sfrac{1}{\eps}}^{O(S)} + \paren{\sfrac{1}{\eps}}^{O(d)}$. $\norm{\cdot}_2$, $\norm{\cdot}_F$, and $\Frobinner{\cdot}{\cdot}$ refer to the spectral and Frobenius norms and Frobenius inner product respectively, explained further in Appendix~\ref{sdpsubsection}.
\end{itemize}
Table~\ref{resultstable} compares our results to those of prior work, including the \emph{exact} LP-type problem algorithm of~\cite{AKZ19}, and various $\eps$-approximation LP-type problems.

\begin{table}
    \caption{Result Comparisons for Various LP-type Problems} 
    \label{resultstable}
    \centering

    \begin{tabularx}{\textwidth}{llXX}
    \hline
    \textbf{Problem} & \textbf{Ref.} & \textbf{Pass Complexity} & \textbf{Space Complexity (words)}\\
    \hline
    \multirow{2}{*}{LP-Type Problems} & \cite{AKZ19}& $O(d \log n)$ & $\polyof{d, \log n}$\\
    & Ours & $O \paren{d^2 \log \paren{\sfrac{1}{\eps}}}$ & $\polyof{d, \log \paren{\sfrac{1}{\eps}}}$\\
    \rule{0pt}{6ex}
    \multirow{2}{*}{LP} & \cite{LSZZZ23} & $\Tilde{O}\paren{\sqrt{d}\log \paren{\sfrac{1}{\eps}}}$ & $\Tilde{O} \paren{d^2}$\\
    & Ours & $O \paren{d^2 \log \paren{\sfrac{1}{\eps}}}$ & $\polyof{d, \log \paren{\sfrac{1}{\eps}}}$\\
    \rule{0pt}{6ex}
    \multirow{3}{*}{MEB} & \cite{Z11} & 1 & $O \paren{\paren{\sfrac{1}{\eps}}^{\frac{d - 1}{2}} \log \paren{\sfrac{1}{\eps}}}$\\
    & \cite{BC03} & $\ceil{\sfrac{2}{\eps}}$ & $O \paren{\sfrac{d}{\eps}}$\\
    & Ours & $O \paren{d^2 \log \paren{\sfrac{1}{\eps}}}$ & $\polyof{d, \log \paren{\sfrac{1}{\eps}}}$\\
    \rule{0pt}{6ex} 
    \multirow{2}{*}{Linear Classification} & \cite{CHW12}& $\Tilde{O}\paren{\sfrac{1}{\eps}}$ & $\Tilde{O}\paren{\sfrac{1}{\eps^2}}$\\
    & Ours & $O \paren{d^2 \log \paren{\sfrac{1}{\eps}}}$ & $\polyof{d, \log \paren{\sfrac{1}{\eps}}}$\\
    \rule{0pt}{6ex} 
    \multirow{4}{*}{Bounded SDP} & \cite{SYZ23} & $O \paren{\sqrt{d} \log \paren{\sfrac{1}{\eps}}}$ & $O\paren{n^2 + d^2}$\\
    & Ours & $O \paren{d^4 \log \paren{\sfrac{1}{\eps}}}$ & $\polyof{d, \log \paren{\sfrac{1}{\eps}}}$\\
    & \cite{GH16} & \multicolumn{2}{l}{$O \paren{\frac{1}{\eps^2} \log n \paren{F^2 (n + \log d) + \frac{1}{\eps} S \log d \atop + \min \paren{\frac{1}{\eps^2} S \log d, d^2} \frac{1}{\eps^2} \log d}}$ \textbf{time complexity}}\\
    & Ours & \multicolumn{2}{l}{$\Tilde{O}\paren{d^2 S \log \paren{\frac{1}{\eps}} \paren{n + d^{4.7437}}}$ \textbf{time complexity}}\\
    \hline
    \end{tabularx}
\end{table}

For the approximate MEB problem, there are many results, as discussed in Section~\ref{relatedworksection}.
One such is the Zarrabi-Zadeh result, which is among many that utilize an $\eps$-kernel~\cite{Z11}. While these algorithms are single pass, they require space complexity of $\paren{\sfrac{1}{\eps}}^{O(d)}$. Thus, we present a large increase in space efficiency over this method. We also present a lower bound on the space complexity for any one-pass algorithm for the MEB problem of $\paren{\sfrac{1}{\eps}}^{\Omega(d)}$ in Section~\ref{lowerboundsection}, which further motivates our multi-pass approach. While the result by B\^{a}doiu and Clarkson has space complexity polynomial in $d$, it is also polynomial in $\sfrac{1}{\eps}$ in passes and space~\cite{BC03}. Thus, in the high accuracy regime, we present an exponential improvement over their result.

Another approach to the $(1 + \eps)$-MEB problem is by using the ellipsoid algorithm with a similar metric \epsnet{} construction. While this algorithm also can achieve $\polyof{d, \log \paren{\sfrac{1}{\eps}}}$ pass and space complexity, we present two main improvements over using this approach. Most importantly, as our algorithm is in the multiple linear sketch model, we are able to apply our result to streams with addition and removal of points, and the coordinator and parallel computation models, where the input is distributed. As the ellipsoid algorithm is not in the multiple linear sketch model, it cannot, for example, find the MEB of the symmetric difference of pointsets from two machines. Further, the bulk of our algorithm utilizes $\ell_0$ samplers and estimators and simple arithmetic, which can allow our algorithm to be practical to implement.

A strong result to linear programs is by~\cite{LSZZZ23}. They use the interior point method for solving linear programs in order to get their result. There are two significant differences of our result to theirs. Firstly, as with the ellipsoid method, while this result might present an improvement in the multipass streaming model, our algorithm is able to handle point additions and deletions, as well as distributed inputs. Secondly, their tilde notation hides logarithmic dependence on $n$. Our result completely removes this dependence in both the pass and space complexities, showing that $n$ is not an inherent parameter in this problem.

Our improvements over the work by~\cite{AKZ19} is similar to this, where by using linear sketches we are able to extend their results to models with point additions and deletions such as the turnstile model, and additionally deal with duplicate points due to our usage of $\ell_0$ samplers and estimators.

As we present in Section~\ref{linearclassificationsection}, our algorithm can be used to solve the linear classification problem. Note that our result finds an exact classifier. We compare our result to the result by~\cite{CHW12}.
In the high accuracy regime, this represents an exponential improvement over their result in pass and space complexity.

Another application of our framework is for additive $\eps$-approximation bounded SDP problems. As we present in Section~\ref{saddlepointsection}, our formulation can be used to solve for the saddle point problem within the unit simplex. For this problem, there is a result by~\cite{GH16} in the number of arithmetic operations. We achieve an exponential improvement in $\sfrac{1}{\eps}$ dependence over their result, significant in the high accuracy regime.
Another result is by~\cite{SYZ23} who achieve a significant result for SDPs with high dimensionality.
In the high accuracy regime with a large number of constraints $n \gg \sfrac{1}{\eps}$, this presents a loss in number of passes polynomial in $d$, while achieving a significant gain in space complexity, as our result does not depend on $n$ at all.

\subsection{Related Work} \label{relatedworksection}

Approximation algorithms for LP-type problems in big data models have been well-studied, where the trade-off between approximation ratio, number of passes, and space complexity has given rise to different techniques being used for different regimes.

For the MEB problem, there is the folklore result of a $1$-pass algorithm, where one stores the first point and a running furthest point from it, giving a $2$-approximation. There is also a classic $1$-pass $\sfrac{3}{2}$-approximation algorithm by Zarrabi-Zadeh and Chan, that uses $O(d)$ space complexity~\cite{ZC06}. Allowing $O(d)$ passes, an algorithm by Agarwal and Sharathkumar, with further analysis by Chan and Pathak, achieves a $1.22$-approximation with $O \paren{\paren{\sfrac{d}{\eps^3} \log \paren{\sfrac{1}{\eps}}}}$ space complexity~\cite{AS10, CP11}. For the $(1 + \eps)$-approximation problem, much work is done in the single-pass model. These algorithms are based on storing extremal points, called an $\eps$-kernel, and achieve a space complexity of $\paren{\sfrac{1}{\eps}}^{O(d)}$~\cite{AHV04, C06, AY07, Z11}. There is also a body of work on bounding the size of these $\eps$-kernels, as they are useful to achieve low-space representations of large data~\cite{KMY04, BC08, Y08, C10}. \cite{AdFM17} presents a lower bound of $\paren{\sfrac{1}{\eps}}^{\Omega(d)}$ on the size of an $\eps$-kernel. In the multi-pass model, there is the classic result by B\^{a}doiu and Clarkson ~\cite{BC03}. This algorithm is widely used in big data models, for its linear dependence on $\sfrac{d}{\eps}$ for space and $\sfrac{1}{\eps}$ for pass complexity, and simplicity~\cite{NN04, TKC05}.

Linear programs are frequently studied in big data models, with a major focus on exact multi-pass LP results. 
A significant recent result is by~\cite{AKZ19}, who in the multi-pass model achieve a result with $O(d \log n)$ passes and $\polyof{d, \log n}$ space usage. Their result is also applicable to other LP-type problems as well as parallel computation models. 
Their algorithm builds on Clarkson's sampling based method~\cite{C95}, and utilizes reservoir sampling to sample a $\mu$-net~\cite{C82}. We build on the ideas presented by~\cite{AKZ19}, to solve LP-type problem approximations.
There is also recent work on multipass approximations for LPs, with a result by~\cite{LSZZZ23} who achieve an $\Tilde{O} \paren{\sqrt{d} \log \paren{\sfrac{1}{\eps}}}$ pass $\Tilde{O} \paren{d^2}$ space algorithm for solving $(1-\eps)$-approximation LPs.
For packing LPs \cite{AG11} gives a $(1 + \eps)$-approximation algorithm, and for covering LPs ~\cite{IMRUVY17} gives a $(1 + \eps)$-approximation algorithm.
The linear classification problem is also well studied, as a fundamental machine learning problem. Its classic result is the mistake bound of the perceptron algorithm~\cite{N62}. More recently, a result by~\cite{CHW12} achieves $O\paren{\sfrac{1}{\eps^2}\log n}$ iterations to find a linear classifier.

There is a lot of study on approximate semidefinite programs, focusing on different regimes. \cite{GH16} give a result on solving bounded SDP saddle point problems with certain assumptions about the problem input. We use similar assumptions to compare our results to theirs. For the regime with a low number of constraints and a large dimension, \cite{SYZ23} achieves an $O\paren{\sqrt{d} \log{\sfrac{d}{\eps}}}$ pass algorithm with $\Tilde{O}\paren{n^2 + d^2}$ space complexity.

\section{Preliminaries}

\paragraph*{Multiple Linear Sketch Model}
In the multiple linear sketch model, an algorithm is able to create fresh linear sketches of the data through each iteration of a multiple iteration model, such as the multipass streaming or multipass strict turnstile models, based on information from the previous iterations. This model has been used implicitly for many adaptive data stream algorithms, such as finding heavy hitters, sparse recovery, and compressed sensing~\cite{IPW11, PW12, NSWZ18}.

\paragraph*{$\ell_0$ Estimator}
\begin{theorem}[Theorem 10 from~\cite{KNW10}]
    There is a $1$-pass linear sketch which given additive and subtractive updates to an underlying vector $\bmv$ in $\{0, 1, \hdots, \polyof{n}\}^N$, gives a $(1 \pm \zeta)$-approximation to the number of non-zero coordinates of $v$, denoted $\ell_0(v)$, with error probability $\delta$, and using $O(\frac{\log N \log \delta^{-1}}{\zeta^2})$ words of memory.
\end{theorem}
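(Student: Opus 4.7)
The plan is to follow the standard geometric-subsampling framework for turnstile distinct-element estimation. The overall structure is a two-stage estimator: a coarse $O(1)$-approximation $R$ to $\ell_0(\bmv)$ selects the subsampling level at which a refined estimator reads off a $(1 \pm \zeta)$-approximation. Because each ingredient is a linear sketch, the whole construction survives the mixed additive/subtractive updates allowed by the theorem.

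For the coarse stage I would use a Flajolet--Martin-style sketch. Choose a pairwise-independent hash $h : [N] \to [N]$, and for each level $\ell \in \{0, 1, \ldots, \log N\}$ maintain an $O(1)$-sparse recovery sketch of $\bmv$ restricted to the indices $i$ with $h(i) \equiv 0 \pmod{2^\ell}$. The largest level at which the recovered support is non-empty estimates $\log_2 \ell_0(\bmv)$ up to an additive constant, so exponentiating yields $R$ with $\ell_0(\bmv) \leq R \leq C \ell_0(\bmv)$ with constant probability. The cost is $O(\log N)$ words per copy, and $O(\log \delta^{-1})$ independent copies combined via the median bring the failure probability below $\delta/3$.

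For the refined stage I would perform geometric subsampling at $O(\log N)$ levels using a second pairwise-independent hash $g : [N] \to \{0, 1, \ldots, \log N\}$ satisfying $\Pr[g(i) \geq j] = 2^{-j}$. At each level $j$ I maintain a linear $k$-sparse recovery sketch, with $k = \Theta(1/\zeta^2)$, of the restriction of $\bmv$ to indices with $g(i) \geq j$; this costs $O(k)$ words per level and $O(\log N / \zeta^2)$ words in total. At query time I use $R$ to pick the level $j^\ast$ with $R \cdot 2^{-j^\ast} \in [c_1 k, c_2 k]$. Since $g$ is pairwise independent, Chebyshev's inequality applied to the surviving support count $X$ at level $j^\ast$ gives $X = (1 \pm \zeta)\, \ell_0(\bmv) \cdot 2^{-j^\ast}$ with constant probability; on this event $X \leq k$ and the sparse recovery sketch decodes $X$ exactly. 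Amplifying via the median of $O(\log \delta^{-1})$ parallel copies yields the final estimator $\hat\ell_0 := 2^{j^\ast} X$ at a total cost of $O(\log N \log \delta^{-1} / \zeta^2)$ words, matching the claim.

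The main obstacle is implementing the per-level $k$-sparse recovery under turnstile cancellations inside $O(k)$ words per level while keeping the per-level failure probability below $\delta/(3 \log N)$. I would partition indices into $O(k)$ buckets under a pairwise-independent hash and, in each bucket, store the three linear fingerprints $\sum_i \bmv_i$, $\sum_i i \bmv_i$, and $\sum_i i^2 \bmv_i$; a bucket is declared a singleton exactly when these fingerprints are consistent with a single nonzero coordinate, and the fingerprints are themselves linear and therefore compatible with the underlying turnstile updates on entries bounded by $\polyof{n}$. Hidden collisions of two or more surviving nonzeros in one bucket are caught by an auxiliary random fingerprint of $O(\log(Nn/\delta') )$ bits with $\delta' = \delta / \log N$, so a union bound over the $O(\log N)$ levels preserves the failure budget, and combining with the $\delta/3$ failure events of the coarse and refined stages produces the overall probability $\delta$ stated in the theorem.
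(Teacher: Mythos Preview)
The paper does not prove this statement; it appears in the Preliminaries section purely as a black-box citation of Theorem~10 from~\cite{KNW10} and is used throughout without proof. There is therefore no ``paper's own proof'' to compare your proposal against.

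For what it is worth, your sketch is a correct and standard route to a turnstile $\ell_0$ estimator with the stated guarantees: the two-stage architecture (a constant-factor rough estimate to select a subsampling level, then Chebyshev on a pairwise-independent subsample of expected size $\Theta(1/\zeta^2)$, with exact support recovery at that level via linear fingerprints) is one of the well-known constructions and does yield $O(\zeta^{-2}\log N\log\delta^{-1})$ words. The actual KNW10 construction is organized somewhat differently---rather than full $k$-sparse recovery at each level, they hash the subsampled coordinates into $\Theta(1/\zeta^2)$ buckets and maintain in each bucket only a single trailing-zeros counter, estimating $\ell_0$ from the fraction of buckets whose counter exceeds a threshold set by the rough estimator---but both approaches land on the same asymptotic space bound, and neither is reproduced in the present paper.
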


\paragraph*{$\ell_0$ Sampler}
\begin{theorem}[Theorem 2 from~\cite{JST11}]
    There is a $1$-pass linear sketch which given additive and subtractive updates to an underlying vector $\bmv$ in $\{0, 1, \hdots, \polyof{n}\}^N$, outputs a coordinate $i$ in the set of non-zero coordinates of $\bmv$, where for each non-zero coordinate $j$ in $\bmv$, we have $i = j$ with probability $\frac{1}{\ell_0(v)} \pm \delta$, using $O(\log N \log \delta^{-1})$ words of memory.
\end{theorem}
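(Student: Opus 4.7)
The plan is to build the $\ell_0$-sampler via the standard \emph{geometric subsampling plus sparse recovery} blueprint. First I would pick a hash function $h:[N]\to[N]$ from a sufficiently-independent family and define nested subsamples $S_\ell = \cbrak{i \in [N] : h(i) \leq N/2^\ell}$ for $\ell=0,1,\hdots,\ceil{\log N}$. For each level $\ell$ I maintain a $1$-sparse recovery sketch on $\bmv|_{S_\ell}$, and I repeat the whole construction $R = O(\log \delta^{-1})$ times independently. All of these ingredients are linear in $\bmv$, so the overall data structure is a linear sketch; an additive or subtractive update to a coordinate $\bmv_i$ simply updates each counter whose subsample contains $i$.

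For the $1$-sparse recovery primitive, I would maintain three counters per level: $\sigma_0 = \sum_{i\in S_\ell} \bmv_i$, $\sigma_1 = \sum_{i\in S_\ell} i\cdot \bmv_i$, and a fingerprint $\sigma_2 = \sum_{i \in S_\ell} g(i)\cdot \bmv_i$ for an independent hash $g:[N]\to\mathbb{F}_p$ with prime $p \in \polyof{N}$. If $\bmv|_{S_\ell}$ has exactly one non-zero coordinate $(j,\bmv_j)$, then $j=\sigma_1/\sigma_0$ and the fingerprint check $\sigma_2 = g(j)\cdot \sigma_0$ passes; if the restriction has two or more non-zeros, the check fails with probability $1-1/\polyof{N}$ by Schwartz--Zippel, so that level is declared a failure. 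At query time, I scan all $R$ repetitions and all levels and return any coordinate recovered at a level whose check passed.

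Correctness then splits into two pieces. First, there exists a level $\ell^*\approx\log\ell_0(\bmv)$ at which, with constant probability over $h$, exactly one coordinate of $\mathrm{supp}(\bmv)$ survives into $S_{\ell^*}$; repeating $R=O(\log \delta^{-1})$ times drives the ``no level recovers anything'' failure probability below $\delta$. Second, for the distributional guarantee, I condition on the success event and argue that each $j\in\mathrm{supp}(\bmv)$ is the unique survivor at the winning level with essentially the same probability. Using a $k$-wise independent hash family for $k=\Theta(\log \delta^{-1})$, an inclusion--exclusion over which elements of $\mathrm{supp}(\bmv)$ land in $S_\ell$ then gives $\Pr[i=j \mid \text{success}] = \tfrac{1}{\ell_0(\bmv)}\pm \delta$. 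The space per counter is $O(1)$ words; summed over $O(\log N)$ levels and $O(\log \delta^{-1})$ repetitions this yields the claimed $O(\log N\log \delta^{-1})$ words.

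The main obstacle is the uniformity bound, not the support-recovery bound. Getting only a constant-factor-accurate output distribution is easy, but the $\pm\delta$ additive closeness requires controlling how much the limited independence of $h$, together with the conditioning on \emph{some} level recovering successfully, can tilt the conditional distribution across $\mathrm{supp}(\bmv)$. Concretely, I would bound the total variation between the sampler's output distribution and the uniform distribution on $\mathrm{supp}(\bmv)$ by controlling the second moment of the survivor count at each level and choosing the independence parameter $k$ and the number of repetitions $R$ so that the cumulative deviation fits into the $\delta$ budget. This is the step where the JST analysis does its real work, and where a looser implementation (for example, only pairwise-independent hashing) would fail the $\pm\delta$ guarantee even while still correctly producing \emph{some} non-zero coordinate.
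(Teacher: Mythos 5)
You should first note that the paper does not prove this statement at all: it is imported verbatim as Theorem~2 of Jowhari--Sa\u{g}lam--Tardos~\cite{JST11} and used as a black box, so there is no in-paper proof to match against. Comparing your sketch to the JST proof itself, you follow the same general blueprint (geometric subsampling composed with exact sparse recovery plus a polynomial fingerprint test), but you diverge in the recovery primitive: you keep a $1$-sparse recovery sketch at each of the $O(\log N)$ levels and take $R = O(\log \delta^{-1})$ independent repetitions, whereas JST keep a single structure with an $s$-sparse recovery sketch per level for $s = \Theta(\log \delta^{-1})$, recover the \emph{entire} set of survivors at the first level where at most $s$ elements of the support survive, and select the output by a fixed rule (minimum hash value) from that recovered set. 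That difference matters precisely at the step you flag as the hard one. In the JST scheme, conditioned on the sparse-recovery structure not erring, the recovered set is exactly the set of survivors, and approximate uniformity reduces to a clean statement about the $k$-wise independent subsampling hash; this is what lets them get essentially zero relative error rather than merely additive error. In your scheme, the event you condition on --- ``some level of some repetition passes its fingerprint check'' --- is a union over $R \log N$ correlated events, and the rule ``return any coordinate recovered at a level whose check passed'' is not yet a well-defined selection rule; whichever deterministic tie-breaking you choose (first repetition, lowest level, etc.) couples the identity of the output to the level at which it survives alone, and elements are \emph{not} exchangeable under a $k$-wise independent family once you look at more than $k$ of them jointly. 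Your proposed fix (truncated inclusion--exclusion over survivors, second-moment control of the survivor count, $k = \Theta(\log \delta^{-1})$) is the right toolkit and can be pushed through to the additive $\tfrac{1}{\ell_0(\bmv)} \pm \delta$ guarantee actually stated here, which is weaker than what JST prove; but as written the argument has a gap at the selection rule, and you should either specify it and carry the conditioning through the uniformity bound, or switch to the $s$-sparse-recovery variant, which sidesteps the issue and achieves the same $O(\log N \log \delta^{-1})$ word bound.
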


\begin{note}
    Notice that for the $\ell_0$ estimator and sampler, the space complexity in words is logarithmic in the dimensionality of the underlying vector which in our case is $N \ll n$. While the entries of the vector can be polynomial in $n$, the number of words required to store the estimator and sampler are independent of $n$.
\end{note}

\paragraph*{Combinatorial Dimension} The combinatorial dimension of a problem is the maximum cardinality of a basis for the problem, denoted $\nu$. For LP-type problems, we consider a basis as the minimum number of points required to define a solution to the problem.

\paragraph*{VC Dimension} Given a set $X$ and a collection of subsets of $X$ $\mcR$, known as a \emph{set system} $(X, \mcR)$ and $Y \subseteq X$, the projection of $\mcR$ on $Y$ is defined as ${\mcR|_Y := \cbrak{Y \cap R \mid R \in \mcR}}$. The VC dimension of $(X, \mcR)$ is the minimum integer $\lambda$ where $\left|\mcR|_Y\right| < 2^{|Y|}$ for any finite subset $Y \subseteq X$ such that $|Y| > \lambda$~\cite{VC71}.

\paragraph*{Metric \epsnet{}} Given a metric space $(M, d)$, a metric \epsnet{} is a set of points $E \subseteq M$ such that for any point $\bmp \in M$, there exists a point $\bme \in E$ for which $d \paren{\bmp, \bme} \leq \eps$.
For large metric spaces, i.e. when $M = \cbrak{-\Delta, -\Delta + 1, \hdots, \Delta}^d$, this formulation requires too many points. In this case, we give an alternate formulation. In the case where $M$ is large, given a center point $\bmp_c$, which we consider to be the origin, and a corresponding norm $\norm{\bmq} = d(\bmq, \bmp_c)$ for each point $\bmq \in M$, a metric \epsnet{} is a set of points where for any point $\bmq \in M$, there exists a point $\bme \in E$ for which $d(\bmq, \norm{\bmq}\bme) \leq \eps \norm{\bmq}$.

\paragraph*{\epsilonnet{}} Given a set system $(X, \mcR)$, a weight function $w: X \rightarrow \bbR$, and a parameter $\epsilon \in [0, 1]$, a set $N \subseteq X$ is an \epsilonnet{} with respect to $w(\cdot)$ if $N \cap R \neq \emptyset$ for all sets $R \in \mcR$ such that $w(R) \geq \epsilon \cdot w(X)$~\cite{HW86}.

\begin{lemma}[Corollary 3.8 from \cite{HW86}]\label{mdrawslemma}
    For any set system $(X, \mcR)$ of VC dimension $d < \infty$, finite $A \subseteq X$, and $\epsilon, \delta > 0$, if $N$ is the set of distinct elements of $A$ obtained by \[
        m \geq \max \paren{\frac{4}{\epsilon} \log \frac{2}{\delta}, \frac{8d}{\epsilon} \log \frac{8d}{\epsilon}}
    \]
    random independent draws from $A$, then $N$ is an \epsilonnet{} for $\mcR$ with probability $1 - \delta$.
\end{lemma}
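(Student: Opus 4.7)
The plan is to follow the classical double-sampling argument of Haussler and Welzl. Write $p(x) = w(x)/w(A)$ for the probability of drawing $x$ in a single round, and call a range $R \in \mcR$ \emph{heavy} if $p(R) \geq \eps$; let $E_1$ be the bad event that the sample $N$ of size $m$ misses some heavy $R$, so that the goal is $\Pr[E_1] \leq \delta$. I would first introduce an independent ghost sample $N'$ of size $m$ drawn from the same distribution and consider the event $E_2$ that there is a heavy $R$ with $N \cap R = \emptyset$ and $\abs{N' \cap R} \geq \eps m/2$. Conditional on $E_1$ being witnessed by a particular $R$, the ghost sample has expected overlap $\eps m$ with $R$, so a one-sided Chebyshev bound yields $\Pr\brak{\abs{N' \cap R} \geq \eps m/2 \mid E_1} \geq 1/2$ once $m \geq 8/\eps$, giving $\Pr[E_1] \leq 2\Pr[E_2]$.

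To bound $\Pr[E_2]$ I would apply symmetrization: condition on the combined multiset $T = N \cup N'$ of size $2m$, then re-randomize by a uniformly random partition of $T$ into two halves of size $m$. For a fixed $R$ with $\abs{R \cap T} = k$, the chance that all $k$ elements land in $N'$ is $\binom{m}{k}/\binom{2m}{k} \leq 2^{-k}$, which is at most $2^{-\eps m/2}$ whenever $E_2$ is witnessed by $R$. Crucially, although $\mcR$ may be infinite, the Sauer--Shelah lemma bounds the cardinality of the projection $\mcR|_T$ by $(2em/d)^d$, so a union bound over these distinct projections delivers $\Pr[E_2] \leq (2em/d)^d \cdot 2^{-\eps m/2}$.

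Finally I would impose $2\Pr[E_2] \leq \delta$ and solve for $m$. The two terms inside the $\max$ of the hypothesis correspond to the two regimes that control this product: the $(4/\eps)\log(2/\delta)$ term drives $2^{-\eps m/2}$ below $\delta$, while the $(8d/\eps)\log(8d/\eps)$ term dominates the polynomial Sauer--Shelah factor $(2em/d)^d$. The main bookkeeping hurdle I anticipate is reconciling constants across the Chebyshev step, the symmetrization threshold $\eps m/2$, and the two regimes of $m$, so that the single lower bound in the hypothesis simultaneously suffices for both; this is standard but easy to mis-tune and is the step I expect to demand the most care.
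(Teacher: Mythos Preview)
Your proposal is correct and is precisely the classical double-sampling argument of Haussler and Welzl; the paper itself does not supply a proof of this lemma but simply cites it as Corollary~3.8 of~\cite{HW86}, so there is no ``paper's own proof'' to compare against beyond the original source you are reproducing.
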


While this construction in its original presentation by Haussler and Welzl~\cite{HW86} applies to the trivial weight function $w(x) = 1$, by drawing with probability proportional to an element's  weight, an \epsilonnet{} with respect to $w(\cdot)$ can be obtained for any weight function~\cite{AKZ19}.

\begin{note*}
    Metric $\eps$-nets are a special case of $\epsilon$-nets, but we prefer to distinguish them for clarity. Further, to not cause confusion between metric $\eps$-nets and $\epsilon$-nets, we refer to the latter as $\mu$-nets.
\end{note*}

\section{Algorithm} \label{algosection}
In this section, we present our algorithm for Theorem~\ref{mainresult}, in the multiple linear sketch model. It is inspired by the algorithm presented in~\cite{AKZ19}, which in turn builds on the ideas of Clarkson in~\cite{C95}. For a constrained optimization problem such as Linear Programming of small dimensionality, Clarkson's presented idea is to repeatedly randomly sample a subset of the constraints using a weighted sampling method. This allows the algorithm to solve a smaller instance of the problem instead, which allows for smaller space usage. The key idea is that each iteration of the algorithm updates the weights multiplicatively which allows convergence into better samples, ending with a sample containing a basis. \cite{AKZ19} utilizes this procedure, with the additional usage of a \munet{} in their sampling procedure allowing for an extension to general LP-type problems of small dimensionality. They also build their weight function by storing solutions from previous passes, thus keeping their storage need small.

These previous algorithms are exact, and thus require pass and space complexities dependent on the number of points in the original stream, $n$. We remove this dependence on $n$ by employing the use of linear sketches, where we snap the space of input points to a smaller space of discrete points formed by a metric \epsnet{}. This allows us to compute $\eps$-approximation solutions to the problems while using much lower storage complexity.

As we utilize the ideas of~\cite{AKZ19} to solve LP-type problems, we retain the needed properties for LP-type problems from their result, presented here as properties~\ref{P1} and~\ref{P2}. Further, we add a third property that must be fulfilled, property~\ref{P3}. We stress that most natural LP-type problems follow properties~\ref{P1} and~\ref{P2}, and that property~\ref{P3} follows naturally for most if not all $\eps$-approximation solutions to these problems. Our algorithm requires an LP-type problem $(S, f)$ to satisfy the following:
\begin{enumerate}[(P1)]
    \item\label{P1} Each element $\bmx \in S$ is associated with a set of elements $S_{\bmx} \subseteq R$ where $R$ is the range of $f$.
    \item\label{P2} For all $P \subseteq S$, $f(P)$ is the minimal element of $\bigcap_{\bmx \in P} S_{\bmx}$.
    \item\label{P3} For all $P, Q \subseteq S$ where $Q$ is the set of points resultant of snapping all the points $\bmp \in P$ to a metric \epsnet{}, a $(1 + O(\eps))$-approximation to $f(Q)$ is a $(1 + O(\eps))$-approximation to $f(P)$.
\end{enumerate}

As an example, consider the MEB problem where $S$ is the set of all $d$-dimensional points, $R$ is the set of all $d$-dimensional balls, and $f$ is the function that returns the MEB of its input. For properties~\ref{P1} and~\ref{P2}, each point is associated with the set of all balls that enclose it. For any set of points, the intersection of these sets of balls are precisely the set of balls that enclose them, and thus their MEB is the minimal element of this set. Property~\ref{P3} follows intuitively because ``unsnapping'' a set of points from a metric \epsnet{} can only increase the size of their MEB by a small amount. A formal proof of this property for the MEB problem is given in Appendix~\ref{MEBapplication}.

Our algorithm is as such. First, we define the metric \epsnet{} we use. We create a lattice covering the unit $d$-cube (which contains the unit $d$-sphere) by allowing, for each $d$ dimensions, the values $-1 + \sfrac{i\eps}{\sqrt{d}}$ where $i$ ranges in $\bigl[0, \sfrac{2\sqrt{d}}{\eps}\bigr]$. Thus, any point that is scaled to the unit $d$-sphere will be at most $\sfrac{\eps}{2}$ away from a point in our metric \epsnet{}. The size of this metric \epsnet{} is $\paren{1 + \frac{2\sqrt{d}}{\eps}}^d$.

The crux of our algorithm is therefore to use this metric \epsnet{} to create a linear sketch of the input. Given any point in the input $\bmp \in P$, we get the point $\frac{\bmp}{\norm{\bmp}}$ and snap it to its closest metric \epsnet{} point. Now, we scale it back up, but we round down the norm to $\paren{1 + \eps}^l$ for some $l \in \bbN$. This has the effect of discretizing our point space along the metric \epsnet{} directions, so that our linear sketch has size bounded by $\ceil{\log_{1 + \eps} \paren{\max_{\bmp \in P} \norm{\bmp}}} \paren{1 + \frac{2 \sqrt{d}}{\eps}}^d$. In some problems, we are presented input points already inside the unit $d$-cube, or a $d$-cube of bounded size. In these cases, we only need to snap the input points to the closest metric \epsnet{} point. We denote the size of our net as $N$ and maintain that it is small compared to the size of the input, $n$.
We stress that we never have to store this net, as the net points are immediately fed into a linear sketch which projects the set of net points to a sketching dimension which is polylogarithmic in the number of net points.

\begin{note*}
    For the sake of clarity, we will refer to the input points $\bmp \in P$ as original points, and these new points $\bmq \in Q$ as snapped points.
\end{note*}

Before we explore the algorithm proper, we note a side effect of creating this sketch. Shrinking the point space  means that distinct points in the original input can now map to the same point. As our algorithm includes weighted sampling over the snapped points, we must treat these as one point and not two overlapping points. In order to achieve this, we utilize $\ell_0$ estimators and samplers, as defined in~\cite{KNW10} and~\cite{JST11} respectively, which allow us to consider estimates on points and sample from them without running into the problem of overlapping. We note that these $\ell_0$ estimators and samplers are linear sketches, and such are able to be used in the models present in our algorithm.

Our algorithm then proceeds in iterations of weighted sampling, in a similar fashion to~\cite{AKZ19}. Throughout, we maintain a weight function $w : Q \rightarrow \bbR$.
In order to save space, we maintain this weight function not explicitly, but by storing previous successful solutions, as in~\cite{AKZ19}.
Each snapped point starts at weight $1$.
In each iteration, we sample a \munet{} with respect to $w(\cdot)$ using Lemma~\ref{mdrawslemma}. As we explain previously, this sampling cannot be done directly from the snapped points $Q$ because of overlaps, so we utilize $\ell_0$ estimators and samplers.

A key point is that each snapped point can increase in weight at most once in each iteration, so we have a number of discrete weight classes.
So, in order to sample a point with accordance to its weight, we use the following procedure.
At each iteration, we create estimators and samplers for each possible weight class. On iteration $t$, this means $t$ copies.
Now, we can estimate the total weight of each weight class, and randomly select a weight class accordingly. Then, since each point in that class have equal weight, we can use our estimator for that weight class to uniformly sample a snapped point. This provides us a weighted sampling procedure.

Given a sample $B$, we calculate its solution $f(B)$, and find the set $V$ of points that violate $f(B)$ (for example, in the MEB problem, violators are points that are outside of the ball). Two sets of $\ell_0$ estimators are now used to calculate estimated weights of the snapped points $Q$ and the violators $V$, overestimating the weight of $Q$ and underestimating the weight of $V$. If the estimated weight of $V$ is at most $\mu$ times the estimated weight of $S$, then we denote this iteration as successful, and multiply the weight of each violator by $N^{\sfrac{1}{s}}$. The intuition behind this is that as our algorithm tries to find a basis for the problem, samples from successful iterations can be thought of as good representations of $Q$, and their violators are more likely to be points in that basis, and should thus be given a higher weight. The algorithm finishes when there are no violators, at which point it has successfully found $f(Q)$. We can then return a $(1 + O(\eps))$-approximation to $f(P)$ using property~\ref{P3}. Further explanation of the algorithm procedure and a proof of correctness is provided in Appendix~\ref{algorithmappendix}.

This sampling procedure is similar to~\cite{AKZ19}, with the major difference being our usage of $\ell_0$ samplers and estimators. This means that we sample according to an approximate weight function rather than the ``true'' weight function. We use a TV distance argument and the fact that we appropriately over/underestimate the weights to argue that we retain the bounds of~\cite{AKZ19}. We show that our approximate weight function is sufficiently close to the ``true'' weight function, enough that a sample of $m$ elements according to the approximate weights will still retain the same properties. Additionally, we account for the error probabilities of the estimators and samplers to show that our probability of being a \munet{} is the same as in the direct sampling of~\cite{AKZ19}. We also show that for actually successful iterations, meaning when the weight of $V$ is at most $\mu$ times the weight of $S$, we maintain this property in our estimated weights. This is because for the second sets of estimators we create, we over/underestimate the weights in order to satisfy this one-sided bound. Thus, we retain the success criteria with high probability, again matching the bounds of~\cite{AKZ19}.

Such, we bound the total number of iterations of our algorithm at $O(\nu s)$ where $\nu$ is the combinatorial dimension of the problem, and $s \in \brak{1, \log N}$ is our parameter. A full explanation is provided in Appendix~\ref{iterationsubsection}.

\subsection{Centering the Net with Deletions}

As our algorithm handles models with point deletions, we must make sure to construct our metric \epsnet{} around a non-deleted point and find an approximation for the distance of the furthest non-deleted point from it, to define the size of our net.
To do this, we use $\ell_0$ samplers, which work in the presence of deletions.
We first sample any non-deleted point, which becomes the center of our net $\bmp_c$.
Now, by construction of $P$, we have an upper and lower bound on the distance of any point from $\bmp_c$.
We simply binary search over the powers of $2$ in this range and sample a non-deleted point with norm larger than the current power of $2$ in each iteration.
By the end of the binary search, which takes $\log \log \paren{\Delta d}$ iterations, we end up with a $2$-approximation for $\delta$, considering only non-deleted points. 
Alternatively, one can find this using $2$ iterations. 
First, similarly we sample a non-deleted input point to become $\bmp_c$. Second, we use an $\ell_0$ sampler for each power of $2$, and in one pass over the points add them to the appropriate sampler. 
This way, instead of doing the steps of a binary search, we perform them all at once. While this negates the extra pass complexity, it adds a space complexity in words that is logarithmic in $\Delta$, since we need to store that many samplers. We will choose the $\log \log \paren{\Delta d}$ iteration solution. We explain this procedure in detail in Section~\ref{turnstilesection}.

\subsection{Sampling from Servers}
When adapting our sampling procedure to distributed models,
one cannot take a sample of $m$ points by asking machines to sample $m$ points each, as this would lead to a load of $m \cdot k$ points. Further, since we are doing weighted sampling, machines need to sample in accordance with other machines' weights. In order to fix these problems, we employ the following protocol, which we describe for the coordinator model, noting that it can be implemented in the parallel computation model by having one machine assume the coordinator role. Each machine sends the coordinator the total weight of their subset. The coordinator then calculates how many of the $m$ sample points $B$ each machine shall sample, and sends them this number. Now, the machines can sample and send only their part of the sample, meaning $m$ points are sent \emph{in total} now, instead of per machine.

\subsection{Approximation Guarantees for Various Problems}
In these big data models, we are able to solve many useful low dimensional LP-type problems. We show how to apply our algorithm in three separate ways for the problems. For the MEB problem, we give a multiplicative $(1 + O(\eps))$-approximation to the optimal ball. For the Linear SVM problem, we show how our algorithm can be presented as a Monte Carlo randomized algorithm. We achieve a one-sided error where if a point set is inseparable, we always output as such, and if a point set is separable, we output a $(1 + O(\eps))$-approximation for the optimal separating hyperplane with high probability. For Linear Programming problems, instead of a multiplicative approximation, we give an additive $\eps$-approximation to bounded LP problems. Full complexity results for the models are presented in their respective subsections of Section~\ref{applicationsappendix}. and an overview is shown in Theorem~\ref{mainresult}.

We can solve SDP problems up to additive $\eps$-approximation. For SDP problems, notice that by taking the vectorization $\bmx$ of a solution $X$, we can rewrite a given SDP problem as an LP in $d^2$ dimensions. We must also define the positive semidefinite constraint $X \succeq 0$. This is equivalent to constraining $\bmy^T X \bmy \geq 0$ for all vectors $\bmy$ where $\norm{y} = 1$. Notice that this requires an infinite number of constraints. To solve this, we create another lattice along with our original metric \epsnet{}. Using this lattice, we can instead use the constraints $\bmz^T X \bmz \geq 0$ for vectors $\bmz$ on the lattice. Just like property~\ref{P3}, we show that satisfying these constraints satisfies the original semidefinite constraints up to $\eps$. Thus, we are able to successfully reduce the SDP problem into an LP, which we can solve up to additive $\eps$-approximation, while retaining a small number of constraints.

Being able to solve bounded LP problems up to additive $\eps$-approximation allows us to use our algorithm for many important related problems. For example, we are able to solve the linear classification problem by finding a classifier within additive $\eps$ of an optimal classifier for a set of points. Since in the problem, the optimal classifier has a separation of at least $\eps$, this means that our found classifier is guaranteed to be a separator for the original points.

\section{Lower Bound} \label{lowerboundsection}
In this section, we motivate the usage of multipass algorithms for achieving subexponential space complexity in $(1 + \eps)$-approximations for LP-type problems in the high accuracy regime, where $d < \paren{\sfrac{1}{\eps}}^{0.999}$. We establish these lower bounds for the MEB and linear SVM problems by analyzing the communication complexity with reductions from the Indexing problem.

It is a well-known result that a lower bound on the $1$-round communication complexity for a problem gives a lower bound on the space complexity of $1$-pass streaming algorithms, via a reduction of Alice running the streaming algorithm on her points and sending the state of the algorithm to Bob, who finishes running the algorithm on his points~\cite{AMS99}. We thus use the standard two party communication model for our lower bounds. A problem in this model is a function $P : \mcA \times \mcB \to \mcC$. Alice receives an input $A \in \mcA$ and Bob receives an input $B \in \mcB$. In an $r$-round protocol, Alice and Bob communicate up to $r$ messages with each other, alternating sender and receiver. In particular, if $r$ is even then Bob sends the first message. If $r$ is odd then Alice sends the first message. After $r$ rounds of communication, Bob outputs some $C \in \mcC$. The goal is for Bob to output $P(A, B)$. The $r$-round communication complexity of a problem $P$ is the minimum worst-case cost over all protocols in which Bob correctly outputs $P(A, B)$ with probability at least $2/3$. In this model, cost is measured by the total number of bits sent between Alice and Bob throughout the $r$ messages.

We use the Indexing problem for our reductions. In it, Alice is given a binary string $b \in \{0, 1\}^{[n]}$ and Bob is given an index $i \in [n]$.
The goal is for Bob to output the $i$\textsuperscript{th} bit of $b$. 
It is well-known that the $1$-round randomized communication complexity of the Indexing problem is $\Omega(n)$.

For the MEB problem, given an instance of the Indexing problem, Alice transforms her bitstring $b$ into a set of points $P$, one point $\bmp_j \in \bbR^d$ for each $b_j = 1$.
Separately, Bob transforms his index $i$ into a point $\bmq \in \bbR^d$.
We then show that the MEB of $P \cup \cbrak{\bmq}$ has radius $2$ if $b_i = 1$, and radius at most $2 - \Omega(\varepsilon)$ if $b_i = 0$, where $\varepsilon$ satisfies $n = \left(\sfrac{1}{\varepsilon}\right)^{\lfloor d / 4 \rfloor}$.
This gives us the following result.

\begin{restatable}{theorem}{ccmebrestatable} \label{ccmebthm}
    For $d < \left(\sfrac{1}{\varepsilon}\right)^{0.999}$, any $1$-pass streaming algorithm which yields a \linebreak$(1 + \eps)$-approximation for the MEB problem requires $\paren{\sfrac{1}{\eps}}^{\Omega(d)}$ words of space.
\end{restatable}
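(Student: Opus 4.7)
The plan is to reduce from the Indexing problem, whose one-round randomized communication complexity is $\Omega(n)$. Alice holds $b \in \cbrak{0,1}^n$ and Bob an index $i \in [n]$, with $n = \paren{\sfrac{1}{\eps}}^{\lfloor d/4\rfloor}$. A standard simulation turns any $s$-word $1$-pass streaming MEB approximator into an $O(s \cdot \polyof{d,\log(1/\eps)})$-bit one-round protocol --- Alice runs the sketch on her encoded points, ships its state, and Bob continues it on his point before reading off the approximation --- so the Indexing bound transfers to $s = \paren{\sfrac{1}{\eps}}^{\Omega(d)}$ words after dividing by the per-word bit cost.

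For the encoding, first fix a spherical code $\cbrak{\bmv_1,\ldots,\bmv_n} \subseteq \bbR^d$ of unit vectors with pairwise inner products bounded by $1 - \Theta(\eps)$ in absolute value, so the code has angular separation $\Theta(\sqrt{\eps})$ and no near-antipodal pair; a code of size $\paren{\sfrac{1}{\eps}}^{\lfloor d/4\rfloor}$ is obtained by a standard packing argument on spherical caps of angular radius $\Theta(\sqrt{\eps})$ in $\bbR^d$. Alice sets $P = \cbrak{2\bmv_j : b_j = 1}$ and Bob sets $\bmq = -2\bmv_i$. Every point lies on the sphere of radius $2$ about the origin, so $\mathrm{MEB}(P \cup \cbrak{\bmq}) \leq 2$ automatically. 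If $b_i = 1$, the pair $\cbrak{2\bmv_i, -2\bmv_i} \subseteq P \cup \cbrak{\bmq}$ is antipodal on that sphere and forces diameter $4$, pinning $\mathrm{MEB} = 2$. If $b_i = 0$, the target bound $\mathrm{MEB} \leq 2 - \Omega(\eps)$ follows from the inequality $\mathrm{MEB}^2 \leq 4 - \rho^2$ (standard for point sets on a radius-$2$ sphere, obtained by optimizing a candidate center of the form $t \bmu$ along the support-function witness $\bmu$), where $\rho$ is the distance from the origin to $\mathrm{conv}(P \cup \cbrak{\bmq})$, combined with the quantitative geometric bound $\rho = \Omega(\sqrt{\eps})$. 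Choosing the approximation factor of the streaming algorithm appropriately then makes its output distinguish $2$ from $2 - \Omega(\eps)$, completing the reduction.

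The main obstacle is this last distance-to-hull bound $\rho = \Omega(\sqrt{\eps})$, which must hold uniformly over every adversarial $b$ --- Alice may place up to $n - 1$ codewords in $P$. An elementary algebraic unfolding reduces $\rho$ to the distance from $\bmv_i$ to $\mathrm{conv}(\cbrak{\bmv_j : b_j = 1,\ j \neq i})$, and pairwise angular separation alone only guarantees this distance is $\Omega(\eps)$: two codewords nearly symmetric about $\bmv_i$ have a midpoint within $\Theta(\eps)$ of it, which would leak only an $\Omega(\eps^2)$ MEB gap. Upgrading the separation to $\Omega(\sqrt{\eps})$ is the technical core of the proof and is where the exponent $d/4$ is paid for --- for instance by a tensor/product construction over $\lfloor d/4 \rfloor$ blocks of four coordinates, which supplies enough mutually near-independent directions that convex combinations of the remaining codewords cannot simultaneously mimic $\bmv_i$ in all blocks. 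Once this geometric separation lemma is in hand, assembling the pieces above with the Indexing lower bound yields the claimed $\paren{\sfrac{1}{\eps}}^{\Omega(d)}$-word lower bound.
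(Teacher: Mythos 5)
Your overall strategy --- an Indexing reduction with $n=\paren{\sfrac{1}{\eps}}^{\lfloor d/4\rfloor}$, Alice encoding her bits as points of a spherical code, Bob encoding his index as a single opposing point, and a gap of radius $2$ versus $2-\Omega(\eps)$ --- is the same as the paper's, and the simulation argument and the $b_i=1$ case are fine. The genuine gap is in the $b_i=0$ case. Because you place Bob's point $-2\bmv_i$ \emph{on the same radius-$2$ sphere} as Alice's points, your bound $\mathrm{MEB}^2\le 4-\rho^2$ forces you to show that the distance $\rho$ from the origin to $\mathrm{conv}\paren{\cbrak{2\bmv_j : b_j=1}\cup\cbrak{-2\bmv_i}}$ is $\Omega(\sqrt{\eps})$ \emph{uniformly over every subset Alice might select}. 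This is a global condition on the code, and it fails for the code you specify: with pairwise inner products only bounded by $1-\Theta(\eps)$ in absolute value, a point set of size $\paren{\sfrac{1}{\eps}}^{\Omega(d)}$ can contain the origin in its convex hull (any roughly balanced code with $b$ all-ones does), and then for every candidate center $\bmc$ one has $\sum_{p}\mu_p\norm{p-\bmc}^2 = 4+\norm{\bmc}^2\ge 4$, so the MEB radius is exactly $2$ even when $b_i=0$ and the reduction yields no gap at all. Even setting that aside, your own observation that two near-symmetric neighbours of $\bmv_i$ leave it within $\Theta(\eps)$ of their hull caps the detectable gap at $O(\eps^2)$, which is useless. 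The proposed repair --- a tensor construction over blocks of four coordinates --- is not carried out, and the natural product code (the paper itself uses one, with $\sfrac{1}{\sqrt{\eps}}$ equally spaced points per $2$-coordinate block) does \emph{not} have depth $\Omega(\sqrt{\eps})$: restricting to codewords that differ from $\bmv_i$ in a single block reproduces exactly the planar $\Theta(\eps)$ obstruction.

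The paper avoids this entirely by pushing Bob's point off the sphere: Alice's points lie on the unit sphere and Bob places $\bmq=-3\bmv_i$, at distance $4$ from $\bmp^{(i)}$. A candidate center of the form $-(1+\delta)\bmv_i$ then encloses everything with radius $2-\Omega(\sfrac{\eps}{d})$ whenever the \emph{pairwise} bound $\ip{\bmv_i}{\bmv_j}\le 1-\Omega(\sfrac{\eps}{d})$ holds, which is verified by a direct cosine computation; no statement about convex hulls of arbitrary subsets is needed. If you insist on your all-on-one-sphere construction, you would need both to sparsify the code to pairwise inner products $\le 1-\Theta(\sqrt{\eps})$ (angular separation $\eps^{1/4}$, which still packs $\paren{\sfrac{1}{\eps}}^{\Omega(d)}$ points) and to arrange that every codeword is $\Omega(\sqrt{\eps})$-deep in the cone generated by the remaining ones --- a condition strictly stronger than pairwise separation that you have not established. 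The cleaner fix is to adopt the asymmetric placement of Bob's point.
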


For the linear SVM problem, we show similar proofs, where Alice transforms her bitstring into points labeled $-1$ and Bob transforms his index into a point labeled $+1$. We then show similar properties about the separating hyperplane, giving us the following results.

\begin{restatable}{theorem}{ccsvmepsrestatable} \label{ccsvmepsthm}
    For $d < \left(\sfrac{1}{\varepsilon}\right)^{0.999}$, any $1$-pass streaming algorithm which yields a \linebreak$(1 + \eps)$-approximation for the linear SVM problem requires $\paren{\sfrac{1}{\eps}}^{\Omega(d)}$ words of space.
\end{restatable}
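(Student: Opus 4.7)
The plan is to reduce from Indexing, in direct analogy with the MEB proof sketched earlier. Setting $n := \paren{\sfrac{1}{\eps}}^{\lfloor d/4 \rfloor}$, the $\Omega(n)$ one-round randomized communication lower bound for Indexing combined with the hypothesis $d < \paren{\sfrac{1}{\eps}}^{0.999}$ yields $n = \paren{\sfrac{1}{\eps}}^{\Omega(d)}$. Any $s$-word $1$-pass streaming SVM algorithm gives an Alice-to-Bob protocol of $s \cdot \polyof{d, \log \paren{\sfrac{1}{\eps}}}$ bits (Alice feeds her points into the sketch and transmits its state to Bob, who appends his point), so any $(1+\eps)$-approximation streaming algorithm must use $\paren{\sfrac{1}{\eps}}^{\Omega(d)}$ words, since the word size remains $\paren{\sfrac{1}{\eps}}^{o(d)}$ under the accuracy-vs-dimension assumption.

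The geometric reduction uses a $\delta$-packing $\bmp_1, \ldots, \bmp_n$ of the unit sphere $S^{d-1}$, where $\delta = \Theta(\sqrt{\eps})$ is tuned so that a standard volumetric argument produces $n = \paren{\sfrac{1}{\delta}}^{\Omega(d)} = \paren{\sfrac{1}{\eps}}^{\Omega(d)}$ points with pairwise inner product at most $1 - \delta^2/2$. Given her bitstring $b$, Alice forms the labeled set $\cbrak{\paren{(1-\alpha)\bmp_j, -1} : b_j = 1}$ for a small absolute constant $\alpha$; given his index $i$, Bob forms the single labeled point $\paren{\bmp_i, +1}$. Since the convex hulls of the two classes determine the SVM margin, the optimal $\norm{\bmu}^2$ is governed by the minimum distance from $\bmp_i$ to Alice's convex hull. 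If $b_i = 1$ this distance is exactly $\alpha$, attained by $(1-\alpha)\bmp_i$; if $b_i = 0$, the packing inequality gives $\norm{\bmp_i - (1-\alpha)\bmp_j}^2 \geq \alpha^2 + (1-\alpha)\delta^2$ for every Alice vertex, and a simple convexity argument (the hyperplane through $\bmp_i$ orthogonal to $\bmp_i$ already separates $\bmp_i$ from all vertices) extends this to Alice's entire convex hull. Calibrating $\delta = \Theta(\sqrt{\eps})$ inflates the min distance by a factor $\sqrt{1 + \Omega(\eps)}$, and therefore shrinks the optimal $\norm{\bmu}^2$ by a factor $(1 + \Omega(\eps))$, which is detectable by a $(1+\eps)$-approximation.

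The main obstacle is the quantitative calibration: the parameters $\alpha$ and $\delta$ must be set so that the multiplicative gap on $\norm{\bmu}^2$ is at least $1 + c\eps$ for a constant $c$, while simultaneously preserving $\gamma$-separability for some valid $\gamma > 0$ (as required by the linear SVM promise in the theorem statement) and the $\paren{\sfrac{1}{\eps}}^{\Omega(d)}$ packing size. Secondary care is needed to discretize the construction into the universe $\cbrak{-1, -1 + \frac{1}{\Delta}, \ldots, 1}^d$ with $\Delta = \polyof{n, \sfrac{1}{\eps}}$ without eroding the gap, and to confirm that the convex-hull extension of the pairwise packing bound is tight up to constants. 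Once these quantitative pieces are in place, a streaming algorithm using fewer than $\paren{\sfrac{1}{\eps}}^{\Omega(d)}$ words would, via the reduction, solve Indexing on $n$ bits with a sublinear-bit one-round protocol, contradicting the known lower bound and establishing the theorem.
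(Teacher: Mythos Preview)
Your proposal is correct and follows the same high-level template as the paper (reduce from Indexing via a one-way protocol, encode Alice's bits as the $-1$ class and Bob's index as a single $+1$ point, and exhibit an $\Omega(\eps)$ multiplicative gap in $\norm{\bmu^*}^2$ between the $b_i=1$ and $b_i=0$ cases), but the geometric instantiation is genuinely different. The paper uses an explicit ``product of circles'' construction: it pairs up the $d$ coordinates, places a $\sqrt{\eps}$-spaced grid of $1/\sqrt{\eps}$ angles on each two-dimensional circle, and sets Bob's point at $2\bmp^{(i)}$; the gap is analysed coordinate-pair by coordinate-pair via an orthogonal projection onto the line through $\bmp^{(i)}$, and this loses a factor of $d$ (the resulting gap is $1+\Omega(\eps/d)$), which is then absorbed by the $d<(1/\eps)^{0.999}$ hypothesis. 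Your sphere-packing construction is more conceptual and, because the inner-product deficit $\delta^2/2$ acts directly in the single direction $\bmp_i$ rather than being diluted across coordinate pairs, yields a clean $1+\Omega(\eps)$ gap without the $d$ loss; the price is that the construction is nonexplicit and you must handle the discretisation and $\gamma$-promise issues you flagged. One small correction to your sketch: the pairwise squared-distance bound $\alpha^2+(1-\alpha)\delta^2$ does \emph{not} itself extend to the convex hull (a convex combination can be closer to $\bmp_i$ than any vertex), but the linear inner-product bound $\langle (1-\alpha)\bmp_j,\bmp_i\rangle\le(1-\alpha)(1-\delta^2/2)$ does extend, and the resulting halfspace separation still gives the $\alpha+\Theta(\delta^2)$ distance lower bound you need.
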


\begin{restatable}{theorem}{ccsvmgammarestatable} \label{ccsvmgammathm}
    For $d < \left(\sfrac{1}{\varepsilon}\right)^{0.999}$, any $1$-pass streaming algorithm which determines if a set of binary labeled points is $\gamma$-separable requires $\paren{\sfrac{1}{\gamma}}^{\Omega(d)}$ words of space.
\end{restatable}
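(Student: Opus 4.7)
The plan is to reduce from Indexing to the $\gamma$-separability decision problem, analogously to the previous reductions for MEB and $(1+\varepsilon)$-SVM. Given an Indexing instance consisting of a bitstring $b \in \{0,1\}^n$ for Alice and an index $i \in [n]$ for Bob, with $n = (1/\gamma)^{\Omega(d)}$, I will construct a labeled point set whose $\gamma$-separability reveals $b_i$. Combined with the standard simulation of a $1$-pass streaming algorithm by a $1$-round protocol---Alice runs the algorithm on her prefix, sends its state to Bob, who completes it on his suffix---the $\Omega(n)$-bit communication lower bound for Indexing converts into a $(1/\gamma)^{\Omega(d)}$-word streaming lower bound.

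The core ingredient is a packing $\{\bmv_1, \ldots, \bmv_n\}$ of unit vectors in $\bbR^d$ with pairwise inner product at most $1 - 2\gamma$, which a spherical-cap volume argument provides with $n = \Omega\bigl((1/\sqrt{\gamma})^{d-1}\bigr) = (1/\gamma)^{\Omega(d)}$. Alice appends $\bmv_j$ with label $-1$ to the stream for every $j$ with $b_j = 1$, and Bob appends $\bmv_i$ with label $+1$. If $b_i = 1$, the stream contains $\bmv_i$ with both labels, so the set is not linearly separable, let alone $\gamma$-separable. If $b_i = 0$, the hyperplane $\bmx^T \bmv_i = 1 - \gamma$ (unit normal $\bmv_i$) achieves margin exactly $\gamma$: on Bob's side, $\bmv_i^T \bmv_i = 1$ is $\gamma$ above the threshold, and on Alice's side, $\bmv_j^T \bmv_i \leq 1 - 2\gamma$ is at least $\gamma$ below it. Hence the set is $\gamma$-separable, and the decision algorithm recovers $b_i$ with the same success probability.

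To convert the communication bound into a streaming bound, I snap the packing to the integer grid $\{-\Delta,\ldots,\Delta\}^d$ with $\Delta = \polyof{1/\gamma}$, so each point uses $O(d \log(1/\gamma))$ bits. A streaming algorithm using $S$ words therefore induces a protocol of $O(S \cdot d \log(1/\gamma))$ bits; combining with the $\Omega(n)$ lower bound and the hypothesis $d < (1/\gamma)^{0.999}$---which makes $d\log(1/\gamma)$ only a polynomial-in-$1/\gamma$ factor, dwarfed by the exponential in $d$---yields $S = (1/\gamma)^{\Omega(d)}$ words.

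The main obstacle is calibrating the angular scale of the packing. Using angular separation $\Theta(\gamma)$ (inner product $\sim 1-\gamma^2$) gives a packing of size $(1/\gamma)^{\Omega(d)}$ but margin only $\Theta(\gamma^2)$, which would prove only a $\gamma^2$-separability lower bound. The threshold $1-2\gamma$ (separation $\Theta(\sqrt{\gamma})$) is the correct compromise, giving simultaneously a packing of size $(1/\gamma)^{\Omega(d)}$ and a true margin of $\gamma$. A secondary technicality is that grid-snapping perturbs each inner product by at most $O(d/\Delta)$; choosing $\Delta$ to be a sufficiently large polynomial in $1/\gamma$ and $d$ preserves the margin up to a constant factor, which can then be absorbed into constants.
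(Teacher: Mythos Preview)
Your argument is correct and takes a genuinely different route from the paper's own proof. The paper reuses the explicit product-of-circles construction from the proof of Theorem~\ref{ccsvmepsthm}: Alice encodes her bits via the sequences $\{j_k\}$ on $d/2$ copies of the circle, and Bob inserts $d$ points, all labeled $+1$, lying on the hyperplane through $\bmpi$ orthogonal to the line spanned by $\bmpi$, chosen so that $\bmpi$ is a convex combination of two of them. Inseparability in the $b_i=1$ case comes from $\bmpi$ being trapped in the convex hull of Bob's $+1$ points, and in the $b_i=0$ case one argues a margin of $\Omega(\gamma/d)$ via the same per-pair cosine estimate used earlier. Your construction, by contrast, is non-explicit but cleaner: a single sphere-packing at inner-product threshold $1-2\gamma$ gives Alice's points, Bob inserts just one point, and the $b_i=1$ case is immediate because the same vector carries both labels. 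Your calibration of the angular scale (separation $\Theta(\sqrt{\gamma})$ to get both packing size $(1/\gamma)^{\Omega(d)}$ and margin $\Theta(\gamma)$) is the key observation, and it sidesteps the $1/d$ loss that the paper's torus construction incurs and then absorbs via the regime assumption. Both approaches are valid; yours is shorter and decouples the argument from the earlier SVM construction, while the paper's has the virtue of being fully explicit and reusing machinery already in place.
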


A full explanation of these lower bounds are provided in Appendix~\ref{lbappendix}.

\bibliography{refs}

\appendix

\section{Algorithm Specifics}\label{algorithmappendix}

In this section, we explain some intricacies of our algorithm, provide pseudocode for it, and prove its correctness and iteration count.

First, we explain our lattice construction and some discussion on it. In our construction, we start by taking any point to use as the center of our metric \epsnet{}. For example, in the multipass streaming model, this can be the first point. We call this point $\bmp_c$. Now, we can shift our input points to put $\bmp_c$ at the origin. This will also give us the useful fact that now any other point's norm is defined as its distance to $\bmp_c$. We can use this point to create a metric \epsnet{} by allowing, for each $d$ dimensions, the values $-1 + i \frac{\eps}{\sqrt{d}}$ where $i$ ranges in $\brak{0, 2 \frac{\sqrt{d}}{\eps}}$. 
It is clear to see that this creates a metric \epsnet{} as it is is a lattice covering the unit $d$-cube (which contains the unit $d$-sphere), where the distance between net points $\bme_i$ and $\bme_{i+1}$, where $i$ denotes the value for each coordinate, is $\eps$. 
Thus, any point that is scaled to the unit $d$-sphere will be at most $\frac{\eps}{2}$ away from a point in our metric \epsnet{}. The size of this metric \epsnet{} is $\paren{1 + \frac{2\sqrt{d}}{\eps}}^d$. This net is sub-optimal in size. For example, one can save a factor of $\sqrt{d}$ by covering the unit $\ell_2$-ball. However, in our algorithm we retain complexity logarithmic in the net size, and so in the high accuracy regime where $d < (\sfrac{1}{\eps})^{0.999}$, these are asymptotically equivalent as $O \paren{\log \paren{\sfrac{\sqrt{d}}{\eps}}} \in O \paren{ \log \paren{\sfrac{1}{\eps}}}$. We use the lattice net because it can be described succinctly, thus requiring very small additional complexity to store the net itself. Further, it is easy to calculate the closest metric \epsnet{} point $\bmq_j$ for an arbitrary point $\bmp_i$.

As noted in the body text, snapping the input to this \epsnet{} brings a complication to our weighted sampling procedure. Because our linear sketch is formed by snapping each input point $\bmp \in P$ to a point $\bmq \in Q$, where $\abs{Q} = N \ll \abs{P} = n$, we can have distinct original points $\bmp \neq \bmp'$ that are snapped to the same point $\bmq$. Since our weight function deals with weights for the snapped points $\bmq$, and there is no easy way to recognize duplicates in large (and possibly distributed) input sets, sampling using reservoir sampling over $P$ will favor these duplicated points. Thus, we utilize $\ell_0$ estimators and $\ell_0$ samplers in order to sample proportionally to the number of distinct snapped points $\bmq$. As these estimators and samplers are linear sketches, they can be implemented in our algorithm. We use the benefit of the multiple linear sketch model here, as the weight function we use changes throughout the algorithm. Thus, we are able to create different linear sketches each iteration of our algorithm. The tradeoff is that they come with approximations and error probabilities, however we show that the distribution we sample from by using approximate weights is within a small variation distance of the distribution we would obtain from using true weights. As such, we retain the same analysis and approximate bounds.

Now, for the algorithm procedure. In each iteration, the goal of the algorithm is to sample a small subset of the snapped points that contain a basis for the optimal solution to the problem while keeping a low space complexity.
This goal is achieved by sampling a \munet{} with respect to $w(\cdot)$ from the snapped points using Lemma~\ref{mdrawslemma}. As we explain previously, this sampling cannot be done directly from the snapped points $Q$ because of overlaps, and so we need to utilize $\ell_0$ estimators and samplers. Since each point can increase in weight at most once in each iteration, we have a number of discrete weight classes, for which we create estimators and samplers. This allows us to randomly sample each unique snapped point with probability proportional to its weight by randomly selecting a weight class in accordance to the class' total estimated weight, and then using its sampler to uniformly sample a point from that weight class.

Here we present pseudocode for the algorithm in Algorithm~\ref{MEBalgo}, which uses Algorithms~\ref{Samplehelperalgo} and~\ref{Weighthelperalgo} as helper functions to sample a \munet{} and to check the success of an iteration.
Next, we show that our algorithm correctly computes a valid $(1 + O(\eps))$-approximation to the solution of the LP-type problem, $f(P)$ at the end of its computation.

\begin{algorithm}[H]
    \DontPrintSemicolon
    \SetKwInOut{Input}{input}\SetKwInOut{Output}{output}
    \SetKwFunction{SampleFn}{SampleMPoints}
    \SetKwFunction{ViolatorFn}{CheckViolatorsWeight}
    \Input{A stream $P$ of $n$ points $\bmp \in \cbrak{-\Delta, -\Delta + 1, \hdots, \Delta}^d$}
    \Output{A $1 + O(\eps)$ approximation for $f(P)$}
    \BlankLine
    Take a point $\bmp_1 \in P$, as the origin\;
    Create the metric \epsnet{} $\cbrak{(v_1, v_2, \hdots, v_d) \in \bbR^d \mid v_i = -1 + j\frac{\eps}{\sqrt{d}} \text{ where } j \in \brak{0, 2\frac{\sqrt{d}}{\eps}}}$\;
    $r_m \leftarrow \max_{\bmp_i \in P} \norm{\bmp_i}$\;
    Define $N := \ceil{\log_{1 + \eps} r_m} \paren{1 + \frac{2 \sqrt{d}}{\eps}}^d$\;
    Choose $s \leq \ln N$\;
    Define $\mu := \frac{1}{10\nu N^{\sfrac{1}{s}}}$ where $\nu$ is the combinatorial dimension of the LP-type problem.\;
    \Repeat{$V = \emptyset$}{
        $B \leftarrow$ \SampleFn{P, $\mu$, $w(\cdot)$, current iteration number $t$}\;
        \If{\ViolatorFn{P, $\mu$, $w(\cdot)$, current iteration number $t$, $f(B)$}}{
            Denote iteration successful\;
            Set $w(\bmq) = w(\bmq) \cdot N^{\sfrac{1}{s}}$ for each violator $\bmq \in V$\;
        }
    }
    Let $f(B)$ be the last solution\;
    \Return{$(1 + O(\eps))$ approximation to $f(B)$}
    \caption{An Algorithm for LP-Type Problems}
    \label{MEBalgo}
\end{algorithm}

\begin{algorithm}[H]
    \DontPrintSemicolon
    \SetKwInOut{Input}{input}\SetKwInOut{Output}{output}
    \SetKwFor{RepTimes}{repeat}{times}{end}
    \Input{A stream $P$ of $n$ points $\bmp \in \cbrak{-\Delta, -\Delta + 1, \hdots, \Delta}^d$, a parameter $\mu$, weight function $w(\cdot)$, and integer $t$}
    \Output{A set $B$ of unique points in $m := \max\paren{\frac{8 \lambda}{\mu} \log \frac{8 \lambda}{\mu}, \frac{4}{\mu} \log \frac{2}{\frac{1}{4}}}$ samples drawn with replacement according to $w(\cdot)$}
    \BlankLine
    Define $m := \max\paren{\frac{8 \lambda}{\mu} \log \frac{8 \lambda}{\mu}, \frac{4}{\mu} \log \frac{2}{\frac{1}{4}}}$\;
    Create $t$ $(1 \pm \frac{1}{m^{\sfrac{3}{2}}})$-approximation $\ell_0$ estimators with error probability $\frac{1}{\polyof{N}}$ and $t$ $\ell_0$ samplers with error probability $\frac{1}{\polyof{N}}$, (one for each weight class), where the underlying vectors are of size $N$ and indexed by points $\bmq\in Q$.\;
    \ForEach{$\bmp_i \in P$}{
        $\bmq_j \leftarrow \bmp_i$ snapped to nearest metric \epsnet{} direction and rounded down\;
        Increment $w(\bmq_j)$\textsuperscript{th} estimator and sampler by $1$ in the $\bmq_j$\textsuperscript{th} position\;
    }
    $f_i \leftarrow$ estimate for each weight class estimator $i \in [t]$\;
    $\bmp_i \leftarrow \frac{f_i N^{\sfrac{i}{s}}}{\sum_{j = 1}^{t} f_j N^{\sfrac{j}{s}}}$\;
    $B \leftarrow \emptyset$\;
    \RepTimes{$m$}{
        Pick $j \in [t]$ randomly according to probabilities $\bmp_i$\;
        $\bmq \leftarrow$ sample point from $j$\textsuperscript{th} sampler\;
        $B \leftarrow B \cup \cbrak{\bmq}$
    }
    \Return{$B$}
    \caption{\texttt{SampleMPoints}}
    \label{Samplehelperalgo}
\end{algorithm}

\begin{algorithm}[H]
    \DontPrintSemicolon
    \SetKwInOut{Input}{input}\SetKwInOut{Output}{output}
    \Input{A stream $P$ of $n$ points $\bmp \in \cbrak{-\Delta, -\Delta + 1, \hdots, \Delta}^d$, a parameter $\mu$, weight function $w(\cdot)$, integer $t$, and $f(B)$}
    \Output{A boolean representing if the weight of violators for $f(B)$ is small enough}
    \BlankLine
    Create two sets of $t$ $(1 \pm \frac{1}{4})$-approximation $\ell_0$ estimators with error probability $\frac{1}{\polyof{N}}$, (two for each weight class), where the underlying vectors are of size $N$ and indexed by points $\bmq\in Q$\;
    \ForEach{$\bmp_i \in P$}{
        $\bmq_j \leftarrow \bmp_i$ snapped to nearest metric \epsnet{} direction and rounded down\;
        Increment $w(\bmq_j)$\textsuperscript{th} estimator in the first set by $1$ in the $\bmq_j$\textsuperscript{th} position\;
        \If(\tcp*[f]{$\bmq_j$ is a violator}){$f\paren{B \cup \cbrak{\bmq_j}} > f(B)$}{
            Increment $w(\bmq_j)$\textsuperscript{th} estimator in the second set by $1$ in the $\bmq_j$\textsuperscript{th} position\;
        }
    }
    $f_i' \leftarrow$ estimate for each estimator $i \in [t]$ in the first set\;
    $f_i'' \leftarrow$ estimate for each estimator $i \in [t]$ in the second set\;
    \Return{$\frac{1}{(1+\frac{1}{4})} \sum_i f_i'' N^{\sfrac{i}{s}} \leq \mu \cdot \frac{1}{(1-\frac{1}{4})} \sum_i f_i' N^{\sfrac{i}{s}}$}
    \caption{\texttt{CheckViolatorsWeight}}
    \label{Weighthelperalgo}
\end{algorithm}

\begin{claim}
    Given an LP-type problem $(S, f)$ and a set of input points $P$, Algorithm~\ref{MEBalgo} correctly returns a $(1 + O(\eps))$-approximation to $f(P)$.
\end{claim}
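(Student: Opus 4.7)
The plan is to decompose the claim into three pieces: (i) when the main loop exits with $V = \emptyset$, the last sample satisfies $f(B) = f(Q)$ exactly; (ii) a $(1 + O(\eps))$-approximation to $f(Q)$ is a $(1 + O(\eps))$-approximation to $f(P)$; and (iii) the weighted $\mu$-net sampling and success test implemented with $\ell_0$ estimators/samplers behave, with high probability, like the exact versions used in \cite{AKZ19}. Part (ii) is exactly property~\ref{P3} applied to the snapping $P \mapsto Q$, and the $O(\nu s)$ termination bound is deferred to Appendix~\ref{iterationsubsection}; so the proof reduces to (i) and (iii).

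For (i), I would argue that if the loop exits with $V = \emptyset$, then by definition no $\bmq \in Q$ violates $f(B)$, which by property~\ref{P1} means $f(B) \in S_\bmq$ for every $\bmq \in Q$. Hence $f(B) \in \bigcap_{\bmq \in Q} S_\bmq$, and property~\ref{P2} gives $f(Q) \leq f(B)$. Conversely $B \subseteq Q$ implies $\bigcap_{\bmq \in Q} S_\bmq \subseteq \bigcap_{\bmq \in B} S_\bmq$, so $f(B) \leq f(Q)$. Thus $f(B) = f(Q)$, and the final return value, being a $(1 + O(\eps))$-approximation to $f(B)$, is a $(1 + O(\eps))$-approximation to $f(Q)$, and by property~\ref{P3} also to $f(P)$.

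The main obstacle is (iii), because \texttt{SampleMPoints} does not sample exactly from the weighted distribution and \texttt{CheckViolatorsWeight} does not compute exact weights. For \texttt{SampleMPoints}, I would use a total-variation-distance argument: a single sample is drawn by first picking a weight class $i \in [t]$ with probability proportional to $\hat f_i N^{i/s}$ and then invoking the class-$i$ $\ell_0$-sampler. The $(1 \pm 1/m^{3/2})$ accuracy of each $\hat f_i$ bounds the TV error of the class-selection step by $O(1/m^{3/2})$; the $1/\polyof{N}$ sampling error of each $\ell_0$-sampler contributes a further negligible term. Subadditivity of TV distance over $m$ independent draws gives total error $o(1)$, so the joint distribution of $B$ is statistically indistinguishable from $m$ i.i.d.\ draws under the true weighted distribution, and Lemma~\ref{mdrawslemma} then guarantees $B$ is a $\mu$-net with probability at least $3/4 - o(1)$. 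For \texttt{CheckViolatorsWeight}, the rescaled $(1 \pm 1/4)$-approximations are arranged so that the left-hand side of the test is at most the true violator weight $w(V)$ while the right-hand side is at least $\mu \cdot w(Q)$; consequently any iteration that is truly successful (i.e., $w(V) \leq \mu w(Q)$) causes the test to pass, matching the success criterion of \cite{AKZ19} and preserving their weight-doubling invariant. A union bound over the $O(\nu s)$ iterations and the estimator/sampler instances within each, each failing with probability $1/\polyof{N}$, completes the argument.
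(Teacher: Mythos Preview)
Your proposal is correct, but it differs from the paper's proof in two ways worth noting.

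First, scope: the paper's proof of this particular claim is much narrower than yours. It treats the claim as a \emph{conditional} correctness statement---assuming the loop exits with $V = \emptyset$, the returned value is a $(1+O(\eps))$-approximation to $f(P)$---and so only argues your parts (i) and (ii). Everything you place under (iii) (the TV-distance argument for \texttt{SampleMPoints}, the one-sided estimator analysis for \texttt{CheckViolatorsWeight}) is handled by the paper in the \emph{separate} Claims~\ref{2/3succ} and~\ref{numiterclaim}, which concern termination and iteration count rather than output correctness. Your decomposition is sound, but you are proving more than the claim as stated asks for.

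Second, the argument for (i): the paper does not invoke properties~\ref{P1} and~\ref{P2} here. Instead it uses the standard LP-type axioms directly: from ``no violators'' it gets $f(B \cup \{\bmq_j\}) \le f(B)$ for each $\bmq_j \notin B$, combines with monotonicity to obtain equality, and then applies locality inductively to conclude $f(B) = f(B \cup (Q \setminus B)) = f(Q)$. Your route via \ref{P1}/\ref{P2} also works, but note that the step ``no $\bmq$ violates $f(B)$, which by property~\ref{P1} means $f(B) \in S_{\bmq}$'' actually requires \ref{P2} as well (to identify $f(B)$ as an element of $\bigcap_{\bmx \in B} S_{\bmx}$ and then argue that non-violation forces $f(B)$ into $S_{\bmq}$); \ref{P1} alone only asserts the existence of the sets $S_{\bmx}$. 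The paper's monotonicity/locality argument sidesteps this and is slightly cleaner for that reason, while your approach has the advantage of making the role of \ref{P1}/\ref{P2} explicit.
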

\begin{proof}
    The algorithm only returns when a computed solution $f(B)$ has no violators. Thus, for any point $\bmq_j \notin B$, $f\paren{B \cup \cbrak{\bmq_j}} \leq f(B)$. Further, since $B \subseteq B \cup \cbrak{\bmq_j}$, by the monotonicity property for LP-type problems, we have that $f(B) \leq f\paren{B \cup \cbrak{\bmq_j}}$. Such, we have that $f(B) = f\paren{B \cup \cbrak{\bmq_j}}$.
    
    We can further this by using the locality property for LP-type problems, so that for any $\bmq_k \notin B$, $f(B) = f\paren{B \cup \cbrak{\bmq_j} \cup \cbrak{\bmq_k}}$. Proceeding in an inductive setting for all $\bmq_j \in Q \setminus B$, we will have that 
    $f(B) = f\paren{B \cup \paren{Q \setminus B}} = f(Q)$.

    Finally, we use property~\ref{P3}, in order to conclude that our result $f(B)$ is a $(1 + O(\eps))$-approximation to $f(P)$.
\end{proof}

\subsection{Algorithm Iteration Count}\label{iterationsubsection}
In this section, we bound the total number of iterations our algorithm takes at $O(\nu s)$ where $\nu$ is the combinatorial dimension of the LP-type problem, and $s$ is our parameter with range $\brak{1, \log N}$. To do this, we first bound the number of successful iterations our algorithm takes, which we then use to achieve a total bound on the number of iterations.

\begin{restatable}{claim}{succprobrestatable} \label{2/3succ}
    Each iteration of Algorithm~\ref{MEBalgo} is denoted successful with probability at least $2/3$.
\end{restatable}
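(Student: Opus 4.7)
The plan is to decompose the event that an iteration is denoted successful into two sub-events and union-bound their failure probabilities to at most $1/3$. Recall that an iteration is declared successful exactly when the Boolean returned by \texttt{CheckViolatorsWeight},
\[
\tfrac{1}{1+\sfrac{1}{4}}\sum_i f_i'' N^{\sfrac{i}{s}} \;\leq\; \mu \cdot \tfrac{1}{1-\sfrac{1}{4}}\sum_i f_i' N^{\sfrac{i}{s}},
\]
evaluates to true. I would show that this check passes whenever (i) the true-weight inequality $w(V) \leq \mu \cdot w(Q)$ holds on the sampled $B$, and (ii) all $2t$ of the $(1 \pm \sfrac{1}{4})$-approximation $\ell_0$-estimators used inside \texttt{CheckViolatorsWeight} stay within their stated error bounds.

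For (i), I would argue that the sample $B$ returned by \texttt{SampleMPoints} is a \munet{} for the LP-type problem's range space with respect to $w(\cdot)$ with probability at least $3/4 - o(1)$. Had $B$ been drawn as $m$ i.i.d.\ samples from the true $w$-weighted distribution on $Q$, Lemma~\ref{mdrawslemma} with $\delta = 1/4$ and the stated choice of $m$ would immediately give the \munet{} property with probability $3/4$. The main technical step is a total-variation comparison between that idealized distribution and the one actually implemented: class-selection in \texttt{SampleMPoints} uses $(1 \pm 1/m^{\sfrac{3}{2}})$-approximate $\ell_0$-estimators, perturbing each class-selection probability by a $1 \pm O(1/m^{\sfrac{3}{2}})$ factor, and each per-class $\ell_0$-sampler has error $1/\polyof{N}$. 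A hybrid argument over the $m$ independent draws bounds the joint TV distance by $O(1/\sqrt{m}) + m/\polyof{N} = o(1)$, so the \munet{} property transfers to the actual $B$ with only $o(1)$ probability loss. Once $B$ is a \munet{}, observe that $B \cap V = \emptyset$ by the very definition of $V$, since for any $\bmq_j \in B$ we have $f(B \cup \cbrak{\bmq_j}) = f(B)$. Hence if $w(V) > \mu \cdot w(Q)$ held, the \munet{} property would force $B \cap V \neq \emptyset$, a contradiction, so (i) follows.

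For (ii), each of the $2t$ estimators in \texttt{CheckViolatorsWeight} fails with probability $1/\polyof{N}$, giving total failure $2t/\polyof{N} = o(1)$ by a union bound. Conditional on (i) and on all estimators succeeding, the one-sided normalizations yield $\frac{1}{1+\sfrac{1}{4}}\sum_i f_i'' N^{\sfrac{i}{s}} \leq w(V)$ and $\frac{1}{1-\sfrac{1}{4}}\sum_i f_i' N^{\sfrac{i}{s}} \geq w(Q)$, after which (i) directly gives the checked inequality. Union-bounding the \munet{} failure, the TV slack, and the estimator failures yields total failure probability at most $1/4 + o(1) < 1/3$, proving the claim. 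The main obstacle is the TV-distance argument in (i): one must carefully show that the per-sample errors from both the class-selection estimators and the within-class $\ell_0$-samplers compose cleanly over $m$ independent draws, so that Lemma~\ref{mdrawslemma}'s \munet{} conclusion transfers with only an $o(1)$ additive loss.
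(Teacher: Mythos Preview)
Your proposal is correct and follows essentially the same route as the paper: both decompose into (i) showing $B$ is a \munet{} with probability $3/4 - o(1)$ via a TV-distance comparison between the ideal $w$-weighted sampling and the $\ell_0$-estimator/sampler implementation, then deducing $w(V)\le \mu\,w(Q)$ from $B\cap V=\emptyset$, and (ii) showing the $(1\pm\tfrac14)$-estimators in \texttt{CheckViolatorsWeight} certify the check whenever they all succeed. The paper carries out the TV step by an explicit induction on $m$ bounding the likelihood ratio $\PrP(B)/\PrQ(B)\le(1+1/\polyof{N})^m((1+\zeta)/(1-\zeta))^m$ with $\zeta=1/m^{3/2}$, which is exactly the hybrid you sketch; the remaining estimator and union-bound steps are identical.
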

\begin{proof}
    The intuition behind this claim is that when points are sampled in accordance to their weights, we can use Lemma~\ref{mdrawslemma} to state that our sample is a \munet{} with probability $\sfrac{3}{4}$. While our sampling is not exact and based on $\ell_0$ estimators and samplers, we show using a TV distance argument that we retain this $\sfrac{3}{4}$ probability. Using this, we show that our condition for success is fulfilled with probability $\frac{3}{4} - \frac{1}{\polyof{N}}$, where the error is due to the error of our estimators, which still gives an overall $\sfrac{2}{3}$ success probability. A full proof is provided in Appendix~\ref{successfuliterationappendix}.
\end{proof}

We now note that the weight function $w(\cdot)$ is only updated when there is an iteration that is deemed successful. By Claim~\ref{2/3succ} each iteration is deemed successful with probability at least $\sfrac{2}{3}$. By the Chernoff bound, if the algorithm takes $t$ iterations, then with probability at least $1 - e^{-\Omega(t)}$, at least $\sfrac{t}{2}$ of these iterations are denoted as successful. It follows that bounding the number of successful iterations using the weight function gives us a bound on the total number of iterations.

\begin{restatable}{claim}{numiterrestatable} \label{numiterclaim}
    Algorithm~\ref{MEBalgo} takes $O(\nu s)$ iterations with probability at least $1 - \frac{1}{\polyof{N}} - e^{- \Omega(t)}$.
 \end{restatable}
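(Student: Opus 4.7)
The plan is to follow the standard Clarkson-style potential argument, adapted to the approximate weights returned by the $\ell_0$ sketches, and then convert a bound on successful iterations into a bound on total iterations via a Chernoff bound using Claim~\ref{2/3succ}. Since weights are updated only on successful iterations, it suffices to show that the number of successful iterations is $O(\nu s)$ with high probability; then running $t = c \nu s$ total iterations for a sufficiently large constant $c$ yields at least $t/2$ successful iterations except with probability $e^{-\Omega(t)}$, which gives the stated failure term.

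First I would set up a potential function equal to the total weight $W_t = \sum_{\bmq \in Q} w_t(\bmq)$ after $t$ successful iterations, with $W_0 = |Q| \le N$. On a successful iteration, the (true) weight of the violator set $V$ is at most $O(\mu) \cdot W_t$; here I would need one short remark justifying that the $(1\pm 1/4)$-approximate $\ell_0$-estimators used in \texttt{CheckViolatorsWeight} only distort the success criterion by a constant factor, so the true weight of $V$ is at most $c_0 \mu W_t$ for some absolute constant $c_0$. Multiplying violator weights by $N^{1/s}$ then gives
\[
W_{t+1} \le W_t\bigl(1 + c_0 \mu (N^{1/s}-1)\bigr) \le W_t \bigl(1 + \tfrac{1}{10\nu}\bigr),
\]
by the choice $\mu = \frac{1}{10 \nu N^{1/s}}$, so $W_t \le N \cdot e^{t/(10\nu)}$.

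Next I would lower-bound the weight via a fixed optimal basis $B^* \subseteq Q$ of size at most $\nu$. By the LP-type locality property and the definition of violator, any iteration with $V \ne \emptyset$ satisfies $V \cap B^* \ne \emptyset$: otherwise adding any violator would not change $f(B)$, contradicting the fact that $B^*$ already determines $f(Q)$. Hence each successful iteration doubles (in the logarithmic sense) at least one basis point's weight, so after $t$ successful iterations the $\nu$ basis points have received $t$ multiplications by $N^{1/s}$ in total, and by pigeonhole some $\bmq^* \in B^*$ has $w_t(\bmq^*) \ge N^{t/(\nu s)}$.

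Combining the upper and lower bounds gives $N^{t/(\nu s)} \le W_t \le N \cdot e^{t/(10\nu)}$; taking logarithms and using $s \le \log N$ yields $t \cdot \tfrac{\log N}{\nu s} \le \log N + \tfrac{t}{10\nu}$, i.e.\ $t \le O(\nu s)$ successful iterations. The main obstacle is the bookkeeping around approximate weights: one must check that the sampling and success check, both of which use $\ell_0$ sketches with $\frac{1}{\poly(N)}$ failure probability and constant-factor multiplicative error, preserve the inequality $w(V) \le O(\mu) \cdot W_t$ on every successful iteration and that no sketch fails across all $O(\nu s)$ iterations; a union bound over iterations absorbs this into the $\frac{1}{\poly(N)}$ failure term. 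Finally, invoking Claim~\ref{2/3succ} and a Chernoff bound on the $t$ independent success indicators shows that for $t = \Theta(\nu s)$ at least $t/2$ iterations are successful except with probability $e^{-\Omega(t)}$, so the algorithm terminates within $O(\nu s)$ iterations with probability $1 - \frac{1}{\poly(N)} - e^{-\Omega(t)}$.
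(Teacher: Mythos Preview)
Your proposal is correct and follows essentially the same Clarkson-style potential argument as the paper: upper-bound $w_t(Q)$ via the success criterion (accounting for the $(1\pm 1/4)$ estimator distortion), lower-bound it via a fixed basis $B^*$ using $V\cap B^*\neq\emptyset$, combine, and convert successful to total iterations via Claim~\ref{2/3succ} and Chernoff. The only cosmetic difference is that the paper bounds $w_t(B^*)$ from below by Jensen's inequality on $\sum_{l} N^{a_l/s}$, whereas you use pigeonhole on a single basis point; both yield the same $N^{t/(\nu s)}$ threshold.
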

\begin{proof}
    The proof follows by upper and lower bounding the weight of the set $Q$ after the $t$\textsuperscript{th} iteration denoted successful, which can then be used together to bound $t$ at $O(\nu s)$ with high probability. The proof is provided in Appendix~\ref{numiterappendix}.
\end{proof}

\subsection{Proof of Claim~\ref{2/3succ}} \label{successfuliterationappendix}
\succprobrestatable*

\begin{proof}
    The first step towards showing this is getting a probability bound on how many iterations will have the sample be a \munet{}. By Lemma~\ref{mdrawslemma}, a sample obtained from $m := \max\paren{\frac{8\lambda}{\mu} \log \frac{8\lambda}{\mu}, \frac{4}{\mu} \log \frac{2}{\frac{1}{4}}}$ draws with probability proportional to the points' weights will result in a \munet{} with respect to $w(\cdot)$ with probability at least $3/4$. However, we have to account for the approximations and error probabilities of our $\ell_0$ estimators and samplers, as we use them to get our draws.

    To account for this, we will look at the TV distance between the distributions where the $m$ drawn samples are taken with probabilities according to the actual weights, and with probabilities according to our estimates.
    
    \begin{definition}
        The Total Variation (TV) distance is a metric on the space of probability distributions over a discrete set $X$. If $\mcP$ and $\mcQ$ are distributions over $X$, the TV distance between $\mcP$ and $\mcQ$ is given by
        \[ \TV(\mcP, \mcQ) = \frac{1}{2} \sum_{x \in X} |\Pr_\mcP(x) - \Pr_\mcQ(x)|.  \]
        The TV distance is also an integral probability metric with respect to the class of functions with co-domain $\{0, 1\}$. That is, 
        \[ \TV(\mcP, \mcQ) = \sup\left\{
        \left| \Exp{x \sim \mcP}{f(x)} - \Exp{x \sim \mcQ}{f(x)} \right| : f : X \to \{0, 1\} \right\}. \]
    \end{definition}
    
    Let $\mcP$ be the actual distribution, and $\mcQ$ be our estimated distribution. Then, looking at the TV distance, we have that
    \begin{align}
        \TV(\mcP, \mcQ) &= \frac{1}{2} \sum_{B} \abs{\PrP(B) - \PrQ(B)} \nonumber\\
        &= \frac{1}{2} \sum_{B} \PrQ(B) \abs{\frac{\PrP(B)}{\PrQ(B)} - 1} \label{tveq}.
    \end{align}
    We can now bound the quotient of these two probabilites. For simplicity and generality, we will refer to the approximation ratio of the $\ell_0$ samplers as $(1 + \zeta)$.

    \begin{lemma} \label{tvlemma}
        $\frac{\PrP(B)}{\PrQ(B)} \leq \paren{1 + \frac{1}{\polyof{N}}}^m \paren{\frac{1 + \zeta}{1 - \zeta}}^m$
    \end{lemma}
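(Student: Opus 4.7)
The plan is to expand both $\PrP(B)$ and $\PrQ(B)$ as products over the $m$ independent draws and bound the ratio one draw at a time. Write $B = (\bmq_1, \ldots, \bmq_m)$ and note that in both distributions the $m$ samples are drawn independently. Hence
\[
\frac{\PrP(B)}{\PrQ(B)} \;=\; \prod_{k=1}^{m} \frac{\PrP(\bmq_k)}{\PrQ(\bmq_k)},
\]
so it suffices to prove the per-draw inequality
\[
\frac{\PrP(\bmq)}{\PrQ(\bmq)} \;\le\; \paren{1 + \tfrac{1}{\polyof{N}}} \cdot \frac{1+\zeta}{1-\zeta}
\]
for every $\bmq \in Q$, and then take the $m$-fold product.

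For the single-draw analysis, let $\bmq$ lie in weight class $i$ (so $w(\bmq) = N^{i/s}$) and let $N_j$ denote the true number of distinct snapped points with weight class $j$. Under $\mcP$ we have $\PrP(\bmq) = N^{i/s}/\sum_j N_j N^{j/s}$. Under $\mcQ$, the algorithm first picks weight class $j$ with probability $f_j N^{j/s}/\sum_{j'} f_{j'}N^{j'/s}$ and then invokes the $\ell_0$ sampler for that class, which returns any particular non-zero coordinate with probability $1/N_i \pm \delta$, i.e., within a multiplicative $\tfrac{1+\zeta}{1-\zeta}$ factor of $1/N_i$ after we fold the additive error $\delta$ into $\zeta$ at the cost of losing a $1/\polyof{N}$ event. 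The $\ell_0$ estimator is a $(1 \pm 1/m^{3/2})$-approximation, so $f_j = N_j(1 \pm 1/m^{3/2})$ for every $j$ simultaneously (outside a $1/\polyof{N}$ failure event, absorbed into the final bound via a union bound over the $t$ estimators). Substituting these bounds and canceling the $N^{i/s}$ factors,
\[
\frac{\PrP(\bmq)}{\PrQ(\bmq)} \;\le\; \frac{N_i}{f_i}\cdot\frac{\sum_j f_j N^{j/s}}{\sum_j N_j N^{j/s}} \cdot \frac{1+\zeta}{1-\zeta} \;\le\; \frac{1+1/m^{3/2}}{1-1/m^{3/2}} \cdot \frac{1+\zeta}{1-\zeta},
\]
where the middle ratio is bounded using $f_j/N_j \in [1-1/m^{3/2},1+1/m^{3/2}]$ uniformly in $j$. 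Since $1/m^{3/2}$ and the residual $1/\polyof{N}$ failure contribution can both be absorbed into a single $(1+1/\polyof{N})$ factor, this gives the per-draw bound.

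The last step is just multiplying the per-draw bound over the $m$ independent samples to obtain the claimed $\paren{1 + 1/\polyof{N}}^m \paren{(1+\zeta)/(1-\zeta)}^m$ bound; this is immediate once the per-draw inequality is in hand. The main obstacle is really bookkeeping: one needs to cleanly package the additive $\delta$-error of the $\ell_0$ sampler as a multiplicative factor of the right form, and to make sure the (at most $t \cdot m \le \polyof{N}$) estimator/sampler failure events are absorbed into the $1/\polyof{N}$ slack rather than breaking the independence assumed in the product. Nothing here is deep, but laying it out without circularity, given that $f_i$ appears both in the denominator of the selected class probability and in the normalizer, is the delicate point.
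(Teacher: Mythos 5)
Your proof takes essentially the same route as the paper's: a per-draw bound on $\PrP(\bmq)/\PrQ(\bmq)$ multiplied over the $m$ independent draws (the paper packages the product as an induction on $m$, which is cosmetic), and your explicit decomposition into the class-selection ratio and the within-class sampling ratio is in fact more detailed than the paper's, which just asserts the two factors. The one misstep is your final bookkeeping: you fold the sampler's additive $\delta$ into $\zeta$ and then claim the estimator factor $\frac{1+1/m^{3/2}}{1-1/m^{3/2}}$ can be absorbed into $\paren{1+\frac{1}{\polyof{N}}}$. That absorption is false --- $m$ is only about $\nu\lambda N^{1/s}\log(\cdot)$, so $1/m^{3/2} \gg 1/\polyof{N}$, and $(1+1/m^{3/2})^m$ is not $(1+1/\polyof{N})^m$. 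The intended (and downstream-used) identification is the reverse of yours: $\zeta = 1/m^{3/2}$ is the estimators' multiplicative error driving the $\frac{1+\zeta}{1-\zeta}$ term, while the samplers' additive $\delta = 1/\polyof{N}$ within-class error yields the $\paren{1+\frac{1}{\polyof{N}}}$ term. With that relabeling your derived inequality is exactly the lemma, so the substance is right; only the packaging needs fixing (and, in fairness, the paper's own wording about which device contributes $\zeta$ is also muddled).
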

    \begin{proof}
        We proceed by induction on $m$.

        \subparagraph*{Base Case}
        Consider a singleton sample $B_1$, and let $K_1$ be the event of $B_1$'s weight class being chosen. Then we get the quotient
        \begin{align*}
            \frac{\PrP(B_1)}{\PrQ(B_1)} &= \frac{\PrP(B_1 \mid K_1)}{\PrQ(B_1 \mid K_1)} \frac{\PrP(K_1)}{\PrQ(K_1)}\\
            &\leq \paren{1 + \frac{1}{\polyof{N}}}\frac{\PrP(K_1)}{\PrQ(K_1)} && \text{by errors of the $\ell_0$ samplers and estimators}\\
            &\leq \paren{1 + \frac{1}{\polyof{N}}} \paren{\frac{1 + \zeta}{1 - \zeta}} && \text{by error of the $\ell_0$ estimators}
        \end{align*}
        as desired.

        \subparagraph*{Induction Step}
        Consider now the sample $B$ of $m$ sample points $B_1, \hdots, B_m$, and let $K_i$ be the event of $S_i$'s weight class being chosen. Then we get the quotient
        \begin{align*}
            \frac{\PrP(B)}{\PrQ(B)} &= \frac{\PrP(B_m)}{\PrQ(B_m)} \frac{\PrP(B_1, \hdots, B_{m-1})}{\PrQ(B_1, \hdots, B_{m-1})}\\
            &\leq \frac{\PrP(B_m)}{\PrQ(B_m)} \paren{1 + \frac{1}{\polyof{N}}}^{m-1} \paren{\frac{1 + \zeta}{1 - \zeta}}^{m-1} && \text{by the induction hypothesis}.
        \end{align*}
        We can now notice that $\frac{\PrP(B_m)}{\PrQ(B_m)} = \frac{\PrP(B_1)}{\PrQ(B_1)}$ since the samples are drawn independently with replacement. Thus, we get that
        \[
            \frac{\PrP(B)}{\PrQ(B)} = \paren{1 + \frac{1}{\polyof{N}}}^m \paren{\frac{1 + \zeta}{1 - \zeta}}^m
        \]
        completing our induction proof.
    \end{proof}

    Now, we can plug this back into Equation~\ref{tveq}, and use the fact that the sum of all probabilities of a distribution is $1$, to get that
    \[
        \TV(\mcP, \mcQ) \leq \frac{1}{2}\paren{ \paren{1 + \frac{1}{\polyof{N}}}^m \paren{\frac{1 + \zeta}{1 - \zeta}}^m - 1}.
    \]
    Using our $\ell_0$ sampler approximation ratios of $\paren{1 + \frac{1}{m^{3/2}}}$, we get that this TV distance converges to $0$, which means that our distribution will be sufficiently close to the actual distribution, and so we will retain that each sample will be a \munet{} with probability at least $3/4$. We will now case on this event when our sample $B$ is a \munet{}.

    Let $f(B)$ be the solution of the sample $B$. First, we claim that for the set of violators $V$, we have that $w(V) \leq \mu \cdot w(Q)$. Assume for the sake of contradiction that $w(V) > \mu \cdot w(Q)$. Since $V \in \mcR$ and $B$ is a $\mu$-net with respect to $w(\cdot)$, this implies that $B \cap V \neq \emptyset$. So, consider a point $\bme \in B \cap V$. Since $\bme \in V$, $f(B \cup \cbrak{\bme}) > f(B)$. However, since $\bme \in B$, $B \cup \cbrak{\bme} = B$, and such it cannot be that $f(B \cup \cbrak{\bme}) > f(B)$. Thus, due to this contradiction, we have that $w(V) \leq \mu \cdot w(S)$.

    Now, we want to show that $\frac{1}{(1+\eta)} \sum_i f_i'' N^{i/s} \leq \mu \cdot \frac{1}{(1-\eta)} \sum_i f_i' N^{i/s}$, as that is when we denote an iteration successful. First, we define $f_{i,V}$ as the number of violators in the $i$\textsuperscript{th} weight class, and similarly $f_{i,Q}$ as the number of snapped points in the $i$\textsuperscript{th} weight class. Using these definitions, we can rewrite $w(V)$ and $w(Q)$, separating by weight class, as $\sum_i f_{i,V} N^{i/s}$ and $\sum_i f_{i,Q} N^{i/s}$ respectively. Plugging these new expressions in, we have that
    \begin{equation} \label{eq1}
        \sum_i f_{i,V} N^{i/s} \leq \mu \sum_i f_{i,Q} N^{i/s}.
    \end{equation}
    By the definition of our estimators, for any $i$ we have that
        \[
            (1 - \eta) f_{i, Q} \leq f_i' \leq (1 + \eta) f_{i, Q}.
        \]
        
        So, we have that $f_{i,Q} \leq \frac{1}{(1-\eta)} f_i'$, and so multiplying by the weight of the class and summing over all weight classes we have that
        \[
            \sum_i f_{i,Q} N^{i/s} \leq \frac{1}{(1-\eta)} \sum_i f_i' N^{i/s}.
        \]

        Similarly, as we have that
        \[
            (1 - \eta) f_{i, V} \leq f_i'' \leq (1 + \eta) f_{i, V},
        \]
        we have that
        \[
            \frac{1}{(1+\eta)} \sum_i f_i'' N^{i/s} \leq \sum_i f_{i,V} N^{i/s}.
        \]

        Plugging these into (\ref{eq1}), we get that
        \[
            \frac{1}{(1+\eta)} \sum_i f_i'' N^{i/s} \leq \mu \cdot \frac{1}{(1-\eta)} \sum_i f_i' N^{i/s},
        \]
        with probability at least $1 - \frac{1}{\polyof{N}}$, which comes from the error probability of the estimators.

        So we know that each iteration will be denoted a successful iteration with probability at least $\frac{3}{4} - \frac{1}{\polyof{N}}$. Given a sufficiently large $N$ such that $\frac{1}{\polyof{N}} \leq  1/12$, we have that each iteration will be denoted a successful iteration with probability at least $2/3$.
\end{proof}

\subsection{Proof of Claim~\ref{numiterclaim}} \label{numiterappendix}
\numiterrestatable*

\begin{proof}
    For the sake of simplicity and generality, with we will refer of the approximation ratio of the $\ell_0$ samplers as $(1 + \eta)$.

    Note that since for all $\bmq \in Q$ $w_0(\bmq) = 1$, $w_0(Q) \leq N$.

    Now, we claim that for any integer $t \geq 1$ after the $t$\textsuperscript{th} iteration denoted successful,
    \begin{equation} \label{2sideineq}
        N^{t/\nu s} \leq w_t(Q) \leq e^{\frac{(1 + \eta)^2 t}{10 (1 - \eta)^2 \nu}} \cdot N.
    \end{equation}

    \begin{lemma}
        For any integer $t \geq 1$, after the $t$\textsuperscript{th} iteration denoted successful,
        \[
            {N^{t/\nu s} \leq w_t(Q)}.
        \]
    \end{lemma}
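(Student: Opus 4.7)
The plan is a standard Clarkson-style potential argument, tracking the weight of a fixed basis rather than all of $Q$. Fix any basis $\mcB \subseteq Q$ for the LP-type problem on the snapped set, so $|\mcB| \leq \nu$ and $f(\mcB) = f(Q)$ by property~\ref{P2}. Since weights are never decreased, $w_t(Q) \geq w_t(\mcB) = \sum_{\bmq \in \mcB} w_t(\bmq)$, so it suffices to show that this sum grows by a factor of $N^{1/s}$ at each successful iteration.

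The key structural step is to argue that on every successful iteration, the violator set $V$ contains at least one element of $\mcB$. Since the algorithm has not yet terminated, $B \not\supseteq \mcB$, and by monotonicity $f(B) \leq f(B \cup \mcB) = f(\mcB) = f(Q)$ with strict inequality (otherwise there would be no violators at all). I would add the elements of $\mcB \setminus B$ to $B$ one at a time: somewhere along this chain the value of $f$ must strictly increase, and by the locality property of LP-type problems the offending element $\bmq^* \in \mcB$ is also a violator for $B$ itself, i.e. $f(B \cup \{\bmq^*\}) > f(B)$. Hence $\bmq^* \in V \cap \mcB$, and its weight is multiplied by $N^{1/s}$ in this iteration.

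Given this, after $t$ successful iterations the total number of weight-multiplications distributed among the at most $\nu$ elements of $\mcB$ is at least $t$. By the pigeonhole principle, some $\bmq^{\star} \in \mcB$ has been promoted at least $\lceil t/\nu \rceil$ times, so $w_t(\bmq^{\star}) \geq N^{t/(\nu s)}$, which gives $w_t(Q) \geq N^{t/(\nu s)}$ as claimed. The main obstacle is the structural step: carefully invoking monotonicity and locality from the LP-type axioms to conclude that a basis element (rather than just some arbitrary point of $Q$) must lie in the violator set of every pre-termination iteration. Once that is established, the remainder is a clean pigeonhole-over-basis argument with no dependence on the $\ell_0$ sampling errors, since those affect only whether an iteration is \emph{declared} successful, not what the update does when it is.
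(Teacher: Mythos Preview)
Your proposal is correct and follows essentially the same Clarkson-style argument as the paper: fix a basis $B^*$ of size at most $\nu$, show via monotonicity and locality that every successful iteration with $V\neq\emptyset$ must promote some basis element, and conclude that $w_t(B^*)\geq N^{t/(\nu s)}$. The only cosmetic difference is that the paper sums the basis weights and applies Jensen's inequality to $\sum_l (N^{1/s})^{a_l}$, whereas you use pigeonhole on a single most-promoted element; both yield the same bound.
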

    \begin{proof}
        Consider an arbitrary basis $B^*$ of $Q$, defined by $k$ points for some $k \leq \nu$. Since $B^* \subseteq Q$, $w_t(B^*) \leq w_t(Q)$. Thus, it suffices to show that $N^{t/\nu s} \leq w_t(B^*)$.
        
        Now, observe that for the set of violators $V$ for an iteration, if $V \neq \emptyset$, then $V \cap B^* \neq \emptyset$.  This is because if we assume for contradiction that $V \cap B^* = \emptyset$, then for any point $\bme \in B^*$, $\bme \notin V$, meaning that $f(B \cup \cbrak{\bme}) \leq f(B)$. Further, from the monotonicity property, $f(B \cup \cbrak{\bme}) \geq f(B)$, and so $f(B \cup \cbrak{\bme}) = f(B)$.
        Inductively repeating this for all points in $B^*$ using the locality property, we get that ${f(B \cup B^*) = f(B)}$. Further, by the monotonicity property, we have that ${f(B^*) \leq f(B \cup B^*) \leq f(Q)}$. But since $B^*$ is a basis for $Q$, it must be that $f(B^*) = f(Q)$. Connecting these all together, we get that $f(B) = f(Q)$.

        Finally, to achieve our contradiction, we can notice that in this case it cannot be that $B$ has any violating points $\bme'$, since by monotonicity $f(B \cup \cbrak{\bme'}) \leq f(Q) = f(B)$. So, we have a contradiction to the fact that $V \neq \emptyset$, and thus $V \cap B^* \neq \emptyset$.

        Now for each $i \in [t]$, define $B_i$ as the sample in the $i$\textsuperscript{th} iteration denoted successful. For any $l \in [k]$, let $a_l$ be the number of samples $B_i$ that are violated by $\bmq_l \in B^*$, i.e., 
        \[
            a_l = \left| \left\{ 
                i \in [t] \mid f(B_i \cup \cbrak{\bmq_l}) > f(B_i)
            \right\} \right|.
        \]
        Since in each iteration denoted successful, there have been some violators (as the algorithm did not return after it), we know that $V \cap B^* \neq \emptyset$ for these iterations, and so there must exist at least one $\bmq_l$ which violated $B_i$ for each $i \in [t]$. Thus, we have that $\sum_{l=1}^k a_l \geq t$.

        Now, looking at $B^*$, we can see that
        \begin{align*}
            w_t(B^*) &= \sum_{l = 1}^k w_t(\bmq_l)\\
            &= \sum_{l = 1}^k \paren{N^{1 / s}}^{a_l}\\
            &\geq k \paren{N^{\sfrac{1}{s}}}^{\sum_{l = 1}^k a_l/k} && \text{by Jensen's inequality}\\
            &\geq k \paren{N^{\sfrac{1}{s}}}^{t/k}\\
            &\geq N^{t/\nu s} && \text{as $k \leq \nu$}
        \end{align*}
        proving the lemma.
    \end{proof}

    \begin{lemma}
        For any integer $t \geq 1$, after the $t$\textsuperscript{th} iteration denoted successful,
        \[
            {w_t(Q) \leq e^{\frac{(1 + \eta)^2 t}{10 (1 - \eta)^2 \nu}} \cdot N}.
        \]
    \end{lemma}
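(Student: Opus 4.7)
The plan is to proceed by induction on $t$, with the key technical step being to translate the success criterion (which is phrased in terms of estimates $f_i',f_i''$) into a bound on the true weights $w_t(V)$ and $w_t(Q)$. The base case $t=0$ is trivial: every point starts at weight $1$, so $w_0(Q) = |Q| \leq N$, and the exponential factor is $1$.

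For the inductive step, suppose the bound holds after $t-1$ successful iterations, and consider the $t$\textsuperscript{th} successful iteration. By definition of success in Algorithm~\ref{Weighthelperalgo}, we have
\[
\frac{1}{1+\eta}\sum_i f_i'' N^{i/s} \;\leq\; \mu \cdot \frac{1}{1-\eta}\sum_i f_i' N^{i/s}.
\]
Using the two-sided $\ell_0$-estimator guarantees $(1-\eta)f_{i,V}\leq f_i''\leq(1+\eta)f_{i,V}$ and $(1-\eta)f_{i,Q}\leq f_i'\leq(1+\eta)f_{i,Q}$ (as recalled in the proof of Claim~\ref{2/3succ}), we lower bound the left side by $\frac{1-\eta}{1+\eta}\,w_t(V)$ and upper bound the right side by $\mu\,\frac{1+\eta}{1-\eta}\,w_t(Q)$, yielding
\[
w_t(V) \;\leq\; \mu \cdot \frac{(1+\eta)^2}{(1-\eta)^2}\, w_t(Q).
\]

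Next, I would use the weight update rule: since each violator's weight is multiplied by $N^{1/s}$ and non-violators are untouched,
\[
w_t(Q) \;=\; w_{t-1}(Q) + (N^{1/s}-1)\,w_{t-1}(V) \;\leq\; w_{t-1}(Q)\left(1 + N^{1/s}\cdot\mu\cdot\frac{(1+\eta)^2}{(1-\eta)^2}\right).
\]
Substituting $\mu = \frac{1}{10\nu N^{1/s}}$ cancels the $N^{1/s}$ factor, leaving the multiplicative increment $1 + \frac{(1+\eta)^2}{10\nu(1-\eta)^2}$. Applying $1+x\leq e^x$ and combining with the inductive hypothesis $w_{t-1}(Q)\leq N \cdot e^{(1+\eta)^2(t-1)/(10(1-\eta)^2\nu)}$ gives the desired bound.

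The main obstacle is simply carrying the estimator errors through cleanly: the success condition is evaluated on estimates, not true weights, so one must be careful that the one-sided bounds in Algorithm~\ref{Weighthelperalgo} (dividing by $1+\eta$ on the violator side, by $1-\eta$ on the $Q$ side) are oriented correctly so that true success implies estimated success and vice versa up to the $(1+\eta)^2/(1-\eta)^2$ slack. Once that relationship is established, the inductive increment is a one-line calculation, and combining this upper bound with the lower bound $N^{t/\nu s}\leq w_t(Q)$ from the previous lemma yields $t = O(\nu s)$, finishing Claim~\ref{numiterclaim}.
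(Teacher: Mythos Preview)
Your proposal is correct and follows essentially the same route as the paper: both convert the success criterion on the estimates into the true-weight inequality $w(V) \le \mu \cdot \frac{(1+\eta)^2}{(1-\eta)^2}\, w(Q)$ via the two-sided estimator guarantees, plug this into the recurrence $w_{t}(Q) \le w_{t-1}(Q) + N^{1/s}\, w_{t-1}(V)$, substitute $\mu = \frac{1}{10\nu N^{1/s}}$ to cancel the $N^{1/s}$, and then either unroll the product or apply $1+x \le e^x$ inductively. The only cosmetic difference is that you phrase it explicitly as induction while the paper unrolls the inequality directly.
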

    \begin{proof}
        First, by how the weights are set, we have that
        \begin{equation} \label{eq2}
            w_{t + 1}(Q) = w_t(SQ) + (N^{\sfrac{1}{s}} - 1) \cdot w_t(V) \leq w_t(Q) + N^{\sfrac{1}{s}} \cdot w_t(V).
        \end{equation}
        We also know from our definition of a successful iteration that
        \begin{equation} \label{eq3}
            \frac{1}{(1+\eta)} \sum_i f_{i,t}'' N^{i/s} \leq \mu \cdot \frac{1}{(1-\eta)} \sum_i f_{i,t}' N^{i/s}
        \end{equation}
        where the estimate $f_i'$ on the $t$\textsuperscript{th} iteration is denoted $f_{i,t}'$ and similarly for $f_{i,t}''$.
        We rewrite $w_t(Q)$ and $w_t(V)$ in this form, as $\sum_i f_{i,t,Q} N^{i/s}$ and $\sum_i f_{i,t,V} N^{i/s}$ respectively.

        By the definition of our estimators, for any $i$ we have that
        \[
            (1 - \eta) f_{i, t, Q} \leq f_{i,t}' \leq (1 + \eta) f_{i, t, Q}.
        \]
        
        So, we have that $\frac{1}{(1 - \eta)} f_{i,t}' \leq \frac{(1 + \eta)}{(1-\eta)} f_{i, t, Q}$, and so multiplying by the weight of the class and summing over all weight classes we have that
        \[
            \frac{1}{(1-\eta)} \sum_i f_{i,t}' N^{i/s} \leq \frac{(1 + \eta)}{(1-\eta)} \sum_i f_{i,t, Q} N^{i/s} = \frac{(1 + \eta)}{(1-\eta)} w_t(Q).
        \]
        Similarly, as we have that
        \[
            (1 - \eta) f_{i, t, V} \leq f_{i,t}'' \leq (1 + \eta) f_{i, t, V},
        \]
        we have that
        \[
            \frac{(1-\eta)}{(1+\eta)} w_t(V) = \frac{(1-\eta)}{(1+\eta)} \sum_i f_{i,t,V} N^{i/s} \leq \frac{1}{(1+\eta)} \sum_i f_{i,t}'' N^{i/s}.
        \]

        Plugging these into (\ref{eq3}), we get that
        \[
            \frac{(1-\eta)}{(1+\eta)} w_t(V) \leq \mu \cdot \frac{(1 + \eta)}{(1-\eta)} w_t(Q)
        \]
        with probability at least $\frac{1}{\polyof{N}}$, which comes from the error probability of the estimators. This can further be rearranged into
        \[
            w_t(V) \leq \mu \cdot \frac{(1 + \eta)^2}{(1-\eta)^2} w_t(Q).
        \]
        Now, plugging this into (\ref{eq2}), we get that
        \begin{align*}
            w_{t}(Q) &\leq w_{t - 1}(Q) + N^{\sfrac{1}{s}} \cdot \mu \cdot \frac{(1 + \eta)^2}{(1-\eta)^2} w_{t - 1}(Q)\\
            &= \paren{1 + N^{\sfrac{1}{s}} \cdot \mu \cdot \frac{(1 + \eta)^2}{(1-\eta)^2}} w_{t - 1}(Q)\\
            &= \paren{1 + N^{\sfrac{1}{s}} \cdot \frac{1}{10 \cdot \nu \cdot N^{\sfrac{1}{s}}} \cdot \frac{(1 + \eta)^2}{(1-\eta)^2}} w_{t - 1}(Q)\\
            &= \paren{1 + \frac{1}{10 \cdot \nu} \cdot \frac{(1 + \eta)^2}{(1-\eta)^2}} w_{t - 1}(Q).
        \end{align*}
        We can further unroll this inequality until we get
        \[
            w_t(Q) \leq \paren{1 + \frac{(1 + \eta)^2}{10 \nu (1-\eta)^2}}^t w_0(Q) \leq e^{\frac{(1 + \eta)^2 t}{10 \nu (1 - \eta)^2}} \cdot N
        \]
        with probability at least $1 - \frac{1}{\polyof{N}}$, proving the lemma.
    \end{proof}

    Connecting both sides of (\ref{2sideineq}), we get $N^{t/\nu s} \leq e^{\frac{(1 + \eta)^2 t}{10 (1 - \eta)^2 \nu}} \cdot N$, which can be rearranged to get 
    \[
        \frac{t}{\nu} \leq \frac{10 (1 - \eta)^2}{10(1-\eta)^2 - (1+\eta)^2} s.
    \]
    Now, taking $\eta = 1/4$, we can simplify this as $\frac{t}{\nu} \leq \frac{3}{2} s$, bounding the number of successful iterations $t$ at $\frac{3}{2}\nu s$. Thus, the total number of iterations the algorithm takes is bounded by $O(\nu s)$ with probability at least $1 - \frac{1}{\polyof{N}} - e^{- \Omega(t)}$.
\end{proof}

\section{Application Specifics} \label{applicationsappendix}
In this section, we present in depth how our algorithms is applied to various big data models, and the specifics of the LP-type problems we solve, achieving the bounds in Theorem~\ref{mainresult} and Table~\ref{resultstable}.

\subsection{Application in the Multipass Streaming Model} \label{multipasssection}
The first model we consider is the multipass streaming model. In this model, our input is presented as a stream of points $P$, and our algorithm is allowed to make multiple linear scans of this stream, while maintaining a small space complexity.

Our algorithm is well suited for handling a large stream of data, as it creates a linear sketch of the data points. Therefore, all of our sampling and estimation can easily be done with the aid of our $\ell_0$ samplers and $\ell_0$ estimators. The major consideration we have to make is in how we calculate and store the weights of each snapped point, because we use that to decide which estimators and samplers to use for each point. We clearly cannot store the weight of each snapped point, as that would take up space linear in the number of snapped points. However, one can notice that the weight of a snapped point changes only when it is a violator of a successful iteration, at which point it is multiplied by $N^{\sfrac{1}{s}}$. Thus, it suffices to instead store the solution $f(B)$ for each successful iteration. Now, the weight of a snapped point is simply $N^{\sfrac{v}{s}}$, where $v$ is the number of stored solutions $f(B)$ it violates, which can be calculated on the fly efficiently.

Given this, we can calculate the number of passes needed by our algorithm. First, we see that each iteration of our algorithm requires two passes over the stream. As seen in the last paragraph, getting the weight of each snapped point while creating our sample can be done in one pass. Further, after calculating a solution $f(B)$, checking for violators can also be done in one pass. Finally, we require one initial pass over the data to set our origin point as well as to get the value of $N$.

The space complexity of our algorithm comes from the four quantities it stores, those being the $\ell_0$ estimators, $\ell_0$ samplers, previous solutions $f(B)$, and the current sample $B$. Note that we do not actually need to store the metric \epsnet{} points, as with our construction being a lattice, we can quickly access the closest metric \epsnet{} point to any point $\bmp_i \in P$.

In each iteration we have $t$ $\paren{1 \pm \frac{1}{m^{\sfrac{3}{2}}}}$-approximation $\ell_0$ estimators with $\frac{1}{\polyof{N}}$ error. Each of these estimators has a space usage of $O(m^3 \log^2 N)$. Additionally, we have $2t$ $\paren{1 \pm \frac{1}{4}}$-approximation $\ell_0$ estimators with $\frac{1}{\polyof{N}}$ error. Each of these estimators has a space usage of $O(\log^2 N)$. Finally, we have $t$ $\ell_0$ samplers with  with $\frac{1}{\polyof{N}}$ error, each of which has a space usage of $O(\log(N))$. Given that $t$ is bounded by $O(\nu s)$, and that $m$ is bounded by $O \paren{\nu \lambda N^{\sfrac{1}{s}} \log \paren{\nu \lambda N^{\sfrac{1}{s}}}}$, we have a total storage from all of these of $O \paren{\nu^4 s \lambda^3 N^{\sfrac{3}{s}}} \cdot \polylogof{\nu, \lambda, N}$ words. The $t$ previous solutions can be stored in $O(\nu s) \cdot S_f$ space, where $S_f$ is the number of words needed to store a solution $f(B)$. Finally, a sample of $m$ points can be stored in $O \paren{\nu \lambda N^{\sfrac{1}{s}} \log \paren{\nu \lambda N^{\sfrac{1}{s}}}}$ words. Combining all of these, the total space complexity for our algorithm is 
\[
    O \paren{\nu^4 s \lambda^3 N^{3/s}}\cdot \polylogof{\nu, \lambda, N} + O(\nu s) \cdot S_f + O \paren{\nu \lambda N^{\sfrac{1}{s}} \log \paren{\nu \lambda N^{\sfrac{1}{s}}}}.
\]
We are mostly interested in the multi-pass case where $s = \log N$. First, note that in this case $N^{3/s}$ becomes constant. Further, we have that $\log N = \log \paren{\ceil{\log_{1 + \eps} r_m}} + d \log \paren{1 + \frac{2 \sqrt{d}}{\eps}}$.
The first term diminishes quickly, and since in the high accuracy regime $d < \paren{\sfrac{1}{\eps}}^{0.999}$, we obtain a space complexity of
\[
    O \paren{\nu^4  \lambda^3  d  \log \paren{\frac{1}{\eps}}}\cdot \polylogof{\nu, \lambda, d, \log \paren{\frac{1}{\eps}}} + O \paren{\nu  d  \log \paren{\frac{1}{\eps}}} \cdot S_f + O \paren{\nu \lambda \log \paren{\nu \lambda}}.
\]
When $\nu$ and $\lambda$ are linear in $d$, this space complexity is polynomial in $d$ and polylogarithmic in $\sfrac{1}{\eps}$.

The time complexity of our algorithm depends on three operations we perform in each iteration. The first operation is to create our sample $B$. This can be done in $O(m)$ time per iteration since each point can be sampled from our $\ell_0$ samplers in linear time. The second operation is to calculate the solution for our sample, $f(B)$, which is done once per iteration. The third operation is to check if a snapped point is a violator, i.e.\ to calculate if $f\paren{B \cup \cbrak{\bmq}} > f(B)$. This operation is done for all points each iteration. For a given LP-type problem, we can denote the time bounds for these as $T_B$ and $T_V$. Together, this gives us a total time complexity over $O(\nu s)$ iterations of
\[
    O \paren{\nu s \paren{T_V \cdot n + m + T_B}}.
\]
In order for our algorithm to be efficient in time complexity, we want $T_V$ to be constant time, and $T_B$ to be polynomial in $d$, and logarithmic in $\sfrac{1}{\eps}$.

All together, we achieve the following result for the multipass streaming model.

\begin{theorem}\label{multipassresult}
    For any $s \in [1, d \log \paren{\sfrac{1}{\eps}}]$, there exists a randomized algorithm to compute a $(1 + O(\eps))$-approximation solution to an LP-type problem in the multipass streaming model, that takes $O(\nu s)$ passes over the input, with
    \[
    O \paren{\nu^4 s \lambda^3 N^{3/s}}\cdot \polylogof{\nu, \lambda, N} + O(\nu s) \cdot S_f + O \paren{\nu \lambda N^{\sfrac{1}{s}} \log \paren{\nu \lambda N^{\sfrac{1}{s}}}}
    \]
    space complexity in words and $O \paren{\nu s \paren{T_V \cdot n + m + T_B}}$ time complexity, where $S_f$ is the number of words needed to store a solution $f(B)$.
\end{theorem}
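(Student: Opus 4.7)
The plan is to show that Algorithm~\ref{MEBalgo}, implemented against a read-only stream, achieves all three bounds stated. Correctness of the $(1+O(\eps))$-approximation already follows from the correctness claim in Appendix~\ref{algorithmappendix} together with property~\ref{P3}, so the proof reduces to a resource accounting that invokes the iteration bound of Claim~\ref{numiterclaim}.

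For the pass count I would first argue that a single preliminary pass suffices to fix the center $\bmp_c$ of the lattice and to read off $r_m = \max_{\bmp \in P}\norm{\bmp}$, which in turn determines $N$. Each subsequent iteration of the main loop then consists of exactly two linear passes: one pass executing \texttt{SampleMPoints}, which snaps each $\bmp$ to its nearest net point, computes its current weight on the fly, and feeds the corresponding estimator and sampler; and one pass executing \texttt{CheckViolatorsWeight} against the freshly computed solution $f(B)$. Composing this with the $O(\nu s)$ iteration bound from Claim~\ref{numiterclaim} yields $O(\nu s)$ passes overall.

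For the space bound I would tally four contributions that are kept alive across iterations: the $t$ high-accuracy $\ell_0$ estimators used for sampling, each of size $O(m^3\log^2 N)$; the $2t$ constant-accuracy $\ell_0$ estimators used for the violator test, each of size $O(\log^2 N)$; the $t$ $\ell_0$ samplers, each of size $O(\log N)$; and the stored history of at most $t$ solutions $f(B)$ plus the current sample of $m$ points. Substituting $t\in O(\nu s)$ from Claim~\ref{numiterclaim} and $m\in O\paren{\nu\lambda N^{1/s}\log(\nu\lambda N^{1/s})}$ from Lemma~\ref{mdrawslemma} recovers the theorem's expression. The main subtlety, and the step I expect to require the most care, is justifying that the weight function $w:Q\to\bbR$ need not be stored explicitly: since weights change only on successful iterations and only for violators, the weight of any snapped point $\bmq$ at any time equals $N^{v/s}$, where $v$ is the number of stored solutions $f(B)$ that $\bmq$ violates; this can be recomputed on the fly from the $O(\nu s)$ stored $f(B)$'s without ever materializing a table of weights indexed by $Q$. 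A second subtlety is that the lattice \epsnet{} is itself never stored; its closed-form description lets us look up the closest net point to any input $\bmp$ in $\polyof{d,\log(\sfrac{1}{\eps})}$ time and words, so the only $N$-dependence in space is through the logarithmic cost of the sketches.

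For the time bound I would observe that within a single iteration drawing the sample costs $O(m)$ via the $\ell_0$ samplers, computing $f(B)$ costs $T_B$ once, and the violator check contributes $T_V$ per scanned input point, so an iteration costs $O(T_V\cdot n + m + T_B)$; multiplying by the $O(\nu s)$ iteration bound completes the proof.
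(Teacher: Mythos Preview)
Your proposal is correct and mirrors the paper's own argument essentially step for step: the same preliminary pass plus two passes per iteration, the same four-way tally of estimators, samplers, stored solutions, and the current sample, and the same per-iteration time accounting, all combined with the $O(\nu s)$ iteration bound from Claim~\ref{numiterclaim}. You also identify precisely the two subtleties the paper highlights---implicit storage of $w(\cdot)$ via the history of successful $f(B)$'s, and the fact that the lattice net is described rather than materialized---so there is nothing to add.
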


\subsection{Application in the Strict Turnstile Model} \label{turnstilesection}
In the strict turnstile model, there is an underlying vector $\bmv$ where the $i$\textsuperscript{th} element of the vector corresponds to a point $\bmp_i \in P$. This vector is initialized to $\boldsymbol{0}$. The input is then presented a stream of additive updates to the coordinates of $\bmv$, presented as $\bmv \leftarrow \bmv + \bme_i$ or $\bmv \leftarrow \bmv - \bme_i$, where $\bme_i$ is the $i$\textsuperscript{th} standard unit vector. At the end of the stream, we are guaranteed that $\bmv$ is itemwise nonnegative. This stream can also be thought of as operations insert$(\bmp_i)$ and delete$(\bmp_i)$, with the guarantee that at the end of the stream, no point $\bmp_i$ has negative copies. Our algorithm is allowed to make multiple linear scans of this stream, while maintaining a small space complexity.

The application of our algorithm to this model is very similar to the multipass streaming model, because it is based on a linear sketch of the data points. Note that $\ell_0$ samplers and estimators work in the presence of insertions and deletions of points. We again can use the same way of calculating the weights as in the multipass streaming model, meaning that the number of passes over the input stream is the same in each iteration.

The only additional hurdle presented by the strict turnstile model is that we cannot simply pick the first point in the data stream as our origin and center of our metric \epsnet{} and get the maximum distance point in order to define $N$. This is because we need to make sure that the points we use here are not ones that will end up at zero copies at the end of the input stream. Instead, we present the following scheme. We first use an $\ell_0$ sampler in order to sample a distinct non-deleted point $\bmp$, which we can then think of as our origin, and set up our metric \epsnet{} around. We also notice that it suffices to get an $O(1)$-approximation of the largest norm of a (shifted) non-deleted point. Further, note that as each input point is in $\cbrak{-\Delta, -\Delta + 1, \hdots, \Delta}^d$, this norm is upper bounded by $O \paren{\Delta \sqrt{d}}$ and lower bounded by 1. We perform a binary search over the powers of $2$ in this range. Since there are $O(\log (\Delta d))$ powers of $2$ in the range, the binary search will take $O(\log \log (\Delta d))$ iterations. In each iteration of this search, we will use an $\ell_0$ sampler in order to sample a distinct non-deleted point with norm above the current power of $2$. When we find the largest such power of $2$ where there exists a non-deleted point, we will have a $2$-approximation of the largest norm from our origin point $\bmp_1$, and this can set our $N$. Each iteration of the binary search requires one pass over the data, and there is the initial pass to find $\bmp_1$, which gives us a total increase in the pass count of $O(\log \log (\Delta d))$ over the multipass streaming model.

The space and time complexities of our algorithm in the strict turnstile model match the multipass model for each iteration of the algorithm. For the initial creation of the net, we do constant time work to sample one point each iteration, giving us the following result.

\begin{theorem}\label{turnstileresult}
    For any $s \in [1, d \log \paren{\sfrac{1}{\eps}}]$, there exists a randomized algorithm to compute a $(1 + O(\eps))$-approximation solution to an LP-type problem in the multipass streaming model, that takes $O(\nu s + \log \log (\Delta d))$ passes over the input, with
    \[
        O \paren{\nu^4 s \lambda^3 N^{3/s}}\cdot \polylogof{\nu, \lambda, N} + O(\nu s) \cdot S_f + O \paren{\nu \lambda N^{\sfrac{1}{s}} \log \paren{\nu \lambda N^{\sfrac{1}{s}}}}
    \]
    space complexity in words and $O \paren{\nu s \paren{T_V \cdot n + m + T_B} + \log \log (\Delta d) \cdot n}$ time complexity, where $S_f$ is the number of words needed to store a solution $f(B)$.
\end{theorem}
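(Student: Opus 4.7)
The plan is to reduce Theorem~\ref{turnstileresult} to Theorem~\ref{multipassresult} by observing that essentially every component of the multipass streaming algorithm is already a linear sketch, and linear sketches are closed under the additive/subtractive updates of the strict turnstile model. The only new work is establishing the center and size of the metric \epsnet{} when the ``first point in the stream'' may be deleted by the end, and verifying that the per-iteration resource bounds are preserved.

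First I would observe that the $\ell_0$ estimators of \cite{KNW10} and $\ell_0$ samplers of \cite{JST11} invoked in Algorithms~\ref{Samplehelperalgo} and~\ref{Weighthelperalgo} are stated exactly for underlying vectors built from additive and subtractive updates, so under the strict turnstile guarantee of itemwise nonnegative $\bmv$ at stream end, each successful-iteration check and each weight-class sample behaves precisely as in the multipass streaming analysis of Theorem~\ref{multipassresult}. Because the weight function is maintained implicitly via the list of past solutions $f(B)$, no explicit per-point state depends on the history of insertions and deletions, so the entire iteration loop (with its $O(\nu s)$ iteration bound from Claim~\ref{numiterclaim}) carries over verbatim.

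The main obstacle, and the only place new work is needed, is the initial setup of the metric \epsnet{}: Algorithm~\ref{MEBalgo} as written picks the first streamed point as the center $\bmp_c$ and uses the largest norm $r_m$ of a streamed point to set $N$, but in the turnstile setting these may correspond to coordinates whose count is zero at the end. I would handle this in two stages, using the protocol described in the Centering the Net with Deletions subsection. In one pass, instantiate a single $\ell_0$ sampler over the underlying vector $\bmv$ and at the end of the pass draw a coordinate $i$; the corresponding point $\bmp_i$ is guaranteed to be non-deleted and can serve as $\bmp_c$. Then binary search over the $O(\log(\Delta d))$ powers of $2$ in the interval $[1,O(\Delta\sqrt{d})]$ that could realize the maximum shifted norm: in each binary-search iteration, run one pass maintaining an $\ell_0$ sampler restricted to coordinates whose associated point has shifted norm exceeding the current threshold $2^j$, and check whether the sampler returns a non-empty sample. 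This yields a $2$-approximation to the true maximum shifted norm over non-deleted points using $O(\log\log(\Delta d))$ extra passes and constant-per-update time.

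With the center and $N$ determined, the remainder of the algorithm is identical to the multipass streaming version, so the per-iteration space bound is exactly the bound from Theorem~\ref{multipassresult}, namely the sum of $\ell_0$-sketch storage, stored solutions, and the current sample. Adding the negligible cost of maintaining one $\ell_0$ sampler per binary-search step (whose $O(\log^2 N)$ word cost is absorbed into the existing sketch storage) leaves the overall space complexity unchanged. For the time complexity, each of the $O(\log\log(\Delta d))$ setup passes processes each of the $n$ stream updates in constant time, producing the additive $O(\log\log(\Delta d)\cdot n)$ term; combined with the $O(\nu s(T_V\cdot n + m + T_B))$ bound from Theorem~\ref{multipassresult} for the main loop, this yields the claimed running time. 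No step poses a serious technical difficulty beyond this net-setup issue, since all remaining claims are inherited directly from the multipass streaming analysis.
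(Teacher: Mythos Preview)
Your proposal is correct and follows essentially the same approach as the paper: reduce to the multipass streaming result by noting that the $\ell_0$ sketches already support turnstile updates, then handle the one genuinely new issue (centering the net on a non-deleted point and estimating the maximum shifted norm) via an $\ell_0$ sampler plus a binary search over the $O(\log(\Delta d))$ powers of two, costing $O(\log\log(\Delta d))$ additional passes. The paper's own argument in Section~\ref{turnstilesection} is precisely this, with the same space and time accounting.
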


\subsection{Application in the Coordinator Model} \label{coordinatorsection}
In the coordinator model, there are $k$ distributed machines and a separate coordinator machine, which is connected to each machine via a two-way communication channel. The input set $P$ is distributed among the machines, with each machine $i$ getting a part $P_i$, and the goal is to jointly compute the solution $f(P)$ for the LP-type problem. Communication between the machines proceeds in rounds, where each round the coordinator can send a message to each machine, and then each machine can send a message back. The final computation is done by the coordinator. In the coordinator model, our algorithm aims to minimize the number of communication rounds and maximum bits of information sent or received in any round.

Again, notice that we can use the same idea of samplers and estimators at each machine, as well as the idea of on-the-fly weight calculation by storing previous solutions. The major implementation detail now is how to sample $m$ points in the distributed setting and denote success of iterations. We will achieve this by the coordinator calculating what share of the $m$ point sample each machine should be responsible for, and then building up an $m$ point sample from the samples of each machine.

Our algorithm can do both of these in each round with three rounds of communication between the coordinator and the machines. First, at the start of each round, if the last round was deemed successful, the coordinator sends each machine the solution $f(B)$, for each machine to store and use in their weight calculations. The machines send the coordinator the weight of their subset of snapped points, $w(Q_i) = \sum_{\bmq_j \in Q_i} w(\bmq_j)$. Now, the coordinator generates $m$ i.i.d. numbers $x_1, \hdots, x_m$ where each number $i \in k$ has probability of being picked according to the weight share of machine $i$, i.e.\ $\Pr \brak{i \text{ is generated}} = \frac{w(Q_i)}{w(Q)}$ where $w(Q) = \sum_{i \in \brak{k}} w(Q_i)$ is calculated by the coordinator. Starting the second round of communication, the coordinator sends each machine $i$ a number $y_i = \abs{\cbrak{l \mid x_l = i}}$. Each machine then uses its $\ell_0$ samplers to sample $y_i$ points from its subset $Q_i$, according to the weight share of each snapped point $\bmq_j$, i.e.\ $\Pr \brak{\bmq_j \text{ is sampled}} = \frac{w(\bmq_j)}{w(Q_i)}$, and sends this sample to the coordinator. This sampling system ensures that in total, $\Pr \brak{\bmq_j \text{ is sampled}} = \frac{w(\bmq_j)}{w(Q_i)} \cdot \frac{w(Q_i)}{w(Q)} = \frac{w(\bmq_j)}{w(Q)}$, and thus the sample is correctly weighted. The coordinator then combines each machine's sample to create the total sample $B$, and calculates $f(B)$. For the final round of communication, the coordinator sends $f(B)$ to all machines, which they use to calculate their subset's violators. They send back the weight of their violator set, $w(V_i)$. If all of these $w(V_i)$'s are 0, then the algorithm is done. If not, the coordinator can calculate whether the iteration is successful, and continue to the next.

To complete our analysis, notice that the start of the algorithm requires two rounds of communication, one round for any machine to send a point to take as the origin, and a second round for the coordinator to send that point to the machines, and for the machines to send back their furthest distance from it.

To calculate the maximum load over a round of communication, we look at the messages being sent. The start of the algorithm has each message representable as a point, so the maximum load is $k$. Then, in each iteration, we have four different messages to account for. A solution is $S_f$ words in size, so sending it has load $k \cdot S_f$. Each weight being sent is bounded by $w(Q)$, which itself is upper bounded by $N \cdot \paren{N^{\sfrac{1}{s}}}^t$ on the $t$\textsuperscript{th} iteration. Since we have $O(\nu s)$ iterations, we get that the weights are bounded by $N^\nu$, which can fit into $1$ word. Thus, we can bound the load of sending weights at $k$ words. Each number $y_i$ is bounded by $m$, and can thus fit into $1$ word. Finally, the samples being sent by each machine have in total $m$ points, thus the load is $m$ words.

All together, we achieve the following result for the coordinator model.

\begin{theorem}\label{coordinatorresult}
    For any $s \in [1, d \log \paren{\sfrac{1}{\eps}}]$, there exists a randomized algorithm to compute a $(1 + O(\eps))$-approximation solution to an LP-type problem in the coordinator model, that takes $O(\nu s)$ rounds of communication, with 
    \[
    O \paren{\nu \lambda N^{\sfrac{1}{s}} \log \paren{\nu \lambda N^{\sfrac{1}{s}}} + k \cdot S_f}
    \]
    load in words, where $S_f$ is the number of words needed to store a solution $f(B)$. The local computation time of the coordinator is $O \paren{\nu s \paren{m + T_B + k}}$, and the local computation time of each machine $i$ is $O \paren{\nu s \paren{n_i T_V}}$ where $n_i = \abs{P_i}$.
\end{theorem}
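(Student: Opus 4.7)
The plan is to verify that the distributed protocol described just before the theorem faithfully simulates one iteration of Algorithm~\ref{MEBalgo}, and then to separately account for the round, load, and local computation budgets. The central simulation claim is that the three-round exchange per iteration produces a sample $B$ drawn from essentially the same weighted distribution as in the centralized algorithm. I would establish this with a two-stage sampling argument: the coordinator first samples $m$ machine indices with probabilities $w(Q_i)/w(Q)$ computed exactly from the reported $w(Q_i)$, then each machine $i$ independently draws $y_i$ points from its share with probabilities $w(\bmq_j)/w(Q_i)$ using its $\ell_0$ samplers. By the law of total probability, $\Pr[\bmq_j \text{ sampled}] = w(\bmq_j)/w(Q)$, matching the centralized distribution. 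The $\ell_0$ sampler and estimator guarantees carry over verbatim, so the TV-distance and success-probability analyses of Claim~\ref{2/3succ} apply; combined with Claim~\ref{numiterclaim}, this bounds the total number of iterations by $O(\nu s)$ with high probability.

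For the round count, I would charge $2$ rounds to initialization (one for an arbitrary machine to send the seed point $\bmp_c$ to the coordinator, and one for the coordinator to broadcast $\bmp_c$ and for machines to report their maximum shifted norms, which determines $N$) and $3$ rounds per iteration (broadcast the previous $f(B)$ and collect weights $w(Q_i)$; distribute sampling quotas $y_i$ and collect the corresponding local samples; broadcast the freshly computed $f(B)$ and collect $w(V_i)$). Multiplying by the $O(\nu s)$ iteration bound yields the claimed $O(\nu s)$ rounds.

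Next, I would bound the per-round load by walking through each message type. The seed point and each broadcast of $f(B)$ incur $O(k \cdot S_f)$ load. Every reported weight $w(Q_i)$ or $w(V_i)$ fits in $O(1)$ words, since after $t = O(\nu s)$ iterations the weight of any snapped point is at most $\paren{N^{\sfrac{1}{s}}}^t \le N^{O(\nu)}$ and summing at most $N$ such values stays polynomially bounded, giving $O(k)$ total load per exchange. The quotas $y_i$ fit in one word each since $y_i \le m$. The aggregated sample from the machines has combined size $m$ words. Summing, the dominant term is $O(m + k \cdot S_f) = O\paren{\nu \lambda N^{\sfrac{1}{s}} \log \paren{\nu \lambda N^{\sfrac{1}{s}}} + k \cdot S_f}$, matching the stated bound.

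Finally, for local computation, the coordinator performs $O(m)$ work per iteration to bucket sampled indices, $O(k)$ work to aggregate reported weights, and one oracle call of cost $T_B$ to compute $f(B)$; across $O(\nu s)$ iterations this gives $O\paren{\nu s (m + T_B + k)}$. Each machine $i$ checks, for each of its $n_i$ snapped points, whether it violates the current $f(B)$ at cost $T_V$ per check, so its total cost is $O(\nu s \cdot n_i T_V)$. The main subtlety I expect is verifying that the hierarchical sampling remains faithful despite the $\ell_0$ sampler's approximate uniformity; I plan to reuse the same TV-distance bound from Claim~\ref{2/3succ} essentially unchanged, since the coordinator-level stage draws from exact probabilities while only the machine-level stage incurs $\ell_0$ sampler error, which is already absorbed by that analysis.
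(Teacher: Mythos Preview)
Your proposal is correct and follows essentially the same approach as the paper: the same three-rounds-per-iteration protocol with the two-stage sampling identity $\Pr[\bmq_j \text{ sampled}] = \frac{w(\bmq_j)}{w(Q_i)} \cdot \frac{w(Q_i)}{w(Q)}$, the same accounting for initialization, load per message type, and local work. Your treatment is arguably slightly more careful in flagging that the TV-distance argument of Claim~\ref{2/3succ} must be reinvoked for the hierarchical sampling, which the paper leaves implicit; one small point to tighten is that the machine-reported weights $w(Q_i)$ are themselves computed via $\ell_0$ estimators (not exactly), so the coordinator-level stage also inherits estimator error, but this is absorbed by the same TV-distance bound you already plan to reuse.
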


\subsection{Application in the Parallel Computation Model} \label{parallelsection}
In the parallel computation model, there are $k$ distributed machines, which are each connected via two-way communication channels. The input set $P$ is distributed among the machines, with each machine $i$ getting a part $P_i$, and the goal is to jointly compute the solution $f(P)$ for the LP-type problem. Communication between the machines proceeds in rounds, where each round the machines can communicate with each other in messages. In the coordinator model, our algorithm aims to minimize the number of communication rounds and maximum bits of information sent or received in any round.

Our procedure for running our algorithm in the parallel computation model is simply to run it as our algorithm for the coordinator model, assigning one of the machines to act as the coordinator. Thus, we can achieve the same bounds on the rounds of communication and maximum load. For the computation times, one machine will have a computation time on the order of the coordinator time added to the machine time, and other machines will have computation times same as in the coordinator model.

\subsection{Solving the MEB Problem}\label{MEBapplication}
One important application of our algorithm is in solving the MEB problem. Given $n$ points ${\bmp_i \in \cbrak{-\Delta, -\Delta + 1, \hdots, \Delta}^d}$, the objective of the problem is to find a $d$-dimensional ball \linebreak${(\bmc, r) \in \bbR^d \times \bbR}$ with center $c$ and radius $r$ that encloses all of the input points, i.e.\ where for all points $\bmp_i$, $d(\bmc, \bmp_i) \leq r$, and which has the smallest such radius $r$.

The MEB problem is an LP-type problem where $S$ is the set of $d$-dimensional points, and $f(\cdot)$ is the function that maps from a set of points to their MEB. The combinatorial dimension of the MEB problem is $\nu = d + 1$, as any MEB in $d$ dimensions defined by $d+2$ points $P$ can be defined by any $d+1$ point subset of $P$. However, there are MEB instances defined by $d+1$ points that cannot be defined by a $d$ point subset, e.g., $3$ vertices of an equilateral triangle in $2$ dimensions. Further, the VC dimension of the MEB set system is $\lambda = d + 1$~\cite{VC15}.

Now, we show that the MEB problem satisfies the properties~\ref{P1},~\ref{P2}, and~\ref{P3}. For~\ref{P1}, each point can be associated with the set of all balls that enclose it. For~\ref{P2}, it can be seen that for a set of points, the intersection of sets of balls that enclose them is the set of balls that enclose the whole set. Thus, it is natural that the MEB is the minimal such ball. Finally, for~\ref{P3}, we show that for an MEB $(\bmc,r)$ of a set of snapped points $Q$, the ball $(\bmc, (1 + 4\eps)r)$ is a $(1 + O(\eps))$-approximation MEB of the set of original points $P$. We show this using two lemmas.

\begin{restatable}{lemma}{containsrestatable} \label{containslemma}
    All points $\bmp_i \in P$ are inside the ball $(\bmc, (1 + 4 \eps) r)$, i.e., $d(\bmc, \bmp_i) \leq (1 + 4 \eps) r$.
\end{restatable}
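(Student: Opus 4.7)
The plan is to prove this by a triangle-inequality decomposition, bounding the MEB-containment error and the snapping error separately. Write
\[
d(\bmc, \bmp_i) \le d(\bmc, \bmq_i) + d(\bmq_i, \bmp_i),
\]
where $\bmq_i \in Q$ is the snapped version of $\bmp_i$. The first term is at most $r$ because $(\bmc,r)$ is the MEB of $Q$, so the whole task reduces to controlling the snapping error $d(\bmq_i,\bmp_i)$ in terms of $r$.

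For the snapping error, I would unpack the two stages of the sketch. By construction, $\bmq_i = (1+\eps)^{\ell} \bme_i$ where $\bme_i$ is the closest metric $\eps$-net point to $\bmp_i/\norm{\bmp_i}$ on the unit $d$-sphere centered at $\bmp_c$, and $(1+\eps)^{\ell} \le \norm{\bmp_i} \le (1+\eps)^{\ell+1}$. The net property gives $\norm{\bmp_i/\norm{\bmp_i} - \bme_i} \le \eps/2$, so scaling yields $\norm{\bmp_i - \norm{\bmp_i}\bme_i} \le (\eps/2)\norm{\bmp_i}$. The radial rounding contributes an additional $|\norm{\bmp_i} - (1+\eps)^{\ell}| \le \eps \norm{\bmp_i}$. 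A second triangle inequality then gives $d(\bmq_i,\bmp_i) \le (3\eps/2)\norm{\bmp_i - \bmp_c}$.

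The remaining step is to bound $\norm{\bmp_i - \bmp_c}$ by a small multiple of $r$. The center $\bmp_c$ is itself an input point, so its snapped image $\bmq_c$ lies in $Q$ (in the shifted coordinates it is the origin, which is a net point), and therefore $d(\bmc,\bmp_c) \le r$. Setting $D = \max_{\bmp \in P} \norm{\bmp - \bmp_c}$, for the extremal $\bmp_i$ achieving $D$ a triangle inequality gives
\[
D \le d(\bmp_c, \bmc) + d(\bmc,\bmq_i) + d(\bmq_i,\bmp_i) \le 2r + (3\eps/2) D,
\]
which rearranges to $D \le 2r/(1-3\eps/2) \le 2r(1+O(\eps))$ for small $\eps$. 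Substituting back yields $d(\bmc,\bmp_i) \le r + (3\eps/2)\cdot 2r(1+O(\eps)) \le (1+4\eps)r$, as required.

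The main obstacle is the circularity in step three: the snapping error for $\bmp_i$ is proportional to $\norm{\bmp_i - \bmp_c}$, but bounding $\norm{\bmp_i - \bmp_c}$ via the MEB passes through $d(\bmc,\bmp_i)$, which is the very quantity we want to bound. I resolve it by absorbing the error into a self-referential inequality on $D$ and solving for $D$, which is clean provided $\eps$ is small enough that $1 - 3\eps/2$ is bounded away from zero. A secondary point to verify carefully is that $\bmp_c$ truly has an image in $Q$ with $d(\bmc,\bmp_c) \le r$; this follows from the convention that $\bmp_c$ (the origin after shifting) coincides with a lattice point and is included among the snapped points defining the MEB.
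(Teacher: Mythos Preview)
Your overall decomposition matches the paper's proof exactly: triangle inequality through $\bmq_i$, bound $d(\bmc,\bmq_i)\le r$, then bound the snapping error by $O(\eps)\norm{\bmp_i-\bmp_c}$, and finally bound $\norm{\bmp_i-\bmp_c}$ by $2r$. The constants differ slightly (the paper uses the coarser net bound $\eps\norm{\bmp_i}$ for the angular part rather than your $\eps/2$, obtaining $2\eps\norm{\bmp_i}$ for the total snapping error), but the structure is identical.

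Where you diverge is in the third step, and there you work harder than necessary. The paper sidesteps the circularity you identify by bounding $\norm{\bmp_i}$ through the \emph{snapped} point rather than the original one: since the norm is rounded to a neighboring power of $(1+\eps)$, one has $\norm{\bmp_i}\le(1+\eps)\norm{\bmq_i}$ (or $\norm{\bmp_i}\le\norm{\bmq_i}$ depending on rounding direction), and $\norm{\bmq_i}=d(\bmq_i,\bmp_c)\le 2r$ directly because both $\bmq_i$ and $\bmp_c$ lie in $Q$ and hence in the ball $(\bmc,r)$. No self-referential inequality is needed. Your fixed-point argument on $D$ is correct and yields the same bound up to $O(\eps^2)$, but it introduces an avoidable dependence on $\eps$ being small enough and obscures the simple geometric fact that the snapped points already live inside the radius-$r$ ball.
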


\begin{restatable}{lemma}{nottoobigrestatable} \label{nottoobiglemma}
    $(\bmc, r)$ is no larger than $(1 + O(\eps))$ times the size of the MEB of the original points $(\bmc^*, r^*)$, i.e., $r \leq (1 + O(\eps))r^*$.
\end{restatable}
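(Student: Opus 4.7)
The plan is to exhibit a ball centered at $\bmc^*$, of radius only $(1+O(\eps))r^*$, that already encloses every snapped point in $Q$; since $(\bmc,r)$ is the \emph{minimum} enclosing ball of $Q$, the bound $r \leq (1+O(\eps))r^*$ follows immediately. In spirit this is the mirror image of Lemma~\ref{containslemma}: there one goes from the MEB of $Q$ back to $P$ paying a multiplicative blow-up; here I go from the MEB of $P$ forward to $Q$ paying the same kind of blow-up, and then invoke minimality.

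The key quantitative step is to bound the per-point snapping error. Recall that each original point $\bmp_i$ is mapped to $\bmq_i = (1+\eps)^l \bme$, where $\bme$ is the nearest lattice direction to $(\bmp_i-\bmp_c)/\norm{\bmp_i-\bmp_c}$ on the unit $d$-cube and $(1+\eps)^l$ is the largest such power at most $\norm{\bmp_i-\bmp_c}$. The lattice resolution $\eps/\sqrt{d}$ per coordinate gives, after rescaling back to length $\norm{\bmp_i-\bmp_c}$, a direction error of at most $\tfrac{\eps}{2}\norm{\bmp_i-\bmp_c}$, while the geometric rounding of the norm contributes an additional magnitude error of at most $\eps \norm{\bmp_i-\bmp_c}$. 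Combining the two by the triangle inequality, I get $\norm{\bmp_i - \bmq_i} \leq c\eps \norm{\bmp_i - \bmp_c}$ for a small absolute constant $c$. Because $\bmp_c \in P$, both $\bmp_c$ and $\bmp_i$ lie inside the optimal ball $(\bmc^*, r^*)$, so by the triangle inequality $\norm{\bmp_i - \bmp_c} \leq 2r^*$, and hence $\norm{\bmp_i - \bmq_i} \leq 2c\eps\, r^*$ uniformly in $i$.

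With this bound in hand the conclusion is a one-line triangle inequality: for every snapped point $\bmq_i$,
\[
    d(\bmc^*, \bmq_i) \;\leq\; d(\bmc^*, \bmp_i) + d(\bmp_i, \bmq_i) \;\leq\; r^* + 2c\eps\, r^* \;=\; (1+O(\eps))\, r^*,
\]
so the ball $(\bmc^*, (1+O(\eps))r^*)$ encloses all of $Q$, and minimality of $(\bmc,r)$ as the MEB of $Q$ yields $r \leq (1+O(\eps))r^*$. The main obstacle is the careful bookkeeping of the snapping error: it mixes an additive angular error coming from the cubical lattice with a multiplicative magnitude error from the $(1+\eps)$-geometric rounding, and the bound must be stated \emph{relative to} $\norm{\bmp_i - \bmp_c}$ rather than some absolute quantity so that the subsequent reduction to $r^*$ (through the observation $\norm{\bmp_i - \bmp_c} \leq 2r^*$, which crucially uses $\bmp_c \in P$) goes through cleanly. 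Once that per-point estimate is in place, the rest of the proof is automatic.
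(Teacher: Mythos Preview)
Your proposal is correct and follows essentially the same argument as the paper: bound the per-point snapping error by $O(\eps)\norm{\bmp_i-\bmp_c}$, use $\bmp_c\in P$ to turn this into $O(\eps)r^*$, then enclose $Q$ in $(\bmc^*,(1+O(\eps))r^*)$ and invoke minimality of the MEB of $Q$. The paper simply reuses the explicit bound $d(\bmp_i,\bmq_j)\leq 2\eps\norm{\bmp_i}$ already derived in the proof of Lemma~\ref{containslemma} rather than rederiving the angular-plus-radial decomposition, but the structure and the constants are the same.
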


Together, these lemmas show that the ball $(\bmc, (1 + 4 \eps)r)$ is an enclosing ball for $P$, and is within $1 + O(\eps)$ of the MEB for $P$, which together show~\ref{P3}.

Thus, our algorithm can be utilized to solve the $(1 + 4 \eps)$-approximation MEB problem. Notice that a solution to the MEB problem is a ball $(\bmc, r)$ and can thus be stored in $2$ words of space. Further, violator checks can easily be done on a stored solution by checking for a snapped point $\bmq$, whether $d(\bmc, \bmq) \leq r$, giving us $T_V = O(1)$ for the MEB problem. The MEB of a sample of $m$ points can be found in $O \paren{ \paren{m + d}^3}$~\cite{YT89}, which gives us $T_B$ for the MEB problem. Therefore, we can achieve the following result for the MEB problem, using the results from Theorems~\ref{multipassresult},~\ref{turnstileresult}, and~\ref{coordinatorresult}.

\begingroup \abovedisplayskip=2pt \belowdisplayskip=0pt
\begin{theorem}\label{MEBresult}
    For any $s \in [1, d \log \paren{\sfrac{1}{\eps}}]$, there exists a randomized algorithm to compute a $(1 + 4\eps)$-approximation solution to the MEB problem with high probability in the following models, where $N = \ceil{\log_{1 + \eps} r_m} \paren{1 + \frac{2\sqrt{d}}{\eps}}^d$.
    \begin{itemize}
        \item Multipass Streaming: An $O\paren{d s}$ pass algorithm with 
        \[
            O \paren{d^7 s N^{3/s}}\cdot \polylogof{d, N} + O \paren{ds + d^2 N^{\sfrac{1}{s}} \log \paren{d N^{\sfrac{1}{s}}}}
        \]
        space complexity in words and $O \paren{d s \paren{n + \paren{m + d}^3}}$ time complexity.
        
        \item Strict Turnstile: An $O\paren{d s + \log \log \paren{\Delta d}}$ pass algorithm with 
        \[
            O \paren{d^7 s N^{3/s}}\cdot \polylogof{d, N} + O \paren{ds + d^2 N^{\sfrac{1}{s}} \log \paren{d N^{\sfrac{1}{s}}}}
        \]
        space complexity in words and $O \paren{d s \paren{n + \paren{m + d}^3}}$ time complexity.
        
        \item Coordinator / Parallel Computation: An algorithm with $O\paren{d s}$ rounds of communication and
        \[
            O \paren{d^2 N^{\sfrac{1}{s}} \log \paren{d N^{\sfrac{1}{s}}} + k}
        \]
        load in words. The local computation time of the coordinator is $O \paren{d s \paren{\paren{m + d}^3 + k}}$, and the local computation time of each machine $i$ is $O \paren{d s n_i}$ where $n_i = \abs{P_i}$.
    \end{itemize}
\end{theorem}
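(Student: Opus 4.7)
The plan is to verify that the MEB problem fits the general framework of properties~\ref{P1}--\ref{P3}, extract the parameters $\nu$, $\lambda$, $S_f$, $T_V$, $T_B$ required by Theorems~\ref{multipassresult},~\ref{turnstileresult}, and~\ref{coordinatorresult}, and then substitute directly into those theorems. Properties~\ref{P1} and~\ref{P2} are immediate once I associate each point $\bmp$ with the family $S_{\bmp}$ of closed $d$-balls containing $\bmp$: the MEB of any point set is the radius-minimal element of the intersection of the corresponding families, so monotonicity and locality both follow. The combinatorial dimension is $\nu = d+1$ (an MEB in $\bbR^d$ is pinned down by at most $d+1$ boundary points, and this is tight for the regular simplex), and the VC dimension of the set system of $d$-balls is $\lambda = d+1$ by~\cite{VC15}. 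A ball is stored in $S_f = O(1)$ words, a violator check is a single distance comparison so $T_V = O(1)$, and the MEB of an $m$-point sample can be computed in $T_B = O((m+d)^3)$ time~\cite{YT89}.

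The main technical step is establishing property~\ref{P3} via Lemmas~\ref{containslemma} and~\ref{nottoobiglemma}. For Lemma~\ref{containslemma}, I would use that our snapping construction guarantees $\norm{\bmp - \bmq} \leq O(\eps) \norm{\bmp - \bmp_c}$ for every original point $\bmp \in P$ and its snapped image $\bmq \in Q$, since the direction of $\bmp - \bmp_c$ is pushed to a lattice $\eps/2$-net on the unit sphere and its norm is rounded down by at most a factor of $(1+\eps)$. Combined with the triangle inequality and the fact that $d(\bmc, \bmq) \leq r$ because $\bmq$ lies in the MEB of $Q$, this gives $d(\bmc, \bmp) \leq r + O(\eps) \norm{\bmp - \bmp_c}$; bounding $\norm{\bmp - \bmp_c}$ by $O(r)$ using $\bmp_c \in P \subseteq Q$ (so $d(\bmc, \bmp_c) \leq r$) yields the claimed $(1 + 4\eps) r$ bound. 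For Lemma~\ref{nottoobiglemma}, I would let $(\bmc^\star, r^\star)$ be the MEB of the original points and observe that the same snapping bound together with $\norm{\bmp - \bmp_c} \leq 2 r^\star$ shows that the ball $(\bmc^\star, (1 + O(\eps)) r^\star)$ encloses every snapped point; minimality of the MEB of $Q$ then gives $r \leq (1 + O(\eps)) r^\star$.

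The hardest part will be the triangle-inequality bookkeeping in the two lemmas: the per-point snapping error scales with $\norm{\bmp - \bmp_c}$ rather than with the true optimal radius $r^\star$ (or the MEB radius $r$ of the snapped points), so I need to control the former by the latter uniformly across the input. Since $\bmp_c \in P$ is itself an input point, both $\bmp_c$ and $\bmp$ lie inside any enclosing ball, and so $\norm{\bmp - \bmp_c} \leq 2 r^\star$ and $\norm{\bmp - \bmp_c} \leq 2r + O(\eps)\norm{\bmp - \bmp_c}$ (from which $\norm{\bmp - \bmp_c} = O(r)$) follow by triangle inequality. Threading this bound through the two lemmas simultaneously ensures that each approximation factor is $1 + O(\eps)$ and, by a careful accounting of constants in the lattice construction, concretely at most $1 + 4\eps$ on the output side.

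Finally, the three bullet points follow by direct substitution. Plugging $\nu, \lambda = O(d)$, $S_f = O(1)$, $T_V = O(1)$, and $T_B = O((m+d)^3)$ into Theorem~\ref{multipassresult} produces the multipass streaming bounds: the space terms collapse to $O(d^7 s N^{3/s}) \cdot \polylogof{d, N}$ from the estimators and samplers, $O(ds)$ from the stored past solutions, and $O(d^2 N^{1/s} \log(d N^{1/s}))$ from the current sample, while the time complexity becomes $O(ds(n + (m+d)^3))$. Theorem~\ref{turnstileresult} adds the $O(\log\log(\Delta d))$ passes needed to locate the net center $\bmp_c$ and an $O(1)$-approximation to the maximum norm via $\ell_0$ sampling. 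Theorem~\ref{coordinatorresult} yields the coordinator and parallel-computation bounds, with the additive $+k$ term in the load absorbing the per-machine broadcast of weights and of the current solution $f(B)$, completing the reduction from the general framework to the MEB-specific statement.
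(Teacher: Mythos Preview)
Your proposal is correct and follows essentially the same route as the paper: verify properties~\ref{P1}--\ref{P3} for MEB via Lemmas~\ref{containslemma} and~\ref{nottoobiglemma}, read off $\nu=\lambda=d+1$, $S_f=O(1)$, $T_V=O(1)$, $T_B=O((m+d)^3)$, and substitute into Theorems~\ref{multipassresult},~\ref{turnstileresult},~\ref{coordinatorresult}. One small wording fix: $P\not\subseteq Q$ as sets; the reason $\bmp_c$ lies in the MEB of $Q$ is that the net center is fixed under snapping (so $\bmp_c\in Q$), which is precisely the justification the paper uses.
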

\endgroup

\subsection{Solving the Linear SVM Problem}\label{svmapplication}
A seperate application of our algorithm is as a Monte Carlo application in solving the Linear SVM Problem. In the problem, we are given $n$ points $\bmp_i = \paren{\bmx_i, y_i}$ where \linebreak${\bmx_i \in \cbrak{-1, -1 + \frac{1}{\Delta}, \hdots, 1}^d}$ and $y_i \in \cbrak{-1, +1}$. We are also given some $\gamma > 0$ for which the points are either inseparable or $\gamma$-separable. The objective of the problem is to determine either that the points are inseparable, or to compute a $d$-dimensional hyperplane $(\bmu,b)$ where $b$ is the bias term, which separates the points with the largest margin, i.e.\ to solve the quadratic optimization problem
\begin{equation} \label{originalSVM}
    \min_{\bmu \in \bbR^d} \norm{\bmu}^2 \quad \text{subject to} \quad y_i \paren{\bmu^T \bmx_i - b} \geq 1 \quad \text{for all} \quad i\in \brak{n}.
\end{equation}
The linear SVM problem is an LP-type problem where $S$ is the set of tuples of $d$-dimensional points and $\pm 1$ $y$ values, and $f(\cdot)$ is the function that maps from a set of points to their optimal separating hyperplane, or to the value infeasible. The combinatorial dimension of the linear SVM problem is $\nu = d + 1$ for a separable set, and $\nu = d + 2$ for an inseparable set. For a separable set, this can be seen as any basis is in the form of a hyperplane with $y = \pm 1$ and a point with $y = \mp 1$. For an inseparable set, another point with $y = \mp 1$ on the opposite side of the hyperplane is needed. Thus, $\nu \leq d + 2$. Further, the VC dimension of the linear SVM set system is $\lambda = d + 1$~\cite{VC15}.

For the linear SVM problem, there is an additional hurdle to overcome. That is, if the points are too close together, running our algorithm with a metric \epsnet{} might not retain separability. Therefore, we instead create a metric $\frac{\eps \gamma}{2}$-net centered at the origin, where $\eps < \frac{1}{2}$. This means that we have snapped points $\bmq_i = (\bmz_i, y_i)$ where $\bmz_j = \bmx_i + \bme_i$ for some metric net point $\bme_i$ with $\norm{\bme_i} \leq \frac{\eps \gamma}{2}$. Our algorithm then finds the hyperplane that solves the quadratic optimization problem
\begin{equation} \label{snappedSVM}
    \min_{\bmu \in \bbR^d} \norm{\bmu}^2 \quad \text{subject to} \quad y_i \paren{\bmu^T \bmz_i - b} \geq 1 \quad \text{for all} \quad i\in \brak{N}.
\end{equation}
The added benefit of this is that we do not need to use the additional search in the turnstile model, and can use the same metric net. We also run our algorithm as a Monte Carlo algorithm, where if we do not find a solution in the first $O(d s)$ iterations, we return that the point set is not separable. This gives us an algorithm with a one-sided error. That is, if a point set is inseparable, we will always output as such. If it is separable however, we will output a $(1 + O(\eps))$-approximation for the optimal separating hyperplane with high probability.

Given this new way of running our algorithm, we now show that the linear SVM problem satisfies the properties~\ref{P1},~\ref{P2}, and~\ref{P3}. As a constrained optimization problem, it satisfies~\ref{P1} and~\ref{P2} naturally. Each point is associated with the set of hyperplanes $(\bmu, b)$ that satisfy its constraint, and for a set of constraints, the solution minimizes $\norm{\bmu}^2$ over hyperplanes that satisfy all constraints. Finally, for~\ref{P3}, we show that for the optimal hyperplane $(\bmu,b)$ separating a set of snapped points $Q$, the hyperplane $((1 + 2\eps)\bmu, (1+2\eps)b)$ is a $(1 + O(\eps))$-approximation of the optimal hyperplane separating the set of original points $P$. We show this using two lemmas.

\begin{restatable}{lemma}{SVMcontainsrestatable} \label{SVMcontainslemma}
    The hyperplane $((1 + 2\eps)\bmu, (1+2\eps)b)$ separates all original points $\bmp_i \in P$, i.e., $y_i \paren{\paren{1 + 2\eps}\bmu^T \bmx_i - \paren{1 + 2\eps}b} \geq 1 \text{ for all } i\in \brak{n}$.
\end{restatable}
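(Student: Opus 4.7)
The plan is to use the optimality of $(\bmu, b)$ for $Q$ together with a bound on $\|\bmu\|$ coming from the $\gamma$-separability of $P$, and then push the approximation through via the $(1+2\eps)$ scaling. First, since $(\bmu, b)$ is feasible for problem~(\ref{snappedSVM}), I would write $y_i(\bmu^T \bmz_i - b) \geq 1$ for each $i$, substitute $\bmz_i = \bmx_i + \bme_i$ with $\|\bme_i\| \leq \eps\gamma/2$, and apply Cauchy--Schwarz to obtain
\[
y_i(\bmu^T \bmx_i - b) \;\geq\; 1 - |\bmu^T \bme_i| \;\geq\; 1 - \|\bmu\| \cdot \tfrac{\eps\gamma}{2}.
\]

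The main obstacle, and the step I would spend the most care on, is bounding $\|\bmu\|$ in terms of $\gamma$. My plan is to use the optimal separator $(\bmu^*, b^*)$ of $P$, which by $\gamma$-separability satisfies $\|\bmu^*\| \leq 1/\gamma$, and show it is ``approximately feasible'' for the snapped problem. Concretely, $y_i(\bmu^{*T}\bmz_i - b^*) = y_i(\bmu^{*T}\bmx_i - b^*) + y_i \bmu^{*T}\bme_i \geq 1 - \|\bmu^*\| \cdot \eps\gamma/2 \geq 1 - \eps/2$, so rescaling by $1/(1-\eps/2)$ yields a genuinely feasible point for~(\ref{snappedSVM}) of norm at most $(1+\eps)/\gamma$ (for $\eps \leq 1$). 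By optimality of $\bmu$, this gives $\|\bmu\| \leq (1+\eps)/\gamma$.

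Plugging this bound back into the first inequality, we get $y_i(\bmu^T \bmx_i - b) \geq 1 - (1+\eps)\eps/2$. Multiplying through by $(1+2\eps)$ then produces
\[
y_i\bigl((1+2\eps)\bmu^T \bmx_i - (1+2\eps)b\bigr) \;\geq\; (1+2\eps)\bigl(1 - \tfrac{(1+\eps)\eps}{2}\bigr),
\]
and a short expansion shows the right-hand side is at least $1$ for all sufficiently small $\eps$ (e.g.\ $\eps \leq 1/2$, which is already assumed in the construction of the $\eps\gamma/2$-net). This establishes that $((1+2\eps)\bmu, (1+2\eps)b)$ separates every original point $\bmp_i \in P$ with the required margin, completing the proof. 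The only really delicate piece is the norm bound on $\bmu$; once that is in hand, the rest is a direct Cauchy--Schwarz perturbation argument.
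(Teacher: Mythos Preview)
Your proof is correct and shares the same opening Cauchy--Schwarz perturbation and final $(1+2\eps)$ rescaling with the paper, but you take a different route to bound $\|\bmu\|$. The paper argues geometrically: since the original points are $\gamma$-separable and snapping moves each point by at most $\eps\gamma/2$, the snapped points remain $(\gamma-\eps\gamma)$-separable, hence $\gamma/2$-separable for $\eps\le 1/2$, which directly yields $\|\bmu\|\le 4/\gamma$ from the margin interpretation. You instead pull back the optimal separator $(\bmu^*,b^*)$ of $P$, show its rescaling is feasible for the snapped problem, and invoke optimality of $\bmu$ to conclude $\|\bmu\|\lesssim 1/\gamma$. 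Your route gives a slightly tighter constant and is essentially the argument the paper uses for the companion Lemma~\ref{SVMnottoobiglemma}, so you have in effect proved both lemmas in one pass; the paper's geometric argument is more self-contained for this lemma alone but then repeats the optimality idea in the next lemma. One minor caveat: under the paper's convention $\gamma$-separable means $2/\|\bmu^*\|\ge\gamma$, so $\|\bmu^*\|\le 2/\gamma$ rather than $1/\gamma$; this only shifts constants and does not affect the validity of your argument.
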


\begin{restatable}{lemma}{SVMnottoobigrestatable} \label{SVMnottoobiglemma}
    The objective of Problem~\ref{snappedSVM} is no larger than $(1 + O(\eps))$ times the objective of Problem~\ref{originalSVM}, i.e., $\norm{\bmu}^2 \leq (1 + O(\eps))\norm{\bmu^*}^2$.
\end{restatable}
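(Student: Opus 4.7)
The plan is to exhibit a feasible solution for the snapped quadratic program (Problem~\ref{snappedSVM}) whose squared norm is at most $(1+O(\eps))\norm{\bmu^*}^2$; since $(\bmu,b)$ is the \emph{optimum} of Problem~\ref{snappedSVM}, this will immediately yield $\norm{\bmu}^2 \leq (1+O(\eps))\norm{\bmu^*}^2$. Concretely, I will take the original-problem optimum $(\bmu^*,b^*)$, evaluate its ``slack'' against the snapped constraints, and then rescale it to obtain feasibility on $Q$.

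First, I would quantify $\norm{\bmu^*}$ using the margin assumption. Because $P$ is $\gamma$-separable, the optimal margin is at least $\gamma$, and in the canonical scaling $y_i(\bmu^{*T}\bmx_i - b^*) \geq 1$ the margin equals $1/\norm{\bmu^*}$; hence $\norm{\bmu^*} \leq 1/\gamma$. Next, for each $i$ write $\bmz_i = \bmx_i + \bme_i$ with $\norm{\bme_i} \leq \eps\gamma/2$, and compute
\begin{equation*}
    y_i\paren{\bmu^{*T}\bmz_i - b^*} = y_i\paren{\bmu^{*T}\bmx_i - b^*} + y_i\,\bmu^{*T}\bme_i \geq 1 - \abs{\bmu^{*T}\bme_i} \geq 1 - \norm{\bmu^*}\norm{\bme_i} \geq 1 - \tfrac{\eps}{2},
\end{equation*}
where I used $y_i\in\{\pm 1\}$, Cauchy--Schwarz, and the two bounds above.

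Now define the rescaled candidate $(\bmu',b') := \tfrac{1}{1-\eps/2}(\bmu^*,b^*)$. By the previous display, for every $i\in[N]$,
\begin{equation*}
    y_i\paren{\bmu'^T \bmz_i - b'} = \tfrac{1}{1-\eps/2}\, y_i\paren{\bmu^{*T}\bmz_i - b^*} \geq 1,
\end{equation*}
so $(\bmu',b')$ is feasible for Problem~\ref{snappedSVM}. Its objective value is
\begin{equation*}
    \norm{\bmu'}^2 = \tfrac{1}{(1-\eps/2)^2}\norm{\bmu^*}^2 \leq (1 + O(\eps))\norm{\bmu^*}^2
\end{equation*}
for $\eps < 1/2$. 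Since $(\bmu,b)$ minimizes $\norm{\cdot}^2$ over the feasible set of Problem~\ref{snappedSVM}, we conclude $\norm{\bmu}^2 \leq \norm{\bmu'}^2 \leq (1 + O(\eps))\norm{\bmu^*}^2$.

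The only nontrivial step is the margin-to-norm conversion $\norm{\bmu^*}\leq 1/\gamma$, which is what makes the net resolution $\eps\gamma/2$ the ``right'' choice: it ensures the perturbation $\bmu^{*T}\bme_i$ is bounded by $\eps/2$ independently of $\gamma$, so a single uniform rescaling restores feasibility at a multiplicative cost $1+O(\eps)$ in the squared norm. I do not anticipate a subtle obstacle here; the argument is the standard ``perturb-and-rescale'' complement to Lemma~\ref{SVMcontainslemma}, and together they establish property~\ref{P3} for the linear SVM problem.
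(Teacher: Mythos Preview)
Your proposal is correct and follows essentially the same perturb-and-rescale strategy as the paper: show that a $(1+O(\eps))$-scaling of $(\bmu^*,b^*)$ is feasible for the snapped problem, then invoke optimality of $(\bmu,b)$. The only minor discrepancy is the margin convention---the paper uses the full margin $2/\norm{\bmu^*}\geq\gamma$ (cf.\ the proof of Lemma~\ref{SVMcontainslemma}), giving $\norm{\bmu^*}\leq 2/\gamma$ rather than your $1/\gamma$---but this changes the perturbation bound from $\eps/2$ to $\eps$ and leaves the $1+O(\eps)$ conclusion intact.
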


Together, these lemmas show that the hyperplane $((1 + 2\eps)\bmu, (1+2\eps)b)$ is a separating hyperplane for $P$, and is within $1 + O(\eps)$ of the objective function for Problem~\ref{originalSVM}, which together show~\ref{P3}.

Thus, our algorithm can be utilized to solve the $(1 + 2 \eps)$-approximation linear SVM problem. Notice that a solution to the linear SVM problem is a hyperplane $(\bmu, b)$ and can thus be stored in $2$ words of space. Further, violator checks can easily be done on a stored solution by checking for a snapped point $\bmq = (\bmz, y)$, whether $y \paren{\bmu^T \bmz - b} \geq 1$, giving us $T_V = O(1)$ for the linear SVM problem. The optimal separating hyperplane of a sample of $m$ points can be found in $O \paren{ \paren{m + d}^3}$~\cite{YT89}, which gives us $T_B$ for the linear SVM problem. Therefore, we can achieve the following result for the linear SVM problem, using the results from Theorems~\ref{multipassresult},~\ref{turnstileresult}, and~\ref{coordinatorresult}.

\begingroup \abovedisplayskip=2pt \belowdisplayskip=0pt
\begin{theorem}\label{SVMresult}
    For any $s \in [1, d \log \paren{\sfrac{1}{\eps}}]$, there exists a randomized algorithm to compute a $(1 + 2\eps)$-approximation solution to the linear SVM problem with high probability in the following models, where $N = \paren{1 + \frac{4\sqrt{d}}{\eps \gamma}}^d$.
    \begin{itemize}
        \item Multipass Streaming: An $O\paren{d s}$ pass algorithm with
        \[
            O \paren{d^7 s N^{3/s}}\cdot \polylogof{d, N} + O \paren{ds + d^2 N^{\sfrac{1}{s}} \log \paren{d N^{\sfrac{1}{s}}}}
        \]
        space complexity in words and $O \paren{d s \paren{n + \paren{m + d}^3}}$ time complexity.
        
        \item Strict Turnstile: An $O\paren{d s}$ pass algorithm with
        \[
            O \paren{d^7 s N^{3/s}}\cdot \polylogof{d, N} + O \paren{ds + d^2 N^{\sfrac{1}{s}} \log \paren{d N^{\sfrac{1}{s}}}}
        \]
        space complexity in words and $O \paren{d s \paren{n + \paren{m + d}^3}}$ time complexity.
        
        \item Coordinator / Parallel Computation: An algorithm with $O\paren{d s}$ rounds of communication and
        \[
            O \paren{d^2 N^{\sfrac{1}{s}} \log \paren{d N^{\sfrac{1}{s}}} + k}
        \]
        load in words. The local computation time of the coordinator is $O \paren{d s \paren{\paren{m + d}^3 + k}}$, and the local computation time of each machine $i$ is $O \paren{d s n_i}$ where $n_i = \abs{P_i}$.
    \end{itemize}
\end{theorem}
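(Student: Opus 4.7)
The plan is to verify that the linear SVM problem fits the general LP-type framework of Theorem~\ref{mainresult} under a slightly modified net construction, and then instantiate the three model-specific bounds from Theorems~\ref{multipassresult},~\ref{turnstileresult}, and~\ref{coordinatorresult}.

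First, I would establish the problem structure. The linear SVM problem is naturally LP-type: property~\ref{P1} holds since each labeled point $\bmp_i = (\bmx_i, y_i)$ is associated with the set of hyperplanes $(\bmu, b)$ satisfying its margin constraint $y_i(\bmu^T \bmx_i - b) \geq 1$, and property~\ref{P2} holds because the optimal separating hyperplane minimizes $\norm{\bmu}^2$ over the intersection of these feasibility sets. For the combinatorial dimension, a separable instance is defined by at most $d+1$ support vectors and an inseparable instance by at most $d+2$, so $\nu \leq d+2$; and the VC dimension of the halfspace set system is $\lambda = d+1$. Property~\ref{P3} is the content of Lemmas~\ref{SVMcontainslemma} and~\ref{SVMnottoobiglemma}, which together imply that from the optimal hyperplane $(\bmu, b)$ for the snapped points $Q$, the scaled hyperplane $((1+2\eps)\bmu, (1+2\eps)b)$ separates all original points in $P$ while its squared norm is within a $(1+O(\eps))$ factor of the true optimum.

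Next, I would handle the net construction and the Monte Carlo treatment of inseparable instances. Because snapping must preserve $\gamma$-separability, we use a metric $\frac{\eps\gamma}{2}$-net centered at the origin rather than around an input point; since the input already lies in the bounded cube $\cbrak{-1, -1 + \frac{1}{\Delta}, \hdots, 1}^d$, the net size is $N = \paren{1 + \frac{4\sqrt{d}}{\eps\gamma}}^d$, and no initial centering pass over the stream is required. This is precisely why the strict turnstile bound here avoids the $\log\log(\Delta d)$ overhead seen in Theorem~\ref{turnstileresult}. For inseparability: if the instance is inseparable, every candidate basis will fail the feasibility check, so after the $O(ds)$ iterations guaranteed by Claim~\ref{numiterclaim} we certify inseparability with high probability; otherwise the algorithm converges to a basis and we return the scaled hyperplane. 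This yields the claimed one-sided error.

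Finally, I would compute the per-iteration costs and substitute. A hyperplane solution $(\bmu, b)$ fits in $O(1)$ words, so $S_f = O(1)$. Violator checks reduce to evaluating $y(\bmu^T \bmz - b) \geq 1$ on a snapped point $\bmq = (\bmz, y)$, giving $T_V = O(1)$. Computing the maximum-margin hyperplane of an $m$-point sample is a convex quadratic program solvable in $O((m+d)^3)$ time by the classical interior-point method of~\cite{YT89}, giving $T_B = O((m+d)^3)$. Substituting these, together with $\nu, \lambda \in O(d)$ and the stated $N$, into Theorems~\ref{multipassresult},~\ref{turnstileresult}, and~\ref{coordinatorresult} yields each of the three bulleted complexity statements. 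The main obstacle will be the careful bookkeeping in Lemma~\ref{SVMnottoobiglemma}: to show that the squared-norm objective is preserved up to $(1+O(\eps))$ under the combined $\frac{\eps\gamma}{2}$-snapping and $(1+2\eps)$-rescaling, one must leverage the assumption that the optimal hyperplane has margin at least $\gamma$, so that the additive $\eps\gamma$-scale perturbation to each constraint is absorbed by the margin rather than the objective. Everything else is a routine instantiation of the general framework.
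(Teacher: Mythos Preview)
Your proposal is correct and follows essentially the same approach as the paper: verify properties~\ref{P1}--\ref{P3} (the latter via Lemmas~\ref{SVMcontainslemma} and~\ref{SVMnottoobiglemma}), use an origin-centered $\frac{\eps\gamma}{2}$-net to get $N = \paren{1 + \frac{4\sqrt{d}}{\eps\gamma}}^d$ and avoid the turnstile centering overhead, run Monte Carlo for inseparability, and then substitute $\nu,\lambda \in O(d)$, $S_f = O(1)$, $T_V = O(1)$, and $T_B = O((m+d)^3)$ from~\cite{YT89} into Theorems~\ref{multipassresult},~\ref{turnstileresult}, and~\ref{coordinatorresult}. Your identification of the margin-absorption argument as the key content of Lemma~\ref{SVMnottoobiglemma} is also on point.
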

\endgroup

\subsection{Solving Bounded Linear Programming Problems up to Additive Epsilon Error}
So far, we have dealt with applications with multiplicative error, i.e.\ where our solution is within $(1 + O(\eps))$ times the optimal solution. We can also use our algorithm to solve problems up to additive error. One application for this usage is in certain linear programming applications we will explore. Linear programs in general are optimization problems of the form
\begin{equation} \label{originalLP}
    \max_{x \in \bbR^d} \bmc^T \bmx \quad \text{subject to} \quad \bma_i^T \bmx \leq b_i \quad \text{for all} \quad i\in \brak{n}
\end{equation}
where the input is the objective vector $\bmc$ as well as $n$ input constraints $\bmp_i = \paren{\bma_i, b_i}$.

We will work with a certain class of linear programs, where the input points, as well as the solution vector $\bmx$ is bounded, i.e., for all constraints $i$, $\norm{\bma_i}, \abs{b_i} \in O(1)$, as well as $\norm{\bmc}, \norm{\bmx} \in O(1)$. Given this, we can run our algorithm by snapping each constraint point $\bmp_i$ to a metric $O(\eps)$-net to get $\bma_i' = \bma_i + \bme_i$ where $\norm{\bme_i} \leq O(\eps)$ and $b_i' = b_i + f_i$ where $\abs{f_i} \leq O(\eps)$. As a note, this construction means that we require no additional passes in the turnstile model to create our net, similar to the linear SVM problem. This snapping then gives us the LP
\begin{equation} \label{snappedLP}
    \max_{\bmx \in \bbR^d} \bmc^T \bmx \quad \text{subject to} \quad \bma_i'^T \bmx \leq b_i' \quad \text{for all} \quad i\in \brak{N}.
\end{equation}
The optimal solution to LP~\eqref{snappedLP}, $\bmx$, then gives an additive $\eps$-approximation solution to LP~\eqref{originalLP} as well.

Linear programs are the eponymous LP-type program, where $S$ is the set of constraints of the LP and $f(\cdot)$ is the function that maps the set of constraints to their optimal solution, or to the value infeasible. An important note is that there might be multiple optimal solutions for a set of constraints, in which case any tie-breaking system may be used (a common one is to take the lexicographically smallest optimal point). The combinatorial dimension of linear programming is $\nu = d$, as $d$ is the number of variables of the LP, and the VC dimension is $\lambda = d + 1$, since each constraint for an LP induces a feasible half-space~\cite{MSW96, VC15}.

We now show that linear programs satisfy the properties~\ref{P1},~\ref{P2}, and~\ref{P3}. As a constrained optimization problems, they satisfy~\ref{P1} and~\ref{P2} naturally. Each constraint is associated with the halfspace that satisfies it, and for a set of constraints, the objective function is maximized over points that satisfy all constraints. Finally, we show a modified~\ref{P3} as we are now deal with additive approximations. We show that for the optimal solution $\bmc^T\bmx$ satisfying a set of snapped constraints $Q$, the solution $\bmx$ gives an additive $\eps$-approximation to the optimal solution $\bmc^T \bmx^*$ approximately satisfying the original constraints $P$. We show this using two lemmas.

\begin{restatable}{lemma}{LPcontainsrestatable} \label{LPcontainslemma}
    The solution $\bmx$ approximately satisfies all original constraint points $\bmp_i \in P$, i.e., $\bma_i^T \bmx \leq b_i + O(\eps)$ for all $i\in \brak{n}$.
\end{restatable}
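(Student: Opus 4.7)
The plan is to exploit the fact that $\bmx$ is feasible for the snapped LP~\eqref{snappedLP} together with the fact that the snapped data differs from the original data by an additive $O(\eps)$ perturbation, and then to use the boundedness assumptions on $\bmx$, the $\bma_i$'s, and the $b_i$'s to transfer this perturbation into an additive $O(\eps)$ slack in the original constraints.

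First I would write down the feasibility relation. Since $\bmx$ is optimal for~\eqref{snappedLP} (in particular feasible), we have $\bma_i'^T \bmx \leq b_i'$ for every $i \in [N]$. Substituting the definitions $\bma_i' = \bma_i + \bme_i$ and $b_i' = b_i + f_i$, this rearranges to
\[
\bma_i^T \bmx \;\leq\; b_i + f_i - \bme_i^T \bmx.
\]
So it suffices to show that the right-hand side slack $f_i - \bme_i^T \bmx$ is bounded by $O(\eps)$ in absolute value.

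Next I would bound the two error terms separately. The term $f_i$ satisfies $\abs{f_i} \leq O(\eps)$ directly by the snapping construction. For $\bme_i^T \bmx$, I would apply Cauchy--Schwarz to get $\abs{\bme_i^T \bmx} \leq \norm{\bme_i} \cdot \norm{\bmx}$, and then invoke the bounded LP assumptions that $\norm{\bme_i} \leq O(\eps)$ and $\norm{\bmx} \in O(1)$ to conclude $\abs{\bme_i^T \bmx} \leq O(\eps)$. Combining these gives $\bma_i^T \bmx \leq b_i + O(\eps)$.

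The one subtlety worth being explicit about is that the statement quantifies over the original constraints $\bmp_i \in P$, i.e., over $i \in [n]$, while the feasibility of $\bmx$ is stated for the snapped constraints, i.e., $i \in [N]$. Because each original point $\bmp_i$ is snapped to some net point $\bmq_j \in Q$ with $\norm{\bma_i - \bma_j'} \leq O(\eps)$ and $\abs{b_i - b_j'} \leq O(\eps)$, the same argument applies using the pair $(\bme, f)$ that relates $\bmp_i$ to its snapped image: feasibility of $\bmx$ at $\bmq_j$ yields $\bma_j'^T \bmx \leq b_j'$, and then the triangle inequality together with Cauchy--Schwarz and the $O(1)$ bound on $\norm{\bmx}$ gives $\bma_i^T \bmx \leq b_i + O(\eps)$. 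This indexing bookkeeping is the main (and only modest) obstacle; the arithmetic is otherwise immediate from the boundedness hypotheses baked into the problem statement.
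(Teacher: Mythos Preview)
Your proposal is correct and follows essentially the same approach as the paper: start from feasibility of $\bmx$ for the snapped LP, substitute $\bma_i' = \bma_i + \bme_i$ and $b_i' = b_i + f_i$, rearrange, and bound the two error terms via $\abs{f_i} \leq O(\eps)$ and Cauchy--Schwarz with $\norm{\bme_i} \leq O(\eps)$, $\norm{\bmx} \in O(1)$. Your final paragraph on the $[n]$ versus $[N]$ indexing is a point the paper glosses over (it simply writes ``for any $i$'' and treats $(\bma_i', b_i')$ as the snap of $(\bma_i, b_i)$), so your version is slightly more careful but not substantively different.
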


\begin{restatable}{lemma}{LPnottoobigrestatable} \label{LPnottoobiglemma}
    The objective of LP~\eqref{snappedLP} is at most $O(\eps)$ smaller than the objective of LP~\eqref{originalLP}, i.e., $\bmc^T \bmx \geq \bmc^T \bmx^* - O(\eps)$.
\end{restatable}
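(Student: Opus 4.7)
The plan is to exhibit a point $\tilde{\bmx}$ feasible for the snapped LP~\eqref{snappedLP} with $\bmc^T \tilde{\bmx} \geq \bmc^T \bmx^* - O(\eps)$; then optimality of $\bmx$ for the snapped LP gives $\bmc^T \bmx \geq \bmc^T \tilde{\bmx}$ and hence the lemma.

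First, I would show that $\bmx^*$ is itself $O(\eps)$-feasible for the snapped LP, by the same calculation used to prove Lemma~\ref{LPcontainslemma} but with the roles of the two LPs swapped. For any snapped constraint $(\bma_i', b_i') = (\bma_i + \bme_i, b_i + f_i)$ one writes
\[
\bma_i'^T \bmx^* - b_i' \;=\; (\bma_i^T \bmx^* - b_i) + (\bme_i^T \bmx^* - f_i) \;\leq\; 0 + \norm{\bme_i}\,\norm{\bmx^*} + \abs{f_i} \;=\; O(\eps),
\]
using feasibility of $\bmx^*$ for the original LP together with $\norm{\bme_i}, \abs{f_i} \leq O(\eps)$ and $\norm{\bmx^*} = O(1)$.

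Next, to turn this $O(\eps)$-feasibility into exact feasibility I would optimize over a tightened copy of the original LP, namely $\max \bmc^T \bmx$ subject to $\bma_i^T \bmx \leq b_i - C\eps$ for every original constraint, where $C$ is a constant chosen large enough that the same computation as above (now applied to an arbitrary feasible $\tilde{\bmx}$ of the tightened LP in the direction from the original to the snapped constraints) gives $\bma_i'^T \tilde{\bmx} \leq b_i'$. The optimum $\tilde{\bmx}$ of the tightened LP is therefore feasible for the snapped LP, yielding $\bmc^T \bmx \geq \bmc^T \tilde{\bmx}$.

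The main obstacle is the remaining sensitivity step, i.e.\ showing $\bmc^T \bmx^* - \bmc^T \tilde{\bmx} = O(\eps)$ when every right-hand side is shifted by $O(\eps)$. Under the problem's boundedness normalization I would take $\tilde{\bmx} = (1-\lambda)\bmx^* + \lambda \bmx_0$, where $\bmx_0$ is a reference point with $\Omega(1)$ slack in each original constraint and $\lambda = \Theta(\eps)$: linearity makes $\tilde{\bmx}$ feasible for the tightened LP, and $\norm{\bmc}, \norm{\bmx^*}, \norm{\bmx_0} = O(1)$ imply $\abs{\bmc^T \tilde{\bmx} - \bmc^T \bmx^*} \leq \lambda \norm{\bmc}\bigl(\norm{\bmx^*}+\norm{\bmx_0}\bigr) = O(\eps)$. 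The delicate part is producing such a Slater point $\bmx_0$; this has to be justified either from the normalization of the LP's coefficients (so that a nondegenerate feasible region contains a ball of $\Omega(1)$ radius) or through an LP-duality argument handling the case in which the feasible region is lower-dimensional.
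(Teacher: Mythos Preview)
Your first step---showing that $\bmx^*$ satisfies every snapped constraint up to $O(\eps)$ slack---is exactly what the paper does, via the same Cauchy--Schwarz computation with the roles of the two LPs swapped. From there the two arguments diverge slightly. The paper introduces the ``additive $\eps$-approximation of LP~\eqref{snappedLP}'' (i.e.\ the snapped LP with every right-hand side relaxed by $O(\eps)$), observes that $\bmx^*$ is feasible for it, and then asserts without further detail that relaxing the constraints by $O(\eps)$ can only raise the optimum by $O(\eps)$ because $\norm{\bmc}$ and $\norm{\bmx}$ are $O(1)$. You instead tighten the original LP, produce a point $\tilde{\bmx}$ that is \emph{exactly} feasible for the snapped LP, and then bound $\bmc^T\bmx^* - \bmc^T\tilde{\bmx}$ by an explicit convex-combination argument with a Slater point.

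Your route is the more rigorous of the two: the paper's sensitivity step is stated as a one-line heuristic (``can make the optimal point increase in norm by at most a $(1+O(\eps))$ multiplicative factor''), whereas you spell out what is actually needed. You are also right that the delicate part is the existence of a Slater point $\bmx_0$ with $\Omega(1)$ slack; this regularity assumption (or equivalently, bounded dual variables) is implicit in the paper's informal claim as well, so you have correctly located the same gap that the paper glosses over rather than introduced a new one. In short: same opening move, a cleaner and more explicit handling of the perturbation step on your side, and the same residual assumption about nondegeneracy of the feasible region.
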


Together, these lemmas show that the solution $\bmx$ satisfies the constraints $P$, and is within additive $O(\eps)$ of the objective function for LP~\eqref{originalLP}, which together show~\ref{P3}.

Thus, our algorithm can be utilized to solve additive $\eps$-approximation LP problems. Notice that a solution to an LP is a point $\bmx$ and can thus be stored in $1$ word of space. Further, violator checks can easily be done on a stored solution by checking whether it satisfies a given constraint, giving us $T_V = O(1)$ for LP problems. The optimal solution to an LP with $d$ variables and $m$ constraints can be found in $\Tilde{O}\paren{md + d^{2.5}}$~\cite{BLLSSSW21}, which gives us $T_B$ for LP problems. Therefore, we can achieve the following result for bounded LP problems, using the results from Theorems~\ref{multipassresult},~\ref{turnstileresult}, and~\ref{coordinatorresult}.

\begingroup \abovedisplayskip=2pt \belowdisplayskip=0pt
\begin{theorem}\label{LPresult}
    For any $s \in [1, d \log \paren{\sfrac{1}{\eps}}]$, there exists a randomized algorithm to compute an additive $\eps$-approximation solution to bounded LP problems with high probability in the following models, where $N \in O\paren{\frac{\sqrt{d}}{\eps}}^d$.
    \begin{itemize}
        \item Multipass Streaming: An $O\paren{d s}$ pass algorithm with
        \[
            O \paren{d^7 s N^{3/s}}\cdot \polylogof{d, N} + O \paren{ds + d^2 N^{\sfrac{1}{s}} \log \paren{d N^{\sfrac{1}{s}}}}
        \]
        space complexity in words and $\Tilde{O} \paren{d s \paren{n + md + d^{2.5}}}$ time complexity.
        
        \item Strict Turnstile: An $O\paren{d s}$ pass algorithm with
        \[
            O \paren{d^7 s N^{3/s}}\cdot \polylogof{d, N} + O \paren{ds + d^2 N^{\sfrac{1}{s}} \log \paren{d N^{\sfrac{1}{s}}}}
        \]
        space complexity in words and $\Tilde{O} \paren{d s \paren{n + md + d^{2.5}}}$ time complexity.
        
        \item Coordinator / Parallel Computation: An algorithm with $O\paren{d s}$ rounds of communication and
        \[
            O \paren{d^2 N^{\sfrac{1}{s}} \log \paren{d N^{\sfrac{1}{s}}} + k}
        \]
        load in words. The local computation time of the coordinator is $\Tilde{O} \paren{d s \paren{md + d^{2.5} + k}}$, and the local computation time of each machine $i$ is $O \paren{d s n_i}$ where $n_i = \abs{P_i}$.
    \end{itemize}
\end{theorem}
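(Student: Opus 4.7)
My plan is to mirror the structure already used for the MEB and linear SVM applications (Theorems~\ref{MEBresult} and~\ref{SVMresult}): verify that bounded LP fulfills the three framework properties~\ref{P1},~\ref{P2},~\ref{P3}, record the relevant LP-type parameters, and then invoke Theorems~\ref{multipassresult},~\ref{turnstileresult}, and~\ref{coordinatorresult} with these parameters substituted in. Properties~\ref{P1} and~\ref{P2} are immediate since each constraint is associated with its feasible halfspace and the LP optimum is, by definition, the objective-minimizing point in the intersection. The combinatorial and VC dimensions $\nu = d$ and $\lambda = d+1$ are standard~\cite{MSW96, VC15}; the net size for constraints in $[-1,1]^{d+1}$ snapped to an $O(\eps)$-net is $N \in O\paren{\sqrt{d}/\eps}^d$; a solution is a single word ($S_f = O(1)$); violator checks are a single inner-product comparison ($T_V = O(1)$); and solving an LP on the $m$-point sample costs $T_B = \Tilde{O}(md + d^{2.5})$ by~\cite{BLLSSSW21}.

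The meat of the proof is Property~\ref{P3}, carried by Lemmas~\ref{LPcontainslemma} and~\ref{LPnottoobiglemma}. For Lemma~\ref{LPcontainslemma}, I would use the snap decomposition $\bma_i' = \bma_i + \bme_i$, $b_i' = b_i + f_i$ with $\norm{\bme_i}, \abs{f_i} \in O(\eps)$. Given a feasible $\bmx$ in~\eqref{snappedLP}, write $\bma_i^T \bmx = \bma_i'^T \bmx - \bme_i^T \bmx \leq b_i' - \bme_i^T \bmx = b_i + f_i - \bme_i^T \bmx$, and apply Cauchy--Schwarz with the boundedness assumption $\norm{\bmx} \in O(1)$ to bound $\abs{\bme_i^T \bmx} + \abs{f_i} \in O(\eps)$. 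For Lemma~\ref{LPnottoobiglemma}, I would exhibit a feasible solution to~\eqref{snappedLP} whose objective value is within $O(\eps)$ of $\bmc^T \bmx^*$ by taking a slightly contracted version of $\bmx^*$ (say $(1 - c\eps)\bmx^*$ for a suitable constant $c$): the contraction leaves the objective within $O(\eps)$ of optimal since $\norm{\bmc}, \norm{\bmx^*} \in O(1)$, and it absorbs the $O(\eps)$ slack created by the snap on each constraint (one shows $\bma_i'^T (1 - c\eps)\bmx^* \leq b_i'$ using the original feasibility $\bma_i^T \bmx^* \leq b_i$, the $O(\eps)$ bounds on $\bme_i, f_i$, and boundedness of $\bmx^*$). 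Together the two lemmas show that the optimum of~\eqref{snappedLP} gives an additive $\eps$-approximation to the optimum of~\eqref{originalLP} whose feasibility error is also $O(\eps)$, which is precisely~\ref{P3}.

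Once Property~\ref{P3} is in place, the rest of the theorem is bookkeeping: substitute $\nu = d$, $\lambda = d+1$, $S_f = O(1)$, $T_V = O(1)$, and $T_B = \Tilde{O}(md + d^{2.5})$ into the three general results. The only subtlety in the strict turnstile bound is that we get $O(ds)$ passes \emph{without} the extra $O(\log\log(\Delta d))$ overhead: since the input domain is already the bounded box $[-1, 1/\Delta, \ldots, 1]^{d+1}$, the metric net of diameter $O(1)$ can be fixed a priori at the origin, so the initial center/radius search described in Section~\ref{turnstilesection} is unnecessary, exactly as in the linear SVM case. Finally, the space, load, and time expressions follow verbatim by plugging the parameters into Theorems~\ref{multipassresult},~\ref{turnstileresult}, and~\ref{coordinatorresult}.

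The main obstacle I anticipate is calibrating the constants in the contraction argument for Lemma~\ref{LPnottoobiglemma}: one must choose the contraction factor large enough that $(1 - c\eps)\bmx^*$ is genuinely feasible for \emph{every} snapped constraint simultaneously (taking the worst case of both the $\bme_i^T \bmx^*$ perturbation and the sign of $f_i$), yet small enough that the objective loss stays $O(\eps)$. This is a routine but careful calculation that relies essentially on the boundedness hypotheses $\norm{\bma_i}, \abs{b_i}, \norm{\bmc}, \norm{\bmx} \in O(1)$; without these, the $\bme_i^T \bmx$ term could blow up and~\ref{P3} would fail, which is precisely why the theorem is stated only for bounded LPs.
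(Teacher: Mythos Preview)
Your overall framework matches the paper exactly: verify \ref{P1}--\ref{P3}, record $\nu=d$, $\lambda=d+1$, $S_f=O(1)$, $T_V=O(1)$, $T_B=\Tilde{O}(md+d^{2.5})$, and plug into Theorems~\ref{multipassresult}--\ref{coordinatorresult}. Your argument for Lemma~\ref{LPcontainslemma} is identical to the paper's, and your observation about the turnstile pass count (net fixed a priori, no $\log\log(\Delta d)$ search) is also what the paper does.

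The gap is in your proposed route to Lemma~\ref{LPnottoobiglemma}. Contracting $\bmx^*$ toward the origin does \emph{not} manufacture slack in general: for any constraint with $\bma_i^T\bmx^*\le 0$ (which certainly occurs when $b_i<0$), the quantity $(1-c\eps)\,\bma_i^T\bmx^*$ is \emph{larger} than $\bma_i^T\bmx^*$, so the contraction moves you the wrong way. Meanwhile the snap can shift $b_i'$ down by $O(\eps)$ and shift $\bma_i'^T\bmx^*$ up by $O(\eps)$ via $\bme_i^T\bmx^*$, so no choice of $c$ makes $(1-c\eps)\bmx^*$ feasible for \eqref{snappedLP} in the worst case. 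This is not a calibration issue; the scheme only works when the origin is strictly feasible with $\Omega(1)$ margin, which is not part of the bounded-LP hypotheses.

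The paper sidesteps this by a different two-step argument: first, the symmetric calculation to Lemma~\ref{LPcontainslemma} shows $\bmx^*$ satisfies each \emph{snapped} constraint up to $O(\eps)$, so $\bmx^*$ is feasible for the $O(\eps)$-relaxed version of \eqref{snappedLP} and hence $\bmc^T\bmx'\ge\bmc^T\bmx^*$ for its optimum $\bmx'$; second, it argues that relaxing all right-hand sides by $O(\eps)$ can raise the optimum by at most $O(\eps)$ because $\norm{\bmc},\norm{\bmx}\in O(1)$, giving $\bmc^T\bmx\ge\bmc^T\bmx'-O(\eps)\ge\bmc^T\bmx^*-O(\eps)$. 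If you want a contraction-style argument, you would need to contract toward a point that is $\Omega(1)$-deep in the \emph{snapped} feasible region, not toward the origin; the paper's relaxation route avoids having to exhibit such a point.
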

\endgroup

\subsubsection{Solving the Linear Classification Problem} \label{linearclassificationsection}
One example of a bounded LP where our algorithm can be used to achieve a useful result is the linear classification problem. In the problem, we are given $n$ labeled examples $\bmp_i$, comprising of a bounded $d$-dimensional feature vector $\bmx_i \in \cbrak{-1, -1 + \frac{1}{\Delta}, \hdots, 1}^d$ and a corresponding label $y_i \in \cbrak{-1, +1}$. The goal of the problem is to find a separating hyperplane $\bmu$, which can be thought of as a normal vector $\bmu \in \bbR^d$ where $\norm{\bmu} = 1$, such that $y_i \paren{\bmx_i^T \bmu} \geq 0$ for all points $\bmp_i$. We will consider the related approximate optimization problem, where we assume that there is an optimal classifier $\bmu^*$ such that $y_i \paren{\bmx_i^T \bmu^*} \geq \eps$ for all points $\bmp_i$. We thus want to find a classifier that is within additive $\eps$ of the separation of this optimal classifier.

This problem can be written as a bounded LP as such. First, notice that we have two types of constraints, $\bmx_i^T \bmu \leq 0$ for any point $\bmp_i$ with $y_i = -1$, and $\bmx_i^T u \geq 0$ for any point $\bmp_i$ with $y_i = +1$. For simplicity, we will actually consider the negation of any such example, so we will actually consider points $\bmp_i' = \paren{\bmx_i', y_i'}$ where $\bmx_i' = -\bmx_i$ and $y_i' = -y_i$ if $y_i = +1$, and $\bmx_i' = \bmx_i$ and $y_i' = y_i$ otherwise. This allows us to only have constraints of the type
\[
    \quad \bmx_i'^T \bmu \leq 0 \quad \text{for all} \quad i\in \brak{n}.
\]
Our objective is to maximize the separation of $\bmu$, which is $min_{i \in \brak{n}} \bmx_i'^T \bmu$. While this objective isn't linear as is, it can be made linear by utilizing an additional variable representing the separation, and $n$ additional constraints.

Thus, we can use our general framework for bounded LP problems in order to solve this problem with separation that is within additive $\eps$ of the optimal hyperplane's separation. This gives us a very important benefit. Namely, since the optimal hyperplane's separations is $\eps$, our hyperplane will in fact be a separating hyperplane for the original points, meaning that our solution fully satisfies the original constraints $P$. Thus, we are able to solve the linear classification problem with an additive $\eps$-approximation of the largest separation, in the bounds given in Theorem~\ref{LPresult}.

\subsection{Solving Bounded Semidefinite Programming Problems up to Additive Epsilon Error} \label{sdpsubsection}
Another additive $\eps$-approximation application of our algorithm is in solving bounded semidefinite programming problems. These problems are optimization problems of the form
\begin{equation} \label{originalSDP}
    \max_{X \in \bbR^{d \times d}} \Frobinner{C}{X} \quad \text{subject to} \quad \genfrac{}{}{0pt}{}{\Frobinner{A^{(i)}}{X} \leq b^{(i)} \quad \text{for all} \quad i\in \brak{n}}{X \succeq 0}
\end{equation}
where $\Frobinner{\cdot}{\cdot}$ is the Frobenius inner product, i.e.\ $\Frobinner{A}{B} = \sum_{i,j} A_{ij} B_{ij}$, and $X \succeq 0$ denotes $X$ as a positive semidefinite matrix. The input will be the objective matrix $C$ as well as $n$ input constraints $\bmp_i = \paren{A^{(i)}, b^{(i)}}$.

We will work with a certain bounded class of SDP problems. Namely, we will have the following boundedness assumptions.
\begin{itemize}
    \item $X$ has unit trace, i.e.\ $\Tr(X) = 1$,
    \item $\norm{C}_F \leq 1$ where $\norm{\cdot}_F$ is the Frobenius norm.
\end{itemize}
Further, for all constraints $i$
\begin{itemize}
    \item $\norm{\Ai}_2 \leq 1$ where $\norm{\cdot}_2$ is the spectral norm,
    \item $\norm{\Ai}_F \leq F$,
    \item The number of nonzero entries of $\Ai$ is bounded by $S$,
    \item $\abs{\bi} \leq 1$.
\end{itemize}

\begin{definition}
    These bounds and later calculations use several definitions of norms. For an $n \times m$ matrix $A$, we use the following definitions of norm.
    \begin{itemize}
        \item Entry-wise $1$-norm: $\norm{A} = \sum_{i \in \brak{n}, j \in \brak{m}} A_{ij}$.
        \item Spectral norm: $\norm{A}_2 = \max_{\bmv \in \bbR^m, \norm{\bmv} \neq 0} \frac{\norm{A \bmv}}{\norm{\bmv}}$.
        \item Frobenius norm: $\norm{A}_F = \sqrt{\sum_{i \in \brak{n}, j \in \brak{m}} A_{ij}^2}$.
        \item Schatten $1$-norm: $\norm{A}_* = \sum_{i = 1}^{\min(n, m)} \sigma_i(A)$, where $\sigma_i(A)$ are the singular values of $A$.
    \end{itemize}
\end{definition}

Our algorithm solves these problems by solving them as bounded LP problems with the solution vector $\bmx \in \bbR^{d^2}$ where $\bmx = \Vectorize(X)$ is the vectorization of $X$. This gives us the natural objective function $\max_{\bmx \in \bbR^{d^2}} \bmc^T \bmx$ where $\bmc = \Vectorize(C)$. For the constraints of these LPs, we consider the two different type of SDP constraints separately. First, we have constraints of the type $\Frobinner{\Ai}{X} \leq \bi$, which we can represent as $\bma^{(i) T} \bmx \leq \bi$ where $\bma^{(i)} = \Vectorize(\Ai)$. We also have the constraint that $X$ must be positive semidefinite, i.e.\ $X \succeq 0$. This is equivalent to the constraints $\bmy^T X \bmy \geq 0$ for all vectors $\bmy \in \bbR^d$ where $\norm{\bmy} = 1$. We can represent each of these as constraints $(\bmy \otimes \bmy)^T x$. The problem however is that we would need an infinite amount of these constraints, as there are infinite $\bmy$ vectors to consider.

Thus, we again utilize metric $\eps$-nets. For each $\Ai$, we note that a naive lattice where each coordinate $\Ai_{jk}$ is snapped to the nearest multiple of $\frac{\eps}{d}$ would suffice, however we can achieve a tighter bound because of the assumption that $\Ai$ has at most $S$ nonzero entries. We can therefore create a net for each of the $\binom{d^2}{S}$ possible combinations, where each net size is exponential in $S$ rather than $d^2$. Now, for any $\Ai$, its nonzero coordinates must be fully captured by at least one of these nets, and so we can deterministically choose one and snap it to a point in that net. We can further snap to the nearest multiple of $\frac{\eps}{\min(d, S)}$ in order to decrease the size of each net. Finally, notice that for each coordinate of $\Ai$, $\abs{\Ai_{jk}} \leq \norm{\Ai}_2 \leq 1$. Thus each net is of size $\paren{\frac{2 \min(d, S)}{\eps} + 1}^ S$, and therefore we have a total size of $\binom{d^2}{S} \cdot O\paren{\paren{\frac{\min(d, S)}{\eps}}^S}$ for the nets. For each $b_i$, we can snap it to the nearest multiple of $\eps$, giving a net of size $O\paren{\frac{1}{\eps}}$. In total, this means that in our snapped LP, we have $\binom{d^2}{S} \cdot O\paren{\paren{\frac{\min(d, S)}{\eps}}^S \frac{1}{\eps}}$ constraints of the form $\Frobinner{A'^{(i)}}{X} \leq b'^{(i)}$.

For the positive semidefinite constraints, we create a lattice where each coordinate is a multiple of $\frac{\eps}{d\sqrt{d}}$. This gives us a metric \epsnet{} of size $O \paren{ \paren{\frac{d\sqrt{d}}{\eps}}^d}$, and so we can reduce the infinite constraint space into that many constraints of the form $\bmz^T X \bmz \geq 0$ where $\bmz \in \bbR^d$ are points on the metric \epsnet{}. We can call the original (infinitely numerous) constraints $Y$, and the new constraints $Z$.

Because our algorithm now solves an LP, our combinatorial dimension is $\nu = d^2$, and our VC dimension is $\lambda = d^2 + 1$. Further, the problems satisfy the properties~\ref{P1} and~\ref{P2}. Finally, because of the different formulation of our nets, we have to show~\ref{P3}, again. We show that for the optimal solution $\Frobinner{C}{X}$ satisfying a set of snapped constraints $Q$ and $Z$, the solution $X + \frac{3 \eps}{d} I$, where $I$ is the $d \times d$ identity matrix, gives an additive $\eps$-approximation to the optimal solution $\Frobinner{C}{X^*}$ approximately satisfying the original constraints $P$ and $Y$. We show this using three lemmas.

\begin{restatable}{lemma}{SDPcontainsrestatable} \label{SDPcontainslemma}
    The solution $X + \frac{3 \eps}{d} I$ approximately satisfies all original constraint points $\bmp_i \in P$, i.e., $\Frobinner{\Ai}{X + \frac{3 \eps}{d} I} \leq \bi + O(\eps)$ for all $i \in \brak{n}$.
\end{restatable}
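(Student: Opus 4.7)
The plan is to decompose $\Frobinner{A^{(i)}}{X + \tfrac{3\eps}{d}I}$ into three pieces and bound each in turn: the snapped constraint (which $X$ satisfies by construction), the snapping error on the constraint matrix, and the perturbation $\tfrac{3\eps}{d}I$. Concretely, I would write
\[
\Frobinner{A^{(i)}}{X + \tfrac{3\eps}{d}I} = \Frobinner{A'^{(i)}}{X} + \Frobinner{A^{(i)} - A'^{(i)}}{X} + \tfrac{3\eps}{d}\Tr(A^{(i)}),
\]
where $A'^{(i)}$ and $b'^{(i)}$ are the snapped constraint matrix and right-hand side. The first term is at most $b'^{(i)} \leq b^{(i)} + \eps$, since $|b^{(i)} - b'^{(i)}| \leq \eps$ by construction of the one-dimensional $\eps$-net on $b^{(i)}$ and $X$ satisfies the snapped constraint.

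The main obstacle is bounding the middle term $\Frobinner{A^{(i)} - A'^{(i)}}{X}$, where I must exploit both the sparsity of $A^{(i)}$ and the PSD unit-trace constraint on $X$ to avoid a blowup in $d$. Since the net we chose preserves the support pattern of $A^{(i)}$ and snaps each of the $\leq S$ nonzero coordinates to a multiple of $\eps/\min(d,S)$, we get
\[
\norm{A^{(i)} - A'^{(i)}}_F \leq \tfrac{\sqrt{S}\,\eps}{\min(d,S)}\,,
\]
which is $O(\eps)$ in both regimes $S \leq d$ and $d < S \leq d^2$. Then, by the matrix H\"older inequality $|\Frobinner{M}{X}| \leq \norm{M}_2 \norm{X}_*$, together with $\norm{A^{(i)} - A'^{(i)}}_2 \leq \norm{A^{(i)} - A'^{(i)}}_F$ and $\norm{X}_* = \Tr(X) = 1$ (since $X \succeq 0$ and is unit-trace), I get $|\Frobinner{A^{(i)} - A'^{(i)}}{X}| \leq O(\eps)$.

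For the last term, I use that the diagonal entries of $A^{(i)}$ are bounded in absolute value by $\norm{A^{(i)}}_2 \leq 1$, so $|\Tr(A^{(i)})| \leq d$, giving $\tfrac{3\eps}{d}|\Tr(A^{(i)})| \leq 3\eps$. Combining the three bounds yields $\Frobinner{A^{(i)}}{X + \tfrac{3\eps}{d}I} \leq b^{(i)} + O(\eps)$, as required. The only subtlety to flag is ensuring the sparsity-aware snapping really does give a spectral/Frobenius norm error independent of $d$; this is why the net construction takes the union over all $\binom{d^2}{S}$ support patterns rather than a single global lattice, and it is this choice that makes the H\"older step actually work uniformly across constraints.
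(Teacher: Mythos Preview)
Your proof is correct and follows essentially the same three-term decomposition as the paper: the snapped constraint, the snapping error on $A^{(i)}$, and the $\tfrac{3\eps}{d}I$ perturbation, with identical bounds on the first and third pieces. The only difference is in bounding $\Frobinner{A^{(i)}-A'^{(i)}}{X}$: the paper takes the minimum of two entrywise arguments (one via $\sum_{j,k}|X_{jk}|\le d\norm{X}_*$, the other via sparsity of $e^{(i)}$ together with $|X_{jk}|\le 1$), whereas you collapse both regimes into the single estimate $\norm{A^{(i)}-A'^{(i)}}_F\le \sqrt{S}\,\eps/\min(d,S)=O(\eps)$ and then apply matrix H\"older with $\norm{X}_*$; both routes give $O(\eps)$.
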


\begin{restatable}{lemma}{SDPpsdrestatable} \label{SDPpsdlemma}
    The solution $X + \frac{3 \eps}{d} I$ satisfies all constraints $\bmy \in Y$, i.e., $\bmy^T (X + \frac{3 \eps}{d} I)\bmy \geq 0$ for all $\bmy \in \bbR^d$ such that $\norm{\bmy} = 1$.
\end{restatable}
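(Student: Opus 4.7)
The goal is to show $\bmy^T(X + \tfrac{3\eps}{d}I)\bmy \geq 0$ for every unit vector $\bmy \in \bbR^d$, which since $\norm{\bmy} = 1$ is equivalent to showing that every eigenvalue of the symmetric matrix $X$ is at least $-\tfrac{3\eps}{d}$. The plan is to first transfer the snapped PSD constraints $\bmz^T X \bmz \geq 0$ for $\bmz \in Z$ into an approximate PSD inequality at every unit vector, and then to use the trace constraint $\Tr(X) = 1$ to rule out a large operator norm, so that this approximate inequality closes with the claimed $\tfrac{3\eps}{d}$ slack.

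For the first step, I would exploit that the lattice $Z$ has per-coordinate spacing $\tfrac{\eps}{d\sqrt{d}}$ and covers $[-1,1]^d$: rounding each coordinate of a unit $\bmy \in [-1,1]^d$ to the nearest lattice value stays inside $[-1,1]^d$ and produces $\bmz \in Z$ with $\norm{\bmy - \bmz} \leq \sqrt{d}\cdot \tfrac{\eps}{2d\sqrt{d}} = \tfrac{\eps}{2d}$. Writing $\bmy = \bmz + \bme$ with $\norm{\bme} \leq \tfrac{\eps}{2d}$ and using the symmetry of $X$, one obtains
\[
    \bmy^T X \bmy \;=\; \bmz^T X \bmz + 2\bmz^T X \bme + \bme^T X \bme \;\geq\; -\paren{2\norm{\bmz}\norm{\bme} + \norm{\bme}^2}\norm{X}_2,
\]
where the first term was dropped using $\bmz^T X \bmz \geq 0$ and the remaining two were bounded by Cauchy--Schwarz together with $\norm{\bmz} \leq 1 + \norm{\bme}$. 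This yields $\bmy^T X \bmy \geq -c\cdot\tfrac{\eps}{d}\norm{X}_2$ for a small explicit constant $c$.

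For the second step, apply this bound to every unit eigenvector of $X$ to conclude that each eigenvalue $\lambda_i$ satisfies $\lambda_i \geq -c\tfrac{\eps}{d}\norm{X}_2$. Combining these $d$ lower bounds with $\sum_i \lambda_i = \Tr(X) = 1$ gives $\lambda_1 \leq 1 + c\eps\cdot\tfrac{d-1}{d}\norm{X}_2 \leq 1 + c\eps\norm{X}_2$. Since $\norm{X}_2 = \max(|\lambda_1|, |\lambda_d|)$ and the bound on $|\lambda_d|$ is already $O(\eps/d)\norm{X}_2$, the effective inequality is $\norm{X}_2 \leq 1 + c\eps\norm{X}_2$, which rearranges to $\norm{X}_2 \leq 1/(1 - c\eps)$ and hence $\norm{X}_2 \leq 2$ for $\eps$ below an absolute constant. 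Substituting this back into the first-step inequality yields $\bmy^T X \bmy \geq -\tfrac{3\eps}{d}$ after tracking the constants, which is precisely the desired estimate.

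The main obstacle I anticipate is the apparent circularity between bounding individual eigenvalues of $X$ and bounding $\norm{X}_2$: the per-eigenvalue bound depends on $\norm{X}_2$, but $\norm{X}_2$ is controlled only through $\lambda_1$, which in turn is controlled through the other $\lambda_i$'s via the trace identity. The trace constraint is exactly what closes the loop, but one has to choose the lattice granularity $\tfrac{\eps}{d\sqrt{d}}$ and keep the constants aligned so that the final shift ends up as $\tfrac{3\eps}{d}$ rather than a larger multiple; the specific factor $\sqrt{d}$ in the lattice spacing is calibrated so that the per-coordinate error $\tfrac{\eps}{2d\sqrt{d}}$ aggregates, in $\ell_2$, to exactly the $\tfrac{\eps}{2d}$ slack that the argument needs.
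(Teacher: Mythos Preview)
Your argument is correct and its core is the same as the paper's: pick a lattice point $\bmz$ within $O(\eps/d)$ of the unit vector $\bmy$, expand $\bmy^T X \bmy$ around $\bmz^T X \bmz \geq 0$, and bound the cross terms by Cauchy--Schwarz times $\norm{X}_2$. The paper does exactly this, arriving at $\bmy^T X \bmy \geq -3\eps/d$ and then adding back $\tfrac{3\eps}{d}\bmy^T I \bmy = \tfrac{3\eps}{d}$.

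The one genuine difference is how $\norm{X}_2$ is controlled. The paper simply asserts $\norm{X}_2 \leq \norm{X}_F \leq \norm{X}_* = 1$, quoting the chain already used in the proof of the previous lemma; but $\norm{X}_* = \Tr(X)$ only when $X \succeq 0$, which is precisely what is being established (up to the $\tfrac{3\eps}{d}$ shift), so that step is circular as written. Your bootstrap---first get $\lambda_i \geq -c\tfrac{\eps}{d}\norm{X}_2$ for every eigenvalue, then combine with $\sum_i \lambda_i = 1$ to force $\norm{X}_2 \leq 1/(1-c\eps)$---cleanly removes this circularity using only the trace constraint, and the constants still land inside the $3\eps/d$ budget. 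So your route is slightly longer but more rigorous; the paper's route is shorter but leans on a bound it has not actually justified for the snapped (not-yet-PSD) solution $X$.
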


\begin{restatable}{lemma}{SDPnottoobigrestatable} \label{SDPnottoobiglemma}
    The objective of the snapped LP is at most $O(\eps)$ smaller than the objective of SDP~\eqref{originalSDP}, i.e., $\Frobinner{C}{X + \frac{3 \eps}{d} I} \geq \Frobinner{C}{X^*} - O(\eps)$.
\end{restatable}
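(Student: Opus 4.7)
The plan is to bound the snapped-LP optimum $\Frobinner{C}{X}$ from below by $\Frobinner{C}{X^*} - O(\eps)$, after which the additional correction $\tfrac{3\eps}{d}I$ contributes at most another $O(\eps)$ to the objective thanks to $\norm{C}_F \leq 1$. The argument revolves around exhibiting a feasible point of the snapped LP that is close to $X^*$ and then invoking optimality of $X$. Since $X^* \succeq 0$ with $\Tr(X^*) = 1$ has nuclear norm $\norm{X^*}_* = 1$, it already satisfies every snapped PSD lattice constraint $\bmz^T X^* \bmz \geq 0$ exactly, so only the snapped linear constraints require attention.

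I would first quantify how close $X^*$ is to being snapped-LP-feasible: by the matrix Hölder inequality,
\[
\abs{\Frobinner{A'^{(i)} - A^{(i)}}{X^*}} \leq \norm{A'^{(i)} - A^{(i)}}_2 \cdot \norm{X^*}_* \leq \norm{A'^{(i)} - A^{(i)}}_F \leq O(\eps),
\]
where the Frobenius bound uses that each of the at most $S$ nonzero coordinates of $A^{(i)}$ is perturbed by at most $\eps/\min(d,S)$. Combined with $\abs{b^{(i)} - b'^{(i)}} \leq \eps$, this gives $\Frobinner{A'^{(i)}}{X^*} \leq b'^{(i)} + O(\eps)$, so $X^*$ violates each snapped linear constraint by at most $O(\eps)$. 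Next, I would correct $X^*$ into a truly feasible point $\tilde X$ via a convex combination with a strictly interior point, e.g.\ $\tilde X = (1 - \gamma) X^* + \gamma \tfrac{I}{d}$ with $\gamma = \Theta(\eps)$ tuned so that the $(1-\gamma)$ shrinkage absorbs each $O(\eps)$ slack after accounting for the bounded contribution $\tfrac{\gamma}{d}\Tr(A'^{(i)})$ (using $\norm{A'^{(i)}}_2 \leq 1 + O(\eps)$). The preservation of trace and PSD-ness is immediate; the objective loss is bounded by
\[
\abs{\Frobinner{C}{X^* - \tilde X}} = \gamma\, \abs{\Frobinner{C}{X^* - \tfrac{I}{d}}} \leq \gamma \paren{\norm{C}_F \norm{X^*}_F + \norm{C}_F/\sqrt{d}} \leq O(\eps),
\]
using $\norm{X^*}_F \leq \norm{X^*}_* = 1$ and $\norm{C}_F \leq 1$. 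Optimality of $X$ for the snapped LP then gives $\Frobinner{C}{X} \geq \Frobinner{C}{\tilde X} \geq \Frobinner{C}{X^*} - O(\eps)$.

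Finally, $\abs{\Frobinner{C}{\tfrac{3\eps}{d} I}} = \tfrac{3\eps}{d} \abs{\Tr(C)} \leq \tfrac{3\eps}{d} \sqrt{d}\, \norm{C}_F \leq \tfrac{3\eps}{\sqrt{d}} \leq O(\eps)$ by Cauchy--Schwarz and $\norm{C}_F \leq 1$, so $\Frobinner{C}{X + \tfrac{3\eps}{d} I} \geq \Frobinner{C}{X^*} - O(\eps)$ as claimed. The main obstacle is calibrating $\tilde X$ uniformly over all $n$ snapped constraints: a plain blend with $I/d$ can fail when some $b'^{(i)}$ is small or negative, since shrinking $X^*$ then moves $\Frobinner{A'^{(i)}}{X^*}$ away from the constraint rather than toward it. If that arises, one would fall back on a problem-specific Slater point built from the boundedness assumptions $\norm{A^{(i)}}_2, \abs{b^{(i)}} \leq 1$, or on an LP-sensitivity argument that bounds dual variables and directly compares the snapped and unsnapped optima.
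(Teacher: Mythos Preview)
Your proposal and the paper agree on the bookends: both show that $X^*$ is $O(\eps)$-feasible for the snapped constraints (you via matrix H\"older with $\norm{X^*}_* = 1$, the paper by reusing the entrywise estimates from Lemmas~\ref{SDPcontainslemma} and~\ref{SDPpsdlemma}), and both dispatch the $\tfrac{3\eps}{d}I$ shift by bounding $\tfrac{3\eps}{d}\abs{\Tr(C)} \leq O(\eps)$. One small sharpening: since $X^* \succeq 0$, you correctly note it satisfies every lattice PSD constraint $\bmz^T X^* \bmz \geq 0$ exactly, whereas the paper only records the weaker $\geq -\tfrac{3\eps}{d}$ via the net argument.

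The real divergence is the middle step, and here your main line has the gap you yourself flag. You try to produce an \emph{exactly} snapped-feasible $\tilde X = (1-\gamma)X^* + \gamma\, I/d$; but for a constraint with $b'^{(i)} < 0$ the shrinkage $(1-\gamma)$ moves $\Frobinner{A'^{(i)}}{X^*}$ \emph{away} from feasibility, and nothing in the boundedness assumptions supplies a uniform Slater point to blend with instead. So this construction does not go through in general, and your proposal is incomplete at exactly this point. The paper avoids the issue entirely: rather than building a feasible $\tilde X$, it observes that $X^*$ is feasible for the $O(\eps)$-\emph{relaxed} snapped LP, and then invokes the bounded-LP perturbation claim (the argument behind Lemma~\ref{LPnottoobiglemma}) that relaxing every constraint by $O(\eps)$ changes the optimum of a bounded LP by at most $O(\eps)$, yielding $\Frobinner{C}{X} \geq \Frobinner{C}{X^*} - O(\eps)$ directly. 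This is precisely the ``LP-sensitivity'' fallback you name; it is the paper's primary route, not a backup. To align with the paper you should drop the $\tilde X$ construction and argue the relaxed-versus-exact snapped optimum comparison instead.
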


Together, these lemmas show that the solution $X + \frac{3 \eps}{d} I$ satisfies the constraints $P$ and $Y$, and is within additive $O(\eps)$ of the objective function for SDP~\eqref{originalSDP}, which together show~\ref{P3}.

Thus, our algorithm can be utilized to solve additive $\eps$-approximation SDP problems. Notice that a solution to an SDP is a point $(X + \frac{3 \eps}{d} I)$ and can thus be stored in $1$ word of space. Further, violator checks can easily be done on a stored solution by checking whether it satisfies a given constraint, giving us $T_V = O(1)$ for SDP problems. For the bounds of the problem, notice that we convert an SDP problem into an LP problem on $d^2$ dimensions. Thus, we can use the $T_B$ for LPs with $d^2$ variables and $m$ constraints. Therefore, we can achieve the following result for bounded SDP problems, using the results from Theorems~\ref{multipassresult},~\ref{turnstileresult}, and~\ref{coordinatorresult}.

\begingroup \abovedisplayskip=2pt \belowdisplayskip=0pt
\begin{theorem}\label{SDPresult}
    For any $s \in [1, d^2 \log \paren{\sfrac{1}{\eps}}]$, there exists a randomized algorithm to compute an additive $\eps$-approximation solution to bounded SDP problems with high probability in the following models, where $N \in \binom{d^2}{S} \cdot O\paren{\paren{\frac{\min(d, S)}{\eps}}^S \frac{1}{\eps}} + O \paren{ \paren{\frac{d\sqrt{d}}{\eps}}^d}$.
    \begin{itemize}
        \item Multipass Streaming: An $O\paren{d^2 s}$ pass algorithm with
        \[
            O\paren{d^{14} s N^{\sfrac{3}{s}}} \cdot \polylogof{d, N} + O \paren{d^2 s + d^4 N^{\sfrac{1}{s}} \log \paren{d N^{\sfrac{1}{s}}}}
        \]
        space complexity in words and $\Tilde{O} \paren{d^2 s \paren{n + md^2 + d^{5}}}$ time complexity.
        
        \item Strict Turnstile: An $O\paren{d^2 s}$ pass algorithm with
        \[
            O\paren{d^{14} s N^{\sfrac{3}{s}}} \cdot \polylogof{d, N} + O \paren{d^2 s + d^4 N^{\sfrac{1}{s}} \log \paren{d N^{\sfrac{1}{s}}}}
        \]
        space complexity and $\Tilde{O} \paren{d^2 s \paren{n + md^2 + d^{5}}}$ time complexity.
        
        \item Coordinator/Parallel Computation: An algorithm with $O\paren{d^2 s}$ rounds of communication and
        \[
            O \paren{d^4 N^{\sfrac{1}{s}} \log \paren{d N^{\sfrac{1}{s}}} + k}
        \]
        load in words. The local computation time of the coordinator is $O \paren{d^2 s \paren{md^2 + d^{5} + k}}$, and the local computation time of each machine $i$ is $O \paren{d^2 s n_i}$ where $n_i = \abs{P_i}$.
    \end{itemize}
\end{theorem}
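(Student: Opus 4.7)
\textbf{Proof proposal for Theorem~\ref{SDPresult}.}
The plan is to reduce the bounded SDP to a bounded LP in $d^2$ variables whose constraint set lives on a carefully chosen union of metric \epsnet{s}, verify that this LP instance meets the abstract requirements (P1)--(P3) of our main framework, and then plug the resulting parameters into Theorems~\ref{multipassresult},~\ref{turnstileresult}, and~\ref{coordinatorresult}. Concretely, I set $\bmx = \Vectorize(X) \in \bbR^{d^2}$ and $\bmc = \Vectorize(C)$, so the objective and the original inequalities $\Frobinner{\Ai}{X} \leq \bi$ become linear in $\bmx$. For the semidefinite constraint I write $X \succeq 0$ as the infinite family $\bmy^T X \bmy \geq 0$ over the unit sphere and discretize it to a metric $\tfrac{\eps}{d\sqrt{d}}$-net of size $O\bigl((d\sqrt{d}/\eps)^d\bigr)$. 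For the $\Ai$ constraints I branch on the location of the at most $S$ nonzero entries, giving $\binom{d^2}{S}$ nets each of size $O\bigl((\min(d,S)/\eps)^S\bigr)$, and snap each $\bi$ to a multiple of $\eps$. Adding up yields the claimed bound on $N$.

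Next I verify (P1)--(P3). Properties (P1) and (P2) are immediate because this is a constrained optimization problem: each constraint is associated with the halfspace of solutions that satisfies it, and $f(\cdot)$ returns the objective-maximizing point of the intersection (with any fixed tie-breaking rule, as in the LP case). For (P3) I invoke Lemmas~\ref{SDPcontainslemma},~\ref{SDPpsdlemma}, and~\ref{SDPnottoobiglemma}. The first is a Cauchy--Schwarz / Frobenius-norm estimate: snapping each $\Ai$ moves $\Frobinner{\Ai}{X}$ by at most $\norm{\Ai - A'^{(i)}}_F \cdot \norm{X}_F \leq O(\eps)$, the $\bi$ snap contributes another $O(\eps)$, and shifting $X$ by $\tfrac{3\eps}{d} I$ adds $\tfrac{3\eps}{d}\Tr(\Ai) \leq 3\eps$ using $\abs{\Tr(\Ai)} \leq d\norm{\Ai}_2$. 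The third lemma is similar: $\Frobinner{C}{\tfrac{3\eps}{d}I} \leq \tfrac{3\eps}{d}\abs{\Tr(C)} = O(\eps)$, and the relaxation can only increase the optimum of the rounded program. Lemma~\ref{SDPpsdlemma} is the delicate one (see below).

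Given (P1)--(P3), the problem fits the framework with combinatorial dimension $\nu = d^2$ and VC dimension $\lambda = d^2 + 1$ (half-space range space in $d^2$ dimensions), a solution encodable in $O(d^2)$ words, per-constraint violator checks in $T_V = O(1)$ time, and a basis solver $T_B = \Tilde{O}(md^2 + d^5)$ obtained from~\cite{BLLSSSW21} applied to an LP in $d^2$ variables. Substituting $\nu, \lambda, T_B, T_V, S_f$ into Theorems~\ref{multipassresult},~\ref{turnstileresult}, and~\ref{coordinatorresult} gives exactly the pass, space, round, load, and time bounds claimed. As in the LP and SVM applications, the strict-turnstile variant needs no extra $\log\log(\Delta d)$ passes because the SDP nets are centered at the origin with an a priori radius bound following from the boundedness assumptions; $\ell_0$ samplers and estimators remain valid under deletions, and the distributed protocols transfer unchanged because the linear sketch is additive across machines.

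The main obstacle will be Lemma~\ref{SDPpsdlemma}. I need that if $X$ satisfies $\bmz^T X \bmz \geq 0$ on the $\tfrac{\eps}{d\sqrt{d}}$-net, then $X + \tfrac{3\eps}{d}I$ satisfies $\bmy^T (\cdot) \bmy \geq 0$ for every unit $\bmy$. The natural argument writes $\bmy = \bmz + \bmu$ with $\norm{\bmu} \leq \tfrac{\eps}{d\sqrt{d}}$ and bounds $\abs{\bmy^T X \bmy - \bmz^T X \bmz} \leq O(\eps/(d\sqrt{d})) \cdot \norm{X}_*$, so I need a bound on the Schatten $1$-norm of the relaxed $X$. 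Because the relaxation need not keep $X$ PSD, $\norm{X}_* \neq \Tr(X)$ in general, and I will have to argue that any feasible $X$ of the snapped LP still has $\norm{X}_* \leq O(1)$ (plausibly $O(d)$ given $\Tr(X) = 1$ and bounded negative eigenvalues forced by the net inequalities), so that the $\tfrac{3\eps}{d}I$ shift dominates both the discretization slack and the small negative eigenvalues. If the naive bound is too weak, I will instead strengthen the net density (and absorb the resulting $\poly(d)$ factor in $N$) so that the shift $\tfrac{3\eps}{d}I$ provably certifies positive semidefiniteness; this does not change the asymptotic complexity claimed.
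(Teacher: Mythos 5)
Your proposal matches the paper's proof essentially step for step: the same vectorization $\bmx = \Vectorize(X)$, the same $\binom{d^2}{S}$-fold sparse nets for the $\Ai$ constraints together with the $\frac{\eps}{d\sqrt{d}}$-lattice for the PSD constraint, the same three lemmas establishing (P3) for the shifted solution $X + \frac{3\eps}{d}I$, and the same substitution of $\nu = d^2$, $\lambda = d^2+1$, $T_V = O(1)$, and $T_B = \Tilde{O}(md^2 + d^5)$ into Theorems~\ref{multipassresult},~\ref{turnstileresult}, and~\ref{coordinatorresult}. On the obstacle you flag in Lemma~\ref{SDPpsdlemma}: the argument only needs a spectral-norm bound, since each perturbation term is at most $\norm{\bme}\,\norm{X}_2\,\norm{\bmy}$ with $\norm{\bme} \leq \eps/d$, and $\norm{X}_2 \leq \norm{X}_F \in O(1)$ already follows from the bounded-LP feasibility requirement on $\norm{\bmx}$, so neither a Schatten-norm bound nor a denser net is needed --- though you are right that the paper's written justification via $\norm{X}_* = \Tr(X) = 1$ is only literally valid when $X$ is PSD.
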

\endgroup

\subsubsection{Solving SDP Saddle Point Problems in the Unit Simplex} \label{saddlepointsection}
One example of a bounded SDP where our algorithm can be used to achieve a useful result is the saddle point problem
\begin{equation} \label{saddlepointequation}
    \max_X \min_{\bmp \in \Delta_n} \sum_{i \in \brak{n}} p_i \paren{\Frobinner{\Ai}{X} - \bi}
\end{equation}
where for all $i \in \brak{n}$, $\Ai \in \bbR^{d \times d}$ are symmetric and $\bi \in \bbR$. $X \in \bbR^{d \times d}$ is a positive semidefinite matrix of trace $1$, and $\Delta_n = \cbrak{\bmx \in \bbR^n \mid \bmx \geq 0, \norm{\bmx}_1 = 1}$ is the $n - 1$ dimensional unit simplex. In the case where the optimal solution to Equation~\ref{saddlepointequation} is nonnegative, then solving it up to additive $\eps$ error is equivalent to finding an $X$ that satisfies all constraints $\Frobinner{\Ai}{X} \geq \bi$ up to additive $\eps$ error. The optimization version of this is maximizing a margin $\sigma$ where $\Frobinner{\Ai}{X} \geq \bi + \sigma$.

Notice that the constraints of the problem directly fit into our framework for bounded SDP problems, and the margin $\sigma$ can be represented as an aditional variable in our LP formulation. The maximization objective $\max \sigma$ now requires a $1$-hot vector $c$, which fits into our boundedness constraint.

Thus, we can use our general framework for bounded SDP problems in order to solve this problem up to an additive $\eps$-approximation, in the bounds given in Theorem~\ref{SDPresult}.

\section{Lower Bound Proofs} \label{lbappendix}

\subsection{Lower Bound for the MEB Problem}\label{meblowersubsection}
In this section, we provide the following lower bound for the space complexity of the $1$-pass streaming MEB problem via the communication complexity of the $1$-round MEB problem.

\ccmebrestatable*

\begin{proof}
    To prove this, we prove a lower bound on any $1$-round communication protocol which yields a $(1 + \eps)$-approximation for the MEB problem.
    
    The proof proceeds by reduction from $\mathsf{Ind}_n$ to MEB. Given an instance of the indexing problem, Alice uses a scheme in order to transform her bitstring $b$ into a set of at most $n$ points $P \in \bbR^d$, one point $\bmp_j$ for each $b_j = 1$.
    Separately, Bob uses a scheme in order to transform his index $i$ into a single point $\bmq \in \bbR^d$. 
    We then show that if $b_i = 1$, then the MEB of the at most $n+1$ points $P \cup \cbrak{\bmq}$ obtained by Alice and Bob has radius $2$, while if $b_i = 0$ then the MEB has radius at most $2 - \Omega(\varepsilon)$ where $\varepsilon$ satisfies $n = \left(\sfrac{1}{\varepsilon}\right)^{\lfloor d / 4 \rfloor}$. 
    This therefore gives a $\paren{\sfrac{1}{\eps}}^{\Omega(d)}$ lower bound on the communication complexity of any $1$-round communication protocol for a $(1 + \eps)$-approximation, which gives the same lower bound on the space complexity of any $1$-pass streaming algorithm for the $(1 + \eps)$-approximation MEB problem.
    We now provide a full proof, first for $d=2$, and then extended to higher dimensions.

    We proceed with the proof when $d = 2$.

    Let $\eps = \paren{\frac{1}{n}}^2$ and Alice's points $\bmp^{(1)}, \bmp^{(2)}, \hdots, \bmp^{(n)} \in \bbR^2$ be equally spaced points on the unit circle, where
    \[ \bmpj = 
    \left(
    \cos\left(2\pi j \sqrt{\varepsilon}\right),\, 
    \sin\left(2\pi j \sqrt{\varepsilon}\right)
    \right). \]
    For each $j \in [n]$, if $b_j = 1$, then Alice adds $\bmpj$ to her portion of the input stream to the MEB problem. Bob adds to his portion of the input stream the point $\bmq \in \bbR^2$, given by
    \[ 
    \bmq = \left(
    3\cos\left(\pi + 2\pi i \sqrt{\varepsilon}\right),\,
    3\sin\left(\pi + 2\pi i \sqrt{\varepsilon}\right)
    \right),
    \]
    such that $\bmq$ is the point opposite $\bmpi$ through the origin such that $\norm{\bmpi - \bmq} = 4$.
    Without loss of generality, we assume that $i = n$. 
    Then, we have $\bmpi = (1, 0)$ and $\bmq = (-3, 0)$.
    If $b_i = 1$, then both $\bmpi$ and $\bmq$ are part of the input stream to Alice and Bob's instance of MEB. 
    In this case, it is easy to see that the MEB is that which has a diameter from $\bmpi$ to $\bmq$, meaning that the MEB has radius $2$.
    
    However, if $b_i = 0$, then $\bmpi$ is not part of the input to the MEB instance. Thus, it remains to show that $\norm{\bmpj - \bmq} \leq 4 - \Omega(\eps)$ for all other $j$ where $b_j = 1$. It is clear that we only need to consider $j$ such that $\bmpj$ is to the right of the origin, as any point $\bmpj$ left of the origin clearly satisfies this inequality. We can bound the square of this distance using the Pythagorean theorem as
    \begin{align*}
        \| \bmpj - \bmq\|^2 &= \paren{\cos\paren{2\pi j \sqrt{\varepsilon}} + 3}^2 + \paren{\sin\paren{2\pi j \sqrt{\varepsilon}}}^2\\
        &= 10 + 6\cos\paren{2\pi j \sqrt{\varepsilon}}.
    \end{align*}
    Thus, as this distance only depends on the cosine, and thus only on the $x$ coordinate of $\bmpj$, we can see that the distance from $\bmq$ is larger the further right a point is. Thus, the radius of the MEB is bounded above by the distance from $\bmq$ to $\bmp^{(1)}$ (or symmetrically $\bmp^{(n-1)}$). So, plugging in for $j$ we have that
    \[
        \| \bmp^{(1)} - \bmq\|^2 = 10 + 6\cos\paren{2\pi \sqrt{\varepsilon}}.
    \]
    It is a known result that $\cos(\theta) < 1 - \frac{4\theta^2}{\pi^2}$ for all $\theta \in (0, \frac{\pi}{2})$~\cite{BP22}. Thus we have that
    \[
        \| \bmp^{(1)} - \bmq\|^2 = 16 - 96 \eps
    \]
    which yields that the distance between $\bmp^{(1)}$ and $\bmq$ is $2 - \Omega(\varepsilon)$, giving our desired result. 
    
    We now extend this idea to an arbitrary dimension $d$ satisfying $d < \left(\sfrac{1}{\varepsilon}\right)^{0.999}$. 
    
    Without loss of generality, we assume $d$ is even. If it is not, we can simply disregard one of the dimensions. 
    Let $\varepsilon = \left(\frac{1}{n}\right)^{\sfrac{4}{d}}$. 
    As in the 2 dimensional case, we define Alice's $n$ points $\bmp^{(1)}, \bmp^{(2)}, \hdots, \bmp^{(n)} \in \bbR^d$ where each point $\bmpj$ is given by a sequence $\{j_k\}_{k=1}^{\sfrac{d}{2}}$ in $\{1, 2, \dots, \frac{1}{\sqrt{\varepsilon}}\}$, so that the $\ell$-th coordinate of $\bmpj$ is given by
    \[ \bmpj_\ell = \begin{cases}
        \sqrt{\frac{2}{d}}\cos\left(2 \pi j_{\frac{\ell+1}{2}} \sqrt{\varepsilon}\right) & \text{ if } \ell \text{ is odd,} \\
        \sqrt{\frac{2}{d}}\sin\left(2 \pi j_{\frac{\ell}{2}} \sqrt{\varepsilon}\right) & \text{ if } \ell \text{ is even.} \\
    \end{cases}  \]
    Informally, we pair up the axes of $d$-dimensional space, and choose the points on the unit circle which when projected onto the plane of a pair of axes will resemble the 2-dimensional case.
    Alice and Bob choose their points as in the 2-dimensional case. 
    Without loss of generality, we assume that $i = n$ and that $\bmpi$ is given by the sequence $i_k = \sqrt{\varepsilon}$ for all $k \in [\frac{d}{2}]$. Then, we have that $\bmpi$ is the point where the $\ell$-th coordinate is given by
    \[ \bmpi_\ell = \begin{cases}
    \sqrt{\frac{2}{d}} & \text{ if } \ell \text{ is odd,} \\
    0 & \text{ if } \ell \text{ is even.}
    \end{cases} \]
    We then have that $\bmq$ is point such that the $\ell$-th coordinate is given by
    \[ \bmq_\ell = \begin{cases}
    -3\sqrt{\frac{2}{d}} & \text{ if } \ell \text{ is odd,} \\
    0 & \text{ if } \ell \text{ is even.}
    \end{cases} \]
    As in the 2-dimensional case, it can easily be seen that if $b_i = 1$, then the MEB has radius 2. If $b_i = 0$, then the radius of the MEB is bounded above by the distance from $\bmq$ to $\bmpj$, where $\bmpj$ is given by $j_1 = 1$ and $j_k = \frac{1}{\sqrt{\varepsilon}}$ for all $k \ne 1$. We can now bound the squared distance.
    \begin{align*}
        \|\bmpj - \bmq\|^2 =\ & \sum_{k=1}^{d/2} 
        \left[ \sqrt{\frac{2}{d}}\cos\left( 2\pi j_k \sqrt{\varepsilon}\right) + 3\sqrt{\frac{2}{d}}\right]^2 + \left[\sqrt{\frac{2}{d}}\sin\left(2 \pi j_k \sqrt{\varepsilon}\right)\right]^2 \\
        =\ & \sum_{k=1}^{d/2} \frac{20}{d} + \frac{12}{d}\cos\left(2 \pi j_k \sqrt{\varepsilon}\right) \\
        =\ & \left(\frac{20}{d} + \frac{12}{d}\cos\left(2\pi \sqrt{\varepsilon}\right)\right) + \sum_{k=2}^{d/2} \left(\frac{20}{d} + \frac{12}{d} \cos\left(2 \pi j_k \sqrt{\varepsilon}\right)\right) \\
        \leq\ & \left(\frac{20}{d} + \frac{12}{d}\cos\left(2\pi \sqrt{\varepsilon}\right)\right) + \sum_{k=2}^{d/2} \left(\frac{20}{d} + \frac{12}{d}\right) && \text{as $\cos(\cdot) \leq 1$}\\
        =\ & \left(\frac{20}{d} + \frac{12}{d}\cos\left(2\pi \sqrt{\varepsilon}\right)\right) + \left(16 - \frac{32}{d}\right) \\
        \le\ & \left(\frac{20}{d} + \frac{12}{d} - \frac{196\varepsilon}{d}\right) + \left(16 - \frac{32}{d}\right) && \text{by~\cite{BP22}}\\
        =\ & 16 - \frac{192 \varepsilon}{d}. 
    \end{align*}
    
    This yields that the distance between $\bmpj$ and $\bmq$ is $2 - \Omega(\sfrac{\varepsilon}{d})$, giving a lower bound of $\left(\sfrac{1}{(\varepsilon d)}\right)^{\Omega(d)}$. In the regime where $d < \left(\sfrac{1}{\varepsilon}\right)^{0.999}$, this is $\left(\sfrac{1}{\varepsilon}\right)^{\Omega(d)}$, which is our desired result. 
\end{proof}

\subsection{Lower Bounds for the Linear SVM Problem}\label{svmlowersubsection}
In this section, we provide the following lower bounds for the space complexity of the $1$-pass streaming linear SVM problem via the communication complexity of the $1$-round linear SVM problem.

\ccsvmepsrestatable*

\begin{proof}
    To prove this, we prove a lower bound on any $1$-round communication protocol which yields a $(1 + \eps)$-approximation for the linear SVM problem.
    
    The proof follows the structure of the proof for Theorem~\ref{ccmebthm}, with a reduction from $\mathsf{Ind}_n$ to linear SVM. Alice transforms her bitstring into at most $n$ points labeled $-1$, and Bob transforms his index into a point labeled $+1$. We show then that if $b_i = 1$, then the separating hyperplane $(\bmu, b)$ of $P \cup \cbrak{\bmq}$ obtained by Alice and Bob has $\norm{\bmu} = 4$, while if $b_i = 0$ then $\norm{\bmu} \leq 4 - \Omega\paren{\sfrac{\eps}{d}}$.
    This therefore gives a $\paren{\sfrac{1}{\eps}}^{\Omega(d)}$ lower bound on the communication complexity of any $1$-round communication protocol for a $(1 + \eps)$-approximation, which gives the same lower bound on the space complexity of any $1$-pass streaming algorithm for the $(1 + \eps)$-approximation linear SVM problem. The full proof follows hereafter.

    Again, without loss of generality, we assume $d$ is even. If it is not, we can simply disregard one of the dimensions. 
        Let $\varepsilon = \left(\frac{1}{n}\right)^{\sfrac{4}{d}}$. We define Alice's $n$ points, which are all labelled $-1$
        \[
            \paren{\bmp^{(1)}, -1}, \paren{\bmp^{(2)}, -1}, \hdots, \paren{\bmp^{(n)}, -1} \in \bbR^d \times \cbrak{-1, +1}
        \]
        where each point $\bmpj$ is given by a sequence $\{j_k\}_{k=1}^{\sfrac{d}{2}}$ in $\{1, 2, \dots, \frac{1}{\sqrt{\varepsilon}}\}$, so that the $\ell$-th coordinate of $\bmpj$ is given by
        \[ \bmpj_\ell = \begin{cases}
            \sqrt{\frac{2}{d}}\cos\left(2 \pi j_{\frac{\ell+1}{2}} \sqrt{\varepsilon}\right) & \text{ if } \ell \text{ is odd,} \\
            \sqrt{\frac{2}{d}}\sin\left(2 \pi j_{\frac{\ell}{2}} \sqrt{\varepsilon}\right) & \text{ if } \ell \text{ is even.} \\
        \end{cases}  \]
    
    Alice and bob construct an input to the Linear SVM problem as follows: Alice adds to the input the labeled point $(\bmpj, -1)$ for each $j \in [n]$ such that $b_j = 1$, and Bob adds to the input the labeled point $(2 \bmpi, +1)$. 
    Call this instance of Linear SVM $\mcI$, and let $\paren{\bmu^*, b^*}$ be the optimal solution to this instance. 
    We make use of the following two lemmas, which will be proven later. 
    
    \begin{lemma} \label{epsLemmaOne}
    If $b_i = 1$, then $\norm{\bmu^*} = 4$.
    \end{lemma}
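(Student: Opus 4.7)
}
The plan is to show that, when $b_i=1$, the SVM optimum is pinned down by the two collinear points $(\bmpi,-1)$ and $(2\bmpi,+1)$ that Alice and Bob both contribute, and that the remaining constraints are all automatically slack. This will yield $\|\bmu^*\|^2=4$ (which is the quantity the lemma refers to).

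First, I would record a preliminary fact that every Alice point lies on the unit sphere: using the construction of $\bmpj$,
\[
\|\bmpj\|^2 \;=\; \sum_{k=1}^{d/2}\tfrac{2}{d}\bigl(\cos^2(2\pi j_k\sqrt{\varepsilon}) + \sin^2(2\pi j_k\sqrt{\varepsilon})\bigr) \;=\; 1.
\]
This will be used twice: once to invoke Cauchy--Schwarz for the lower bound, and once to discharge the ``other'' Alice constraints in the upper bound.

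Next, for the lower bound, I would only use the two constraints corresponding to the collinear pair. The $-1$-labelled point $\bmpi$ enforces $b-\bmu^T\bmpi \geq 1$, and the $+1$-labelled point $2\bmpi$ enforces $2\bmu^T\bmpi - b \geq 1$. Summing, any feasible $\bmu$ must satisfy $\bmu^T\bmpi \geq 2$; Cauchy--Schwarz with $\|\bmpi\|=1$ then gives $\|\bmu\|\geq 2$, so $\|\bmu^*\|^2 \geq 4$.

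For the matching upper bound, I would exhibit the explicit feasible solution $\bmu^\circ := 2\bmpi$, $b^\circ := 3$. The two constraints above are saturated by construction. For every other $j\neq i$ with $b_j=1$, the $-1$-labelled constraint reads $3 - 2\,\bmpi{}^T\bmpj \geq 1$, i.e.\ $\bmpi{}^T\bmpj \leq 1$, which holds by Cauchy--Schwarz since both vectors are unit. Hence $\|\bmu^*\|^2 \leq \|\bmu^\circ\|^2 = 4$, and we conclude $\|\bmu^*\|^2=4$. The argument is essentially a bookkeeping exercise; the only mild subtlety, and the step I would double-check carefully, is the last verification that no Alice-only constraint $(\bmpj,-1)$ can be tighter than the Bob constraint when $\bmu^\circ$ points along $\bmpi$.
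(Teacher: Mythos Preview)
Your approach is essentially the paper's: exhibit a feasible hyperplane along the direction $\bmpi$ and match it with a lower bound forced by the two collinear constraints $(\bmpi,-1)$ and $(2\bmpi,+1)$. Your lower-bound step (sum the two constraints, then Cauchy--Schwarz) is a coordinate-free restatement of the paper's margin/closest-pair observation $\|2\bmpi-\bmpi\|=\frac{2}{\|\bmu^*\|}$, and your feasibility check $\bmpi^{\,T}\bmpj\le 1$ for the remaining Alice constraints is exactly the paper's $\cos(\cdot)\le 1$ computation.

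One point to flag: you prove $\|\bmu^*\|^2=4$, i.e.\ $\|\bmu^*\|=2$, whereas the lemma as stated reads $\|\bmu^*\|=4$. Your value is the correct one for the construction as written, since $\|\bmpj\|=1$ by your preliminary computation; the paper's own proof takes $\bmpi=\sqrt{\tfrac{1}{2d}}\,\bmv$ (so $\|\bmpi\|=\tfrac12$), which is inconsistent with the coordinate definition of $\bmpj$. The constant is immaterial to the reduction---only the multiplicative gap with Lemma~\ref{epsLemmaTwo} matters---but you should reconcile the lemma statement with your computation if you write this up.
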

    
    \begin{lemma} \label{epsLemmaTwo}
    If $b_i = 0$, then $\norm{\bmu^*} \le 4 - \Omega(\sfrac{\varepsilon}{d})$. 
    \end{lemma}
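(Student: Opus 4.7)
The plan is to establish the bound $\norm{\bmu^*} \leq 4 - \Omega(\sfrac{\varepsilon}{d})$ by exhibiting an explicit feasible pair $(\bmu, b)$ with $\norm{\bmu} \leq 4 - \Omega(\sfrac{\varepsilon}{d})$; since $\bmu^*$ is optimal for the minimization in Problem~\ref{originalSVM}, any such witness suffices. Because Bob's point $2\bmpi$ lies on the positive ray through $\bmpi$ and all of Alice's surviving negative-labeled points cluster at small angular distance from $\bmpi$, the natural ansatz is $\bmu = c \bmpi$ for some scalar $c > 0$, in direct parallel with the proof of Lemma~\ref{epsLemmaOne}. This reduces the task to bounding $c$, since $\norm{\bmpi} = 1$.

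Under this ansatz the SVM constraints become scalar. Bob's point contributes $2c - b \geq 1$, while each of Alice's points $\bmpj$ with $b_j = 1$ and $j \neq i$ contributes $b - c \ip{\bmpj}{\bmpi} \geq 1$. Adding these eliminates $b$ and yields $c \paren{2 - \alpha_j} \geq 2$, where $\alpha_j := \ip{\bmpj}{\bmpi}$. Letting $\alpha^* := \max_j \alpha_j$ over the relevant $j$, it is therefore enough to choose $c = 2 / (2 - \alpha^*)$ together with any $b$ that makes both inequalities tight (which exists since the two constraints on $b$ coincide at this critical $c$). This gives a feasible pair with $\norm{\bmu}$-value equal to $c^2$.

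The final step is bounding $\alpha^*$, and this is where the MEB construction is reused. Expanding via the paired-coordinate formula for the $\bmpj$ from the proof of Theorem~\ref{ccmebthm} gives $\alpha_j = \paren{2/d} \sum_{k=1}^{d/2} \cos\paren{2\pi j_k \sqrt{\varepsilon}}$. Since $b_i = 0$, every $j \neq i$ with $b_j = 1$ must have at least one coordinate $j_k \neq 1/\sqrt{\varepsilon}$; applying the inequality $\cos(\theta) < 1 - 4\theta^2/\pi^2$ from~\cite{BP22} to that single shifted coordinate and bounding the remaining $d/2 - 1$ cosines above by $1$ yields $\alpha^* \leq 1 - \Omega(\sfrac{\varepsilon}{d})$, exactly as in the distance computation for Theorem~\ref{ccmebthm}. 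Writing $\delta = \Omega(\sfrac{\varepsilon}{d})$, this gives $c \leq 2/(1+\delta)$, hence $\norm{\bmu} \leq c^2 \leq 4/(1+\delta)^2 \leq 4 - \Omega(\sfrac{\varepsilon}{d})$, completing the upper bound.

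I expect no deep obstacle: the lemma asks only for an upper bound on the optimum, so there is no need to rule out better hyperplanes in other directions, and the essential inequality on $\alpha^*$ is already implicit in the MEB proof. The only minor care points are (i) checking that the ansatz $\bmu = c\bmpi$ genuinely lies in $\bbR^d$ and makes every constraint depend only on $\ip{\bmu}{\bmpj}$, which is automatic, and (ii) confirming that when $b_i = 0$ at least one $j_k$ of every participating point differs from $1/\sqrt{\varepsilon}$, which is immediate from the encoding.
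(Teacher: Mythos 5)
Your proposal is essentially the paper's own argument: the paper also restricts to hyperplanes with normal along $\bmv \parallel \bmpi$, reduces feasibility to a one-dimensional condition via orthogonal projection (which is exactly your scalar reduction through $\ip{\bmpj}{\bmpi}$), and bounds the worst-case correlation using the same $\cos\theta < 1 - 4\theta^2/\pi^2$ inequality, so the structure and the key estimate $\alpha^* \leq 1 - \Omega(\sfrac{\varepsilon}{d})$ are identical. The only slip is your final bookkeeping: the witness $\bmu = c\bmpi$ has $\norm{\bmu} = c\,\norm{\bmpi}$, not $c^2$, so whether the constant comes out as $2 - \Omega(\sfrac{\varepsilon}{d})$ or the stated $4 - \Omega(\sfrac{\varepsilon}{d})$ depends on the normalization of $\bmpi$ (the paper itself wavers between $\norm{\bmpi}=1$ in the construction and $\norm{\bmpi}=\tfrac12$ in the proofs); this does not affect the asymptotic conclusion but should be stated consistently.
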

    
    It then follows from the above lemmas that Alice and Bob can solve the indexing problem by obtaining a $(1 + \Theta(\sfrac{\varepsilon}{d}))$-approximation for $\mcI$. 
    Indexing has a lower bound of $\Omega(n)$ bits of communication, so we have that in general, a $(1 + \varepsilon)$-approximation of linear SVM must have a lower bound of $\left(\sfrac{1}{(\varepsilon d)}\right)^{\Omega(d)}$ bits of communication. In the regime where $d < \paren{\sfrac{1}{\varepsilon}}^{0.999}$, this is a lower bound of $\paren{\sfrac{1}{\varepsilon}}^{\Omega(d)}$.
    \end{proof}
    
    \begin{proof}[Proof of Lemma \ref{epsLemmaOne}] 
    Suppose that $b_i = 1$. 
    Let $\bmv$ be the vector given by
    \[ \bmv = \begin{bmatrix} 1 & 0 & 1 & 0 & \cdots &1 & 0 \end{bmatrix}^T. \]
    Without loss of generality, assume that $\bmpi$ is the point characterized by
    \[
        j_1 = j_2 = \cdots = j_{d/2} = 0,
    \]
    that is, $\bmpi = \sqrt{\frac{1}{2d}}\bmv$. 
    This can always be achieved via a rotation of the working space. 
    Consider the hyperplane given by $\sqrt{\frac{32}{d}}\bmv^T\bmx - 3 = 0$. 
    For any point $\bmpj$ which was potentially inserted into $\mcI$ with label $-1$ by Alice, we have
    \begin{align*}
        \sqrt{\frac{32}{d}}\bmv^T\bmpj - b &= \sqrt{\frac{16}{d^2}} \sum_{\ell=1}^{d/2} \cos(2\pi j_\ell \sqrt{\varepsilon}) - 3\\
        &\leq \frac{d}{2}\sqrt{\frac{16}{d^2}} - 3\\
        &= 1.
    \end{align*}
    Similarly, for the point $2 \bmpi$ inserted into $\mcI$ with label $+1$ by Bob, we have
    \begin{align*}
        2\sqrt{\frac{32}{d}}\bmv^T\bmpi - 3 &= d\sqrt{\frac{16}{d^2}} - 3\\
        &= 1.
    \end{align*}
    Thus, $(\sqrt{\frac{32}{d}}\bmv, 3)$ is a feasible solution to $\mcI$.
    To see that this is optimal, observe
    \[
        \|2\bmpi - \bmpi\| = \|\bmpi\|
        = \frac{1}{2}
        = \frac{2}{\norm{\sqrt{\frac{32}{d}}\bmv}}.
    \]
    Thus, we must have that $\|\bmu^*\|_2 = 4$.
    \end{proof} 
    
    \begin{proof}[Proof of Lemma \ref{epsLemmaTwo}]
    Suppose that $b_i = 0$. Let $\bmv$ be as defined in the proof of Lemma \ref{epsLemmaOne}. 
    For each $j \in [n]$, let $\bmqj$ be the orthogonal projection of $\bmpj$ onto the line spanned by $\bmv$. 
    Let $\mcJ$ be the instance of linear SVM consisting of the pairs $(\bmqj, -1)$ for all $j \in [n]$ such that $b_j = 1$ and the pair $(2\bmpi, +1)$. 
    It follows from orthogonality that if the hyperplane $\alpha \bmv^T \bmx - \beta = 0$ is a feasible solution to $\mcJ$, then it must also be a feasible solution to $\mcI$. 
    Using this fact, we proceed by showing there exists a hyperplane $\alpha \bmv^T \bmx - \beta = 0$ which is a feasible solution to $\mcJ$ and is such that $\norm{\alpha \bmv}_2 \le 4 - \Omega(\frac{\varepsilon}{d})$.
    First, we observe that for all $j \in [n]$
    \begin{align*}
        \bmqj &= \text{proj}_{\bmv}(\bmpj)\\
        &= \frac{\ip{\bmpj}{\bmv}}{\ip{\bmv}{\bmv}}\bmv\\
        &= \paren{\sum_{\ell=1}^{d/2} \frac{2}{d}\sqrt{\frac{1}{2d}}\cos(2\pi j_\ell \sqrt{\varepsilon})}\bmv.
    \end{align*}
    With this, we obtain that for each $j$ such that $b_j = 1$, we have
    \begin{align*}
        \norm{2\bmpi - \bmqj} =\ & \norm{\bmv}\abs{\sqrt{\frac{2}{d}} - \sum_{\ell=1}^{d/2} \frac{2}{d}\sqrt{\frac{1}{2d}}\cos(2\pi j_\ell \sqrt{\varepsilon})}  \\
        =\ & \sqrt{\frac{d}{2}}\abs{\sqrt{\frac{2}{d}} - \sum_{\ell=1}^{d/2} \frac{2}{d}\sqrt{\frac{1}{2d}}\cos(2\pi j_\ell \sqrt{\varepsilon})} \\
        =\ & \abs{1 - \sum_{\ell}^{d/2} \frac{1}{d}\cos(2 \pi j_\ell \sqrt{\varepsilon})} \\
        \ge\ & \abs{1 - \paren{\frac{1}{d}\paren{\frac{d}{2}-1} + \frac{1}{d}\cos(2\pi\sqrt{\varepsilon})}} \\
        =\ & \abs{\frac{1}{2} + \frac{1}{d}\paren{1 - \cos(2 \pi \sqrt{\varepsilon})}} \\
        \ge \ & \abs{\frac{1}{2} + \frac{1}{d}\paren{1 - \paren{1 - 16\varepsilon}}} \\
        =\ & \frac{1}{2} + \frac{16\varepsilon}{d}. 
    \end{align*} 
    Since all points in $\mcJ$ are in the line spanned by $\bmv$, and the points are separable, there must exist a feasible hyperplane $\alpha \bmv^T\bmx - \beta = 0$ such that
    \[ \frac{2}{\norm{\alpha \bmv}} = \min\left\{\norm{2\bmpi - \bmqj} : j \in [n] \text{ and } b_j = 1\right\} \ge \frac{1}{2} + \frac{16\varepsilon}{d}. \]
    Fnally, this gives that
    \[ \norm{\alpha \bmv} \le \frac{4d}{d + 32 \varepsilon} \le 4 - \frac{8\varepsilon}{d} = 4 - \Omega\paren{\frac{\varepsilon}{d}}. \]
    giving us the desired result.
\end{proof}

\ccsvmgammarestatable*
\begin{proof}
    To prove this, we prove a lower bound on any $1$-round communication protocol which which determines if a set of binary labeled points is $\gamma$-separable.
    
    The proof is similar to that of Theorem~\ref{ccsvmepsthm}, with a reduction from $\mathsf{Ind}_n$ to determining $\gamma$-separability. Alice transforms her bitstring exctly as in the proof of Theorem~\ref{ccsvmepsthm}, and Bob chooses a set of $d$ points, labeled $+1$, which lie in the unique hyperplane that is orthogonal to the line spanned by the point $\bmpi$, which is the point Alice would transform $b_i$ in the case where $b_i = 1$, and also contains $\bmpi$, such that $\bmpi$ lies in the segment between some pair of points chosen by Bob.

    Since $\bmpi$ can be written as a convex combination of two of Bob's points, it is clear that the points chosen by Alice and Bob are inseparable if $b_i = 1$, as then Alice will include $\bmpi$ in $P$.

    If however $b_i = 0$, then there will be a separating hyperplane parallel to Bob's set of points, and midway between that and the closest point of Alice. This hyperplane induces a margin of size $\Omega\paren{\sfrac{16\gamma}{d}}$, which gives a lower bound on the communication complexity of determining $\gamma$-separability of $\paren{\sfrac{1}{\gamma}}^{\Omega(d)}$. 
\end{proof}

\section{Proofs of Property \ref{P3} for Various Applications}

\subsection{MEB Problem}\label{meblemmasappendix}

\containsrestatable*

\begin{proof}
    Say point $\bmp_i$ was snapped to $\norm{\bmp_i} \bme_k$ for some point $\bme_k$ on the metric \epsnet{} and rounded up to $\bmq_j$. We know $\norm{\bmp_i} \bme_k$ is between $(1 + \eps)^l \bme_k$ and $(1 + \eps)^{l+1} \bme_k$ for some $l$, and so \linebreak${\norm{\bmp_i} \in \brak{(1 + \eps)^l, (1 + \eps)^{l+1}}}$.

    We first bound $d(\bmc, \bmp_i)$.
    \begin{align*}
        d(\bmc, \bmp_i) &\leq d(\bmc, \bmq_j) + d(\bmp_i, \bmq_j) && \text{by the triangle inequality}\\
        &\leq r + d(\bmp_i, \bmq_j) && \text{as $(\bmc,r)$ is the MEB of $Q$}\\
        &\leq r + d(\bmp_i, \norm{\bmp_i} \bme_k) + d(\norm{\bmp_i} \bme_k, \bmq_j) && \text{by the triangle inequality}\\
        &\leq r + \eps \norm{\bmp_i} + d(\norm{\bmp_i} \bme_k, \bmq_j) && \text{by the metric \epsnet{}}\\
        &\leq r + \eps \norm{\bmp_i} + (1 + \eps)^{l + 1} - (1 + \eps)^l\\
        &= r + \eps \norm{\bmp_i} + \eps (1 + \eps)^l\\
        &\leq r + \eps \norm{\bmp_i} + \eps \norm{\bmp_i}\\
        &= r + 2 \eps \norm{\bmp_i}.
    \end{align*}
    Now, we know $\norm{\bmq_j}$ is $(1 + \eps)^{l + 1}$ as it was snapped down. Since $\norm{\bmp_i} \in \brak{(1 + \eps)^l, (1 + \eps)^{l+1}}$, we can say that $\norm{\bmp_i} \leq \norm{\bmq_j}$. Plugging this in, we get
    \[
        d(\bmc, \bmp_i) \leq r + 2 \eps \norm{\bmq_j}.
    \]
    Now, since $\norm{\bmq_j}$ is $d(\bmq_j, \bmp_1)$ and $\bmp_1$ was unaffected by the snapping and rounding as the center of the \epsnet{}, both $\bmq_j$ and $\bmp_1$ are in $(\bmc, r)$. This means their distance is bounded by $2r$, meaning $\norm{\bmq_j} \leq 2r$. Plugging this in, we get
    \begin{align*}
        d(\bmc, \bmp_i) &\leq r + 4 \eps r\\
        &\leq (1 + 4 \eps) r. \qedhere
    \end{align*}
\end{proof}

\nottoobigrestatable*
\begin{proof}
    First, we know from the proof of Lemma~\ref{containslemma} that $d(\bmp_i, \bmq_j) \leq 2 \eps \norm{\bmp_i}$. As $\norm{\bmp_i}$ is $d(\bmp_i, \bmp_1)$, we can use the fact the fact that the MEB $(\bmc^*, r^*)$ contains all points $\bmp_i$, to bound this distance by $2 r^*$. Thus, $\norm{\bmp_i} \leq 2 r^*$. So we can say that  $d(\bmp_i, \bmq_j) \leq 4 \eps r^*$.

    Thus, since $(\bmc^*, r^*)$ encloses all the original points $\bmp_i \in P$, the ball $(\bmc^*, r^* + 4 \eps r^*)$ encloses all the snapped and rounded points $\bmq_j \in Q$.
    
    Thus, $r \leq r^* + 4 \eps r^*$ as $(\bmc, r)$ is the MEB for $Q$, and so $r \leq (1 + 4 \eps) r^*$.
\end{proof}

\subsection{Linear SVM Problem}\label{svmlemmasappendix}

\SVMcontainsrestatable*

\begin{proof}
    First, for any $i$, from our problem description, we know that $y_i \paren{\bmu^T \bmz_i - b} \geq 1$. Working with that, we get
    \begin{align*}
        y_i \paren{\bmu^T \bmz_i - b} &\geq 1\\
        \implies y_i \paren{\bmu^T \paren{\bmx_i + \bme_i} - b} &\geq 1\\
        \implies y_i \paren{\bmu^T \bmx_i - b} &\geq 1 - y_i \bmu^T \bme_i.
    \end{align*}
    Now, we look at $y_i \bmu^T \bme_i$. Since $y_i \in \cbrak{-1, +1}$, we can say that $y_i \bmu^T \bme_i \leq \pipe{\bmu^T \bme_i}$, which itself is bounded by $\norm{\bmu} \norm{\bme_i}$ by the Cauchy-Schwarz inequality.
    
    By construction, we know that $\norm{\bme_i} \leq \frac{\eps \gamma}{2}$.
    
    Moreover, since the original points are $\gamma$-separable, and snapping to the metric net moves each point at most $\frac{\eps \gamma}{2}$, we know that the snapped points are $(\gamma - \eps \gamma)$-separable, and so $\frac{\gamma}{2}$-separable as $\eps \leq \frac{1}{2}$. Thus, using the definitions of linear SVM
    \begin{align*}
        \frac{2}{\norm{\bmu}} &\geq \frac{\gamma}{2}\\
        \implies \norm{\bmu} &\leq \frac{4}{\gamma}.
    \end{align*}
    So, plugging these two results in we have that $y_i \bmu^T \bme_i \leq 2\eps$. Thus, we have that
    \begin{align*}
        y_i \paren{\bmu^T \bmx_i - b} &\geq 1 - 2\eps\\
        \implies y_i \paren{\frac{1}{1 - 2\eps} \paren{\bmu^T \bmx_i - b}} &\geq 1\\
        \implies y_i \paren{\paren{1 + 2\eps} \paren{\bmu^T \bmx_i - b}} &\geq 1 && \text{for small $\eps$}
    \end{align*}
    giving us the desired result.
\end{proof}

\SVMnottoobigrestatable*

\begin{proof}
    For small $\eps$, $\paren{1 + O(\eps)} \norm{\bmu^*}^2 \approx \norm{\paren{1 + O(\eps)} \bmu^*}^2$, so it suffices to show that
    \[
        \norm{\bmu}^2 \leq \norm{\paren{1 + O(\eps)} \bmu^*}^2.
    \]
    We know that for any $i$, $y_i \paren{\bmu^{*T} \bmx_i - b^*} \geq 1$. So, by the same argument as Lemma~\ref{SVMcontainslemma}
    \[
        y_i \paren{\paren{1 + 2\eps} \bmu^{*T} \bmz_i - \paren{1 + 2\eps}b} \geq 1.
    \]
    
    Thus, as $\paren{\paren{1 + 2\eps} \bmu^*, \paren{1 + 2\eps} b^*}$ is a solution to Problem~\ref{snappedSVM}, and $(\bmu,b)$ is the minimal solution to the optimization problem, $\norm{\bmu}^2 \leq \norm{\paren{1 + O(\eps)} \bmu^*}^2$ as desired.
\end{proof}

\subsection{Bounded LP Problems}\label{lplemmasappendix}

\LPcontainsrestatable*

\begin{proof}
    First, for any $i$, from our problem description, we know that $\bma_i'^T \bmx \leq b_i'$. Working with that, we get
    \begin{align*}
        \bma_i'^T \bmx &\leq b_i'\\
        \implies \paren{\bma_i + \bme_i}^T \bmx &\leq b_i + f_i\\
        \implies \bma_i^T \bmx &\leq b_i - \bme_i^T \bmx + f_i.
    \end{align*}
    Now, we look at $-\bme_i^T \bmx$. It is bounded by $\abs{\bme_i^T \bmx}$, which itself is bounded by $\norm{\bme_i} \norm{\bmx}$ by the Cauchy-Schwarz inequality. By construction, we know that $\norm{\bme_i} \leq O(\eps)$, and as given by our class of problem, $\norm{\bmx} \in O(1)$. So, plugging these two results in we have that $-\bme_i^T \bmx \leq O(\eps)$. Further, we have that $f_i$ is bounded by $O(\eps)$ by construction. Thus, together we have that
    \[
        \bma_i^T \bmx \leq b_i + O(\eps)
    \]
    as desired.
\end{proof}

\LPnottoobigrestatable*

\begin{proof}
    We know that for any $i$, $\bma_i^T \bmx^* \leq b_i$. So, by the same argument as Lemma~\ref{LPcontainslemma}
    \[
        \bma_i'^T \bmx^* \leq b_i' + O(\eps).
    \]
    Thus, $\bmx^*$ is a solution to the additive $\eps$-approximation of LP~\eqref{snappedLP}. So, if we denote the optimal solution to it as $\bmx'$, we have that $\bmc^T\bmx' \geq \bmc^T\bmx^*$.
    It then only remains to show that the optimal solution to the additive $\eps$-approximation of LP~\eqref{snappedLP} $\bmx'$ is not too much larger than the optimal solution to the exact LP~\eqref{snappedLP}. This follows because both $\norm{\bmc}$ and $\norm{\bmx}$ are bounded in $O(1)$. Allowing the constraints to be approximately satisfied by $O(\eps)$ can make the optimal point increase in norm by at most a $(1 + O(\eps))$ multiplicative factor, and thus an additive $O(\eps)$ factor for bounded norm. Thus, we have that $\bmc^T\bmx' \leq \bmc^T \bmx + O(\eps)$. Plugging this in, we get
    \[
        \bmc^T\bmx \geq \bmc^T \bmx^* - O(\eps)
    \]
    as desired.
\end{proof}

\subsection{Bounded SDP Problems}\label{sdplemmasappendix}

\SDPcontainsrestatable*

\begin{proof}
    First, for any $i$, from our problem description, we know that $\Frobinner{A'^{(i}}{X} \leq b'^{(i)}$. Working with that, we get
    \begin{align*}
        \Frobinner{A'^{(i}}{X} &\leq b'^{(i)}\\
        \implies \sum_{j,k} A'^{(i)}_{jk} X_{jk} &\leq b'^{(i)}\\
        \implies \sum_{j,k} \paren{\Ai_{jk} + \ei_{jk}} X_{jk} &\leq \bi + f^{(i)}\\
        \implies \sum_{j,k} \Ai_{jk} X_{jk} &\leq \bi + f^{(i)} + \sum_{j,k} -\ei_{jk} X_{jk}.
    \end{align*}
    First, we look at $f^{(i)}$. It is bounded by $\abs{f^{(i)}}$, which by construction is bounded by $\eps$. Secondly, we look at $\sum_{j,k} -\ei_{jk} X_{jk}$. That is bounded by $\sum_{j,k} \abs{\ei_{jk} X_{jk}} = \sum_{j,k} \abs{\ei_{jk}} \abs{X_{jk}}$. We can bound this quantity in two ways. First, by construction, $\abs{\ei_{jk}} \leq \frac{\eps}{\min(d,S)}$. Thus, we get that
    \begin{align*}
        \sum_{j,k} \abs{\ei_{jk}} \abs{X_{jk}} &\leq \sum_{j,k} \frac{\eps}{\minds} \abs{X_{jk}}\\
        &= \frac{\eps}{\minds} \sum_{j,k} \abs{X_{jk}}\\
        &= \frac{\eps}{\minds} \norm{X_{jk}} && \text{where $\norm{\cdot}$ is the Entry-wise $1$-norm}\\
        &\leq \frac{\eps}{\minds} d \norm{X_{jk}}_{*} && \text{where $\norm{\cdot}_{*}$ is the Schatten $1$-norm}\\
        &= \frac{d \eps}{\minds}.
    \end{align*}
    Secondly, for all $j,k$, we have that $\abs{X_{jk}} \leq \norm{X}_2 \leq \norm{X}_F \leq \norm{X}_* = 1$. We can thus plug this in to get
    \begin{align*}
        \sum_{j,k} \abs{\ei_{jk}} \abs{X_{jk}} &\leq \sum_{j,k} \abs{\ei_{jk}}\\
        &\leq S \frac{\eps}{\minds} && \text{since at most $S$ $\ei_{jk}$'s are nonzero.}
    \end{align*}
    Thus, putting these together, we have that $\sum_{j,k} \abs{\ei_{jk}} \abs{X_{jk}} \leq \eps$.
    So, plugging these in, we get the result
    \[
        \Frobinner{\Ai}{X} \leq \bi + 2\eps.
    \]
    Now, we can explore $\Frobinner{\Ai}{X + \frac{3 \eps}{d} I}$. We can see that
    \begin{align*}
        \Frobinner{\Ai}{X + \frac{3 \eps}{d} I} &= \sum_j \Ai_{jj} \paren{X_jj + \frac{3 \eps}{d}} + \sum_{j \neq k} \Ai_{jk} X_{jk}\\
        &= \Frobinner{\Ai}{X} + \frac{3 \eps}{d} \sum_j \Ai_{jj}\\
        &\leq \Frobinner{\Ai}{X} + \frac{3 \eps}{d} d \norm{\Ai}_2\\
        &\leq \bi + 2\eps + 3\eps\\
        &= \bi + 5\eps
    \end{align*}
    as desired.
\end{proof}

\SDPpsdrestatable*

\begin{proof}
    First, from our lattice construction, we know that $\bmz^T X \bmz \geq 0$ for $\bmz = \bmy + \bme$, where $\norm{\bme} \leq \frac{\eps}{d}$. If $\bme = \Vec{0}$ then we directly have the desired bounds. Assuming $\bme \neq \Vec{0}$, we get
    \begin{align*}
        \bmz^T X \bmz \geq 0\\
        \implies (\bmy + \bme)^T X (\bmy + \bme) \geq 0\\
        \implies \bmy^T X \bmy \geq - \bme^T X \bmy - \bmy^T X \bme - \bme^T X \bme.
    \end{align*}
    First, we look at $\bme^T X \bme$. Notice that this is bounded by $\abs{\bme^T X \bme}$, which is bounded by $\norm{\bme} \norm{X \bme}$ by Cauchy-Schwarz. Further, $\norm{X \bme}$ is bounded by $\norm{X}_2 \norm{\bme}$. We already saw in Lemma~\ref{SDPcontainslemma} that $\norm{X}_2$ is bounded by 1, and by construction $\norm{\bme} \leq \frac{\eps}{d}$. Thus $\bme^T X \bme \leq \paren{\frac{\eps}{d}}^2 \leq \frac{\eps}{d}$.

    Looking at $\bme^T X \bmy$ and $\bmy^T X \bme$, we can use similar logic to bound both by $\norm{\bme^T} \norm{\bmy} \norm{X}$. Since $\norm{\bmy} = 1$ by construction, these terms are then bounded by $\frac{\eps}{d}$.

    Plugging these in, we get the result
    \[
        \bmy^T X \bmy \geq -3\frac{\eps}{d}.
    \]
    Now, we can look at $\bmy^T (X + \frac{3 \eps}{d} I) \bmy$. We see that
    \begin{align*}
        \bmy^T (X + \frac{3 \eps}{d} I) \bmy &= \bmy^T X \bmy + \frac{3 \eps}{d} \bmy^T I \bmy\\
        &= \bmy^T X \bmy + \frac{3 \eps}{d} \norm{\bmy}^2\\
        &\geq -\frac{3 \eps}{d} + \frac{3 \eps}{d}\\
        &= 0
    \end{align*}
    as desired.
\end{proof}

\SDPnottoobigrestatable*

\begin{proof}
    We know that for any $i$, $\Frobinner{\Ai}{X^*} \leq \bi$. Further, $X^* \succeq 0$ So, by the same argument as Lemmas~\ref{SDPcontainslemma} and~\ref{SDPpsdlemma}
    \[
        \Frobinner{A'^{(i)}}{X^*} \leq b'^{(i)} + 2\eps \quad \text{and} \quad \bmz^T X^* \bmz \geq -\frac{3 \eps}{d}.
    \]
    Thus, $X^*$ is a solution to the additive $\eps$-approximation of the snapped LP. So, if we denote the optimal solution to it as $X'$, we have that $\Frobinner{C}{X'} \geq \Frobinner{C}{X^*}$.
    It then only remains to show that the optimal solution to the additive $\eps$-approximation of the snapped LP $X'$ is not too much larger than the optimal solution to the exact snapped LP. This follows because both $\norm{C}_F$ and $\norm{X}_F$ are bounded by $1$. the LP formulation of the SDP has bounded vectors $c$ and $x$. So, we can use the same analysis as done for the proof of~\ref{LPnottoobiglemma}. So, we have that $\Frobinner{C}{X'} \leq \Frobinner{C}{X} + O(\eps)$. Plugging this in, we get
    \[
        \Frobinner{C}{X} \geq \Frobinner{C}{X^*} - O(\eps).
    \]
    Finally, it only remains to show that
    \[
        \Frobinner{C}{X + \frac{3 \eps}{d}I} \geq \Frobinner{C}{X} - O(\eps),
    \]
    This also follows similarly to the analysis of Lemma~\ref{SDPcontainslemma}, such that
    \begin{align*}
         \Frobinner{C}{X + \frac{3 \eps}{d}I} &= \Frobinner{C}{X} + \frac{3 \eps}{d}\sum_j C_{jj}\\
         &\geq \Frobinner{C}{X} - \frac{3 \eps}{d} \abs{\sum_j C_{jj}}\\
         &\geq \Frobinner{C}{X} - \frac{3 \eps}{d} d \norm{C}_2\\
         &\geq \Frobinner{C}{X} - 3 \eps \norm{C}_F\\
         &\geq \Frobinner{C}{X} - 3 \eps
    \end{align*}
    as desired.
\end{proof}

\end{document}